\numberwithin{equation}{section}
\newtheorem{thm}{Theorem}[section]
\newtheorem{lem}[thm]{Lemma}
\newtheorem{prp}[thm]{Proposition}
\newtheorem{cor}[thm]{Corollary}
\theoremstyle{definition}
\newtheorem{rmk}[thm]{Remark}
\renewcommand*\env@matrix[1][*\c@MaxMatrixCols c]{%
 \hskip -\arraycolsep
 \let\@ifnextchar\new@ifnextchar
 \array{#1}}
\DeclareRobustCommand{\rvdots}{%
 \vbox{
 \baselineskip4\p@\lineskiplimit\z@
 \kern-\p@
 \vspace{1pt}\hbox{.}\hbox{.}\hbox{.}
 }}
\tikzset{ampersand replacement=\&}
 \newcommand{\oo}{\infty}
 \newcommand{\A}{\mathcal{A}}
 \newcommand{\B}{\mathcal{B}}
\renewcommand{\d}{\mathrm{d}}
 \newcommand{\C}{\mathcal{C}}
 \newcommand{\CC}{\mathbb{C}}
 \newcommand{\D}{\mathcal{D}}
 \newcommand{\DD}{\mathfrak{D}}
 \newcommand{\K}{\mathcal{K}}
 \newcommand{\M}{\mathcal{M}}
 \newcommand{\BL}{\mathrm{BL}}
 \newcommand{\MP}{\mathrm{MP}}
 \newcommand{\coker}{\operatorname{coker}}
 \newcommand{\im}{\operatorname{im}}
 \newcommand{\dalf}{{}^4\square}
 \newcommand{\del}{\partial}
 \newcommand{\gf}{{}^4g}
 \newcommand{\grf}{{}^4\nabla}
 \newcommand{\id}{\mathrm{id}}
 \newcommand{\ka}{\kappa}
 \newcommand{\la}{\lambda}
 \newcommand{\sqf}{{}^4\square}
 \newcommand{\eps}{\varepsilon}
 \newcommand{\Res}{\operatorname{Res}}
 \newcommand{\tr}{\mathrm{tr}}
\newcommand{\superimpose}[2]{%
 {\ooalign{$#1\@firstoftwo#2$\cr\hfil$#1\@secondoftwo#2$\hfil\cr}}}
\newsavebox{\sqzerbox}
\savebox{\sqzerbox}{$\mathpalette\superimpose{{\square}{\raisebox{.2ex}{\scalebox{.7}{0}}}}$}
\newcommand{\sqzer}{\usebox{\sqzerbox}}
\newsavebox{\sqonebox}
\savebox{\sqonebox}{$\mathpalette\superimpose{{\square}{\raisebox{.2ex}{\scalebox{.7}{1}}}}$}
\newcommand{\sqone}{\usebox{\sqonebox}}
\newsavebox{\sqtwobox}
\savebox{\sqtwobox}{$\mathpalette\superimpose{{\square}{\raisebox{.2ex}{\scalebox{.7}{2}}}}$}
\newcommand{\sqtwo}{\usebox{\sqtwobox}}
\newsavebox{\cronebox}
\savebox{\cronebox}{$\mathpalette\superimpose{{\bigcirc}{\raisebox{.1ex}{\scalebox{.7}{1}}}}$}
\newsavebox{\crtwobox}
\savebox{\crtwobox}{$\mathpalette\superimpose{{\bigcirc}{\raisebox{.1ex}{\scalebox{.7}{2}}}}$}
\begin{document}
\allowdisplaybreaks

\newcommand{\arXivNumber}{2004.09651}

\renewcommand{\PaperNumber}{011}

\FirstPageHeading

\ShortArticleName{Explicit Triangular Decoupling of the Separated Lichnerowicz Tensor Wave Equation}

\ArticleName{Explicit Triangular Decoupling of the Separated\\ Lichnerowicz Tensor Wave Equation on Schwarzschild\\ into Scalar Regge--Wheeler Equations}

\Author{Igor KHAVKINE~$^{\rm ab}$}

\AuthorNameForHeading{I.~Khavkine}

\Address{$^{\rm a)}$~Institute of Mathematics of the Czech Academy of Sciences,\\
\hphantom{$^{\rm a)}$}~\v{Z}itn{\'a} 25, 115 67 Praha 1, Czech Republic}

\Address{$^{\rm b)}$~Charles University in Prague, Faculty of Mathematics and Physics,\\
\hphantom{$^{\rm b)}$}~Sokolovsk\'a 83, 186 75 Praha 8, Czech Republic}
\EmailD{\href{mailto:khavkine@math.cas.cz}{khavkine@math.cas.cz}}

\ArticleDates{Received March 16, 2021, in final form January 22, 2022; Published online February 04, 2022}

\Abstract{We consider the vector and the Lichnerowicz wave equations on the Schwarz\-schild spacetime, which correspond to the Maxwell and linearized Einstein equations in harmonic gauges (or, respectively, in Lorenz and de~Donder gauges). After a complete separation of variables, the radial mode equations form complicated systems of coupled linear ODEs. We outline a precise abstract strategy to decouple these systems into sparse triangular form, where the diagonal blocks consist of spin-$s$ scalar Regge--Wheeler equations (for spins $s=0,1,2$). Building on the example of the vector wave equation, which we have treated previously, we complete a successful implementation of our strategy for the Lichnerowicz wave equation. Our results go a step further than previous more ad-hoc attempts in the literature by presenting a full and maximally simplified final triangular form. These results have important applications to the quantum field theory of and the classical stability analysis of electromagnetic and gravitational perturbations of the Schwarzschild black hole in harmonic gauges.}

\Keywords{Schwarzschild black hole; linearized gravity; harmonic gauge; Regge--Wheeler equation; rational ODE; computer algebra; rational solution; decoupling}

\Classification{35Q75; 34L99; 34L05; 68W30}

\section{Introduction}

It is textbook knowledge that a standard self-adjoint Sturm--Liouville
spectral problem $(p \phi')' - q\phi + \omega^2 w \phi = 0$, with
$p,q,w>0$, has ($i$) a real and positive $\omega^2$-spectrum (or just real
$\omega$-spectrum) on a natural function space, with ($ii$) the generalized
eigenfunctions providing a resolution of the identity. In some
applications (which we describe below) it is crucial to provably know
that properties ($i$) and ($ii$) hold (or do not hold) for some equations
that do not quite fit into this standard class. Unfortunately, in those
cases, the standard Sturm--Liouville theory is no longer applicable and
each case has to be studied individually. This work is dedicated to a
deep study of a specific ordinary differential equation (ODE) system
that naturally appears in the context of linearized gravity on the
spherically symmetric Schwarzschild black hole, which is non-standard in
the above sense: the coefficients $p$, $q$, $w$ are now matrices, either no
longer self-adjoint or no longer positive definite, and the spectral
parameter $\omega$ also appears linearly in the~$q$ coefficient. Despite
these complications and its superficially very unstructured form, this
system turns out to be highly special. Our main result is an explicit
transformation of this complicated ODE system into a much simpler
(sparse, upper triangular) form, with standard self-adjoint
Sturm--Liouville operators on the diagonal. From this simplified form, it
becomes obvious that properties~($i$) and~($ii$) can be proven to hold by
standard theory, while as a byproduct a number of interesting geometric
features of the equations of linearized gravity Schwarzschild (under the
harmonic gauge condition) are discovered, including an alternative
variational formulation, non-trivial generalized symmetries and the
existence of Debye potentials (all of these results are concisely
collected in Section~\ref{sec:appl}). Our methods are rather special to
this ODE system, but having synthesized previous ad-hoc results from the
literature into an abstract strategy based on ideas from homological
algebra, the geometry and algebra of differential equations, and the
theory of rational ODEs, they may also be applicable in other cases.
Last but not least, we know of no other tools that can produce any of
the same results for the given ODE system.

Since the above topics do not often appear together, let alone applied to
problems motivated by linearized gravity, we have aimed for the
presentation to be self-contained, though as concise as possible. That
balance and the intrinsic complexity of the ODE system under study bear
responsibility for the length of this work.

{\bf Physical motivation.}
The study of linear metric perturbations around a Schwarzschild black
hole was initiated in~\cite{regge-wheeler}. Since then, a rich
literature has developed on this topic, including extensions to other
linear fields and more general kinds of black holes~\cite{berti-qnm, chandrasekhar,
frolov-novikov}, with important applications, including the
modeling of gravitational wave forms~\cite{sasaki-lrr}, rigorous linear
stability of black holes~\cite{dr-lect, tataru-lindecay} (geared towards
the nonlinear stability problem), and the study of quantum effects
around black holes~\cite{frolov-novikov}.

Due to diffeomorphism invariance, the linearized Einstein equations
require gauge fixing to get well-posed initial value or inhomogeneous
problems. Different gauge choices have different merits. The
mathematical literature on linear metric perturbations of Schwarzschild
has been dominated by the question of classical stability of the
classical Cauchy problem. Rather decisive, positive results have already
been obtained in a gauge-invariant formalism~\cite{dotti-prl}, as well
as directly for the metric in a special double null-foliation
gauge~\cite{dhr-schw}, neither approach using mode decompositions. So
why bother studying a different gauge, using mode decomposition, as we
do below? The answer is that there is a wider class of interesting
questions that we can ask about black hole perturbations, where such
information is useful. For instance, applications in quantum field
theory (QFT) require an explicit construction of the linear Green
function, which is difficult to obtain other than by a mode
decomposition. Also, QFT favors specific choices of gauge, with mode
stability in those gauges being a necessary condition for the existence
of a~reasonable vacuum state, while stability in one gauge (favored by
the classical Cauchy problem), does not imply stability in a different
gauge (favored by QFT). See more precise remarks at the end of
Section~\ref{sec:appl}.

In general theoretical treatments of linear metric perturbations, a
common choice is the so-called \emph{harmonic}\footnote{For any background metric $\bar{g}$, the perturbed metric
 $g$ satisfies the non-linear \emph{harmonic} (or \emph{wave map})
 condition when $\bar{\nabla}_\mu (\epsilon_g g^{\mu\nu}) = 0$, where
 $\epsilon_g$ is the perturbed volume form, but $\bar{\nabla}$ is
 the background covariant derivative. This choice of gauge also reduces
 the full Einstein equations to wave-like form and linearizes to our
 choice of gauge. The coordinate based condition $\square_g x^\mu = 0$
 is only equivalent when the background metric is Minkowski~\cite{bicak-katz}.} %
gauge (also known as
\emph{de~Donder} gauge, \emph{wave $($map/coordinate$)$} gauge, or by
analogy with electrodynamics as \emph{Lorenz} gauge) $\nabla^\nu
\overline{p}_{\mu\nu} = 0$, with $\overline{p}_{\mu\nu} = p_{\mu\nu} -
\frac{1}{2} p^\la_\la g_{\mu\nu}$ the trace reversed metric
perturbation, preferred for its local regularity
properties~\cite{barack-ori-gauge} and in applications to quantum field
theory~\cite{bdm, bfr-qg, fewster}. In this gauge, the linearized
Einstein equations take the form of a tensor wave equation, the
\emph{Lichnerowicz wave equation} $\square p_{\mu\nu} - 2\,
R_{(\mu}{}^{\la\ka}{}_{\nu)} p_{\la\ka} = 0$, which is both covariant
and hyperbolic, with both properties of significant theoretical
importance. However, due to its
algebraic complexity, until recently, relatively little work has been
done in this gauge for black hole perturbations, compared to
Regge-Wheeler gauge and its variations.
Notable early uses of harmonic gauge include~\cite{barack-lousto,
berndtson, gpy}. Following~\cite{barack-lousto}, this gauge (there mostly
called \emph{Lorenz}) has found important applications in the gravitational
self-force literature~\cite{barack-pound}. More recently, the
work~\cite{hintz-vasy} has used harmonic gauge in the proof of global
non-linear stability of Kerr-de~Sitter black holes, while
in~\cite{johnson-thesis, johnson-wave}\footnote{Actually, \cite{johnson-thesis,johnson-wave} use a \emph{generalized} harmonic/wave
 gauge, $\nabla^\nu\overline{p}_{\mu\nu} = f_\mu[p]$. In the earlier
 preprint~\cite{johnson-thesis} $f_\mu[p]$ was chosen to be a local
 zeroth order expression, while in the later published
 work~\cite{johnson-wave} $f_\mu[p]$ was chosen to be non-local with
 respect to the orbits of spherical symmetry of Schwarzschild.} %
and \cite{hung-harmonic-odd, hung-harmonic-even} the vector field method
was used, directly in real-space, to study stability and decay of
perturbations of Schwarzschild, with~\cite{johnson-cqg} an addendum that
implicitly considers an upper triangular decoupling of the Lichnerowicz
wave equation under the transverse-traceless conditions, $\nabla^\nu
\overline{p}_{\mu\nu}$ and $p_\lambda^\lambda = 0$, without referring to
or comparing with the fully diagonal decoupling of the same system
by~\cite{berndtson}.

{\bf Mathematical problem.}
To date, despite the strong motivations listed above and the known
complete separability of the Lichnerowicz d'Alembertian, there still
does not exist an explicit mode-level construction of a harmonic gauge
Green function for metric perturbations on Schwarz\-schild. A necessary
step in such a construction would be a determination of the spectrum of
the separated radial mode equation and a proof of completeness of its
generalized eigenfunctions. In this work, we take a significant step in
that direction. Namely, we perform a highly non-trivial simplification
that allows this spectral problem to be studied by standard
Sturm--Liouville theory.

The main obstacle is that the separated radial mode equation on
Schwarzschild is a rather complicated system of ODEs. While it can be
put into matrix Sturm--Liouville form, it presents a highly non-standard
spectral problem. For one, it is naturally self-adjoint only with
respect to an indefinite functional inner product. So either it becomes
a self-adjoint spectral problem on a~\emph{Krein space} (analogous to a
Hilbert space, but with an indefinite inner product), or a
non-self-adjoint problem on a Hilbert space (where we artificially positivise
the natural inner product). Further, it presents a \emph{quadratic
eigenvalue problem} in the frequency $\omega$, rather than a regular
eigenvalue problem with respect to $\omega^2$. In either case, the
standard spectral theory of self-adjoint operators on Hilbert space
becomes totally inapplicable, leaving us without any obvious way to
prove (or disprove) the reality of the spectrum (absence of modes
exponentially unstable in time) or even to check the completeness of
(generalized) eigenfunctions. The indefiniteness of the functional inner
product is ultimately due to the Lorentzian signature of the metric,
meaning that this complication does not occur in the analogous problem
in Riemannian geometry~\cite{gpy}.

{\bf Methodology.}
In this work, building on the strategy outlined and successfully
implemented in~\cite{kh-vwtriang, kh-rwtriang} for the simpler example
of the vector wave equation $\square v_\mu = 0$, we prove that the
harmonic gauge radial mode equation can be decoupled into a triangular
system, where the diagonal blocks are the \emph{spin-$s$ Regge--Wheeler}
equations, with $s=0$, $1$ or $2$. These Regge--Wheeler equations are
then of standard scalar Sturm--Liouville form, with very well-understood
spectral properties~\cite{chandrasekhar, dss}. Such a decoupling
essentially reduces the non-standard spectral problem of the radial mode
equation to the standard and well-understood spectral problem of the
Regge--Wheeler equation. The details of such a spectral analysis are left
for to future work.

To our knowledge, prior to~\cite{kh-vwtriang}, which was our warm-up for
this work, such a triangular decoupling has never been explicitly
discussed in the literature on linearized gravity. Indirect hints of it
have previously appeared only in~\cite{berndtson} (Remark~\ref{rmk:full-triang}) and independently
in~\cite{rosa-dolan} (though only for the vector wave equation in the
latter). Very recently, \cite{hung-harmonic-odd, hung-harmonic-even} has
used some of the resulting formulas from~\cite{berndtson}.
Unfortunately, though pioneering, the original works~\cite{berndtson,
rosa-dolan} were based on rather ad-hoc, extensive, explicit
computations and the full details of how the original radial mode
equations transform into the decoupled form and back are not easy to
understand from these references. We have strived to distill their
overall strategy into conceptually clear terms\footnote{It is by no means a simple task to extract a guiding
 strategy from~\cite{berndtson}, as it only becomes apparent as the
 common pattern in the detailed and explicit calculations done for a
 sequence of examples of increasing complexity. Still, that work should
 be credited as (to our knowledge) the first to carry out this strategy
 explicitly for the Lichnerowicz wave equation, and also incidentally
 for the vector wave equation on Schwarzschild.} %
and to understand why certain key ad-hoc steps were successful, as
recorded and implemented in~\cite{kh-vwtriang, kh-rwtriang} for the
vector wave equation. In the end, our results also go a step further
than~\cite{berndtson, rosa-dolan}, by reducing the resulting triangular
form as much as possible and proving that no further reduction is
possible (in the context of rational ODEs).

{\bf Contents overview.}
In Section~\ref{sec:formal}, we start by reviewing a precise notion of
\emph{morphism} between differential equations, as well as of
\emph{equivalence up to homotopy} between morphisms or between equations
(or \emph{isomorphism}), where we synthesize some elementary notions from
homological algebra~\cite{weibel}, $D$-modules~\cite[Section~10.5]{seiler}
and the categorical approach to differential
equations~\cite[Section~VII.5]{vinogradov}. Essentially, a morphism is a
differential operator mapping solutions to solutions, but also equipped
with extra data preserving the structure of the equations.\footnote{Those familiar with the $SE = OT$ identity that is used
 prominently in Wald's \emph{adjoint method}~\cite{wald-adjoint} might
 be comforted to know that such an identity is in fact an example of a
 morphism. Another example is Chandrasekhar's transformation
 (Section~\ref{sec:rwz}). Many examples of similar transformations in
 black hole perturbation theory are reviewed in~\cite{gjk-darboux}.} %
Then, we review how a hierarchical separation a differential equations
into \emph{gauge modes}, \emph{gauge invariant modes} and
\emph{constraint violating modes} can lead to an equivalence with an
equation in block upper triangular form. Our general decoupling strategy
is to use this step recursively until a full triangular form is reached.
The second step in the strategy is to simplify this initial triangular
form further by transforming as many as possible off-diagonal components
to zero or to some non-zero canonical form by the methods to be
presented in Section~\ref{sec:rw}. It is remarkable that all these
equations and transformations between them involve only differential
operators with rational coefficients. This review is a condensed version
of the more detailed discussion given in~\cite{kh-vwtriang}.

In Section~\ref{sec:rw}, we review some tools from the study of rational
solutions of \emph{rational ODEs} (ODEs with rational coefficients) and
apply them to the spin-$s$ Regge--Wheeler equations. Namely, we reduce
the problem of deciding when a triangularly coupled Regge--Wheeler system
can be made diagonal to a finite dimensional linear problem that can be
easily solved with computer algebra. Incidentally, for spins $s=0,1,2$
that are of interest to us, we prove for the first time that the only
non-trivial rational morphisms (in the sense of
Section~\ref{sec:formal}) between spin-$s$ Regge--Wheeler equations are
the identity morphisms for equal $s$. (In other words, there do not
exist rational \emph{spin raising} and \emph{lowering
operators}~\cite{whiting-shah-spheroidal} at least between radial modes
of spins $s=0,1,2$, and also likely any other spins.) These methods were
first applied in~\cite{kh-rwtriang}. In this work, they are further
improved and refined by new results on the characterization of the
cokernel of a rational ODE. Finally, in Section~\ref{sec:rwz}, we also
discuss the Chandrasekhar transformation as an equivalence between the
$s=2$ Regge--Wheeler and Zerilli equations, with special attention to its
dependence on the frequency $\omega$.

In Section~\ref{sec:schw}, we first review the complete separation of
variables of the scalar, vector and Lichnerowicz wave equations into
spherical and time harmonic modes. Our notation and conventions are
compared with the literature in Appendix~\ref{sec:spherical}. Then we
present the resulting radial mode equations, related differential
operators and identities between them that are needed to implement the
decoupling strategy from Section~\ref{sec:formal}. In
Section~\ref{sec:vw} we recall from~\cite{kh-vwtriang} and slightly
improve the explicit full triangular decoupling for the radial vector
wave equation. In Section~\ref{sec:lich} we give for the first time a
explicit full triangular decoupling for the radial Lichnerowicz wave
equation, which follows from the decoupling strategy of
Section~\ref{sec:formal} and builds on the results of
Section~\ref{sec:vw}. Since some of the explicit formulas needed for the
even sector of the Lichnerowicz equation are quite lengthy, they are
relegated to Appendix~\ref{sec:lichev-formulas}. In both the vector and
Lichnerowicz wave equation cases, we improve on the original results
of~\cite{berndtson} by giving a more complete and simpler triangular
form.

In Section~\ref{sec:appl} we formulate our main results as a concise
theorem. Then we state several immediate qualitative corollaries, which
indicate important applications of the explicit knowledge of our
decoupling formulas. Finally, these applications and other potential
further developments are discussed in Section~\ref{sec:discussion}.

\section{Results and applications} \label{sec:appl}

Here we concisely state our main result (Theorem~\ref{thm:rw-equiv}) and
then discuss several important applications in the form of corollaries
that follow straightforwardly from the main theorem. The commentary
given along the way can be seen as a guide to reading the rest of the
paper. There are of course further potential applications that require
less straightforward consequences of Theorem~\ref{thm:rw-equiv}. They
will be pursued in future works.

We must start by introducing the minimum of the notations and
definitions needed to state our results. We may be informal here, but
will give pointers to the formal details in the body of the paper.

The central objects that enter the hypothesis of our main theorem are
specific linear ordinary second order differential operators
(equivalently, differential equations). They have matrix coefficients
(of given size) whose entries are rational functions depending on
parameters, which are $M$ (\emph{mass}), $l$ (\emph{angular momentum
quantum number}) and $\omega$ (\emph{frequency}). They are $\sqzer$
($1\times 1$), $\sqone_o$ ($1\times 1$), $\sqone_e$ ($3\times 3$),
$\sqtwo_o$ ($3\times 3$) and $\sqtwo_e$ ($7\times 7$), where notation
reflects that they are related to the mode separation of the tensor wave
operator $\square = \nabla_\mu \nabla^\nu$ on the static spherically
symmetric Schwarzschild black hole (full details in
Section~\ref{sec:schw}) for different \emph{spins} ($s=0,1,2$) and
belong either to the \emph{even} or \emph{odd} sectors under antipodal
reflection of the spherical orbits of symmetry. Each operator is defined
on the interval $r\in (2M,\oo)$, with a regular singularity at $r=2M$
and an irregular one at $r=\oo$. The dependence on all the parameters is
polynomial, with $\omega$ considered the \emph{spectral parameter}.
Since the explicit form of some of these operators is rather complicated, we
write them only in the sections where they are derived, namely
Sections~\ref{sec:sw},~\ref{sec:vw} and~\ref{sec:lich}. However, each of
them has the following generic form (see Section~\ref{sec:eq-mor} for
our conventions regarding differential operators):
\begin{equation*}
 E = \del_r P(r) \del_r + Q(r) + {\rm i}\omega A(r) + \omega^2 W(r),
\end{equation*}
where $P(r)$, $Q(r)$, ${\rm i}A(r)$ and $W(r)$ are self-adjoint matrices
(supposing that $M$ and $\ell$ have real values) of appropriate size,
meaning that $E^* = E$ is \emph{formally self-adjoint}
(Section~\ref{sec:adjoint}). As~dis\-cussed in the Introduction, the
matrix $W(r)$ is of indefinite signature (except for the $1\times 1$
cases), meaning that it cannot be used to define a Hilbert space (only a Krein space) on
which~$E$ could define a self-adjoint unbounded operator. Even if that
were possible, the dependence of~$E$ on both $\omega$ and $\omega^2$
implies that we would have needed to consider it as a \emph{quadratic
eigenvalue problem}, where verifying the reality of the
$\omega$-spectrum would have already been non-trivial.

In our main result, each of these complicated differential operators
will be transformed to simplified upper diagonal form, which involves
the family of \emph{spin-$s$ Regge--Wheeler} operators
(Section~\ref{sec:rw})
\begin{gather}
 \D_s \phi := \del_r f \del_r \phi
 - \frac{1}{r^2} \big[\B_l + \big(1-s^2\big)f_1\big] \phi + \frac{\omega^2}{f} \phi,
\end{gather}
where for convenience
\begin{gather}
 f(r) := 1 - \frac{2M}{r}, \qquad
 f_1(r) := r \del_r f(r) = \frac{2M}{r}.
\end{gather}
The following convenient combinations of the parameters often accompany
the Regge--Wheeler operators:
\begin{equation}
 \B_l := l(l+1), \qquad
 \A_l := (l-1)l(l+1)(l+2) = \B_l (\B_l-2), \qquad
 \alpha := (12 M \omega)^2 + \A_l^2.
\end{equation}
When $s=2$, the alternative \emph{Zerilli} operator is often used
\begin{equation}
 \D_2^+ \phi := \del_r f \del_r \phi
 - \frac{\A_l + 3(\B_l-2)f_1 (1+3f_1) + 9 f_1^3}
 {r^2(\B_l-2+3f_1)^2}
 + \frac{\omega^2}{f} \phi.
\end{equation}
In Section~\ref{sec:rwz} we study in detail the equivalence of $\D_2$
and $\D_2^+$, and explain why we prefer not to use the Zerilli operator.
The Regge--Wheeler operators are in fact of standard Sturm--Liouville
type~\cite{dss}.

\subsection{Main result}

To state Theorem~\ref{thm:rw-equiv} below, besides the notations we have
just introduced above, one also needs the notion of \emph{equivalence}
between two differential equations. Informally, for our purposes, two
differential equations are equivalent when there exist differential
operators that map solutions to solutions, in both directions, and are
moreover mutually inverse on the solutions. More precisely, our notion
of equivalence involves the existence of a number of auxiliary
differential operators satisfying some compositional identities, which
are conveniently summarized in an \emph{equivalence diagram} such
as~\eqref{eq:E-equiv} below. The full details of the definition and the
rationale for it are given in Section~\ref{sec:eq-mor}.

To prove that such an equivalence exists, it is of course sufficient to
write down all the differential operators and verify the required
identities between them. That is precisely our proof strategy, with full
details presented in Section~\ref{sec:schw}, with
Sections~\ref{sec:formal} and~\ref{sec:rw} building up preliminary
material for it.

The main content of the theorem is that each of our differential
equations of interest is equivalent to a much simpler upper triangular
form. To appreciate the magnitude of the simplification, it is
sufficient to compare the upper triangular forms
in~\eqref{eq:E-uptriang} with the original forms of $\sqzer$
\eqref{eq:sw-coord}, $\sqone_o$ \eqref{eq:vwo-coord}, $\sqone_e$
\eqref{eq:vwe-coord}, $\sqtwo_o$ \eqref{eq:radial-od} and $\sqtwo_e$
\eqref{eq:radial-ev}.

\begin{thm} \label{thm:rw-equiv}
Let $E$ be one of the $\sqzer$, $\sqone_o$, $\sqone_e$, $\sqtwo_o$ or
$\sqtwo_e$ operators, representing the systems of radial mode equations
for the scalar $($Section~$\ref{sec:sw})$, vector $($Section~$\ref{sec:vw})$ or
Lichnerowicz $($Section~$\ref{sec:lich})$ wave operators on the
Schwarzschild black hole of mass $M$, mode separated as described in
Section~$\ref{sec:schw}$, in either the odd or even sector. Then the
following is true:
\begin{enumerate}\itemsep=0pt
\renewcommand{\theenumi}{$\roman{enumi}$}
\renewcommand{\labelenumi}{$(\theenumi)$}
\item \label{subthm:Erat}
 With respect to the Schwarzschild radial coordinate $r$, $E$ is a
 rational ODE, with singular points only at $r=0, 2M, \oo$. $E$ also
 depends polynomially on the frequency $\omega$ and angular momentum
 $\B_l=l(l+1)$ spectral parameters, as well as the mass parameter $M$.
\item \label{subthm:Eequiv}
 There exists a rational ODE system $\tilde{E}$ in upper triangular
 form and an equivalence diagram $($Section~$\ref{sec:formal})$
 \begin{equation} \label{eq:E-equiv}
 \begin{tikzcd}[column sep=huge,row sep=huge]
 \bullet
 \ar[swap]{d}[description]{E}
 \ar[shift left]{r}{k}
 \&
 \bullet
 \ar[shift left]{l}{\bar{k}}
 \ar{d}[description]{\tilde{E}}
 \\
 \bullet
 \ar[shift left]{r}{k'}
 \ar[dashed,bend left=40]{u}{h}
 \&
 \bullet
 \ar[shift left]{l}{\bar{k}'}
 \ar[dashed,bend right=40]{u}[swap]{\tilde{h}}
 \end{tikzcd},
 \end{equation}
 such that the diagonal of $\tilde{E}$ consists of Regge--Wheeler
 operators $\D_s$ $($up to constant, $\omega$- and $\B_l$-dependent
 factors$)$, while the equivalence maps are rational differential
 operators of order at most $1$ and poles only at $r=0, 2M$.
\item \label{subthm:equiv-poles}
 The equivalence $\big(k,k',\bar{k},\bar{k}'\big)$ and homotopy $\big(h,\bar{h}\big)$
 operators in~\eqref{eq:E-equiv} depend rationally on~$\omega$ and~$\B_l$, with poles only at $\omega=0$ and $\B_l=0,2$, respectively,
 with one exception. The exception is the case $E=\sqtwo_e$, where some
 of these operators have additional poles at the algebraically special
 frequencies $\omega = \pm {\rm i}\frac{\A_l}{12M}$. In that case, using the
 Chandrasekhar transformation $($Section~$\ref{sec:rwz})$ to replace the
 Regge--Wheeler operator $\D_2$ by the Zerilli ope\-ra\-tor~$\D_2^+$ in~$\tilde{E}$ removes the poles at the algebraically special frequencies
 at the price of introducing factors of $(\B_l-2+3f_1)$ in some of the
 denominators in the equivalence diagram.
\item \label{subthm:libM0}
 The equivalence diagram~\eqref{eq:E-equiv} has finite limit as $M\to 0$.
 $\tilde{E}$ becomes diagonal in that limit.
\item \label{subthm:Eoffdiag}
 The non-vanishing off-diagonal elements of $\tilde{E}$ can all be put
 into the form $\frac{\Delta_0(r)}{f} + \Delta_1(r) \del_r$, with
 $\Delta_0(r)$ and $\Delta_1(r)$ uniformly bounded on $r\in (2M,\oo)$.
\item \label{subthm:sym}
 The rows and columns of $\tilde{E}$ can be permuted such that it
 remains upper triangular and block diagonal with respect to blocks of
 equal $s$ of the Regge--Wheeler operators $\D_s$. The most general
 rational automorphism of $\tilde{E}$ takes the form $\tilde{E} A = A
 \tilde{E}$, where $A$ is a constant matrix that is block diagonal with
 respect to equal spin blocks and is upper triangular within each such
 block, with the single exception of the $s=0$ block when $E =
 \sqtwo_e$.
\item \label{subthm:selfadj}
 There exists a constant matrix $\Sigma$, with $\Sigma^2 = \id$ and
 $\Sigma^* = \Sigma$ $($Section~$\ref{sec:adjoint})$, such that $\Sigma
 \tilde{E}$ is formally self-adjoint, $\big(\Sigma \tilde{E}\big)^* = \Sigma
 \tilde{E}$.
\end{enumerate}
In each of these cases, the upper triangular Regge--Wheeler system has
the following form:
 \noeqref{eq:E-uptriang0}%
 \noeqref{eq:E-uptriang1o}%
 \noeqref{eq:E-uptriang1e}%
 \noeqref{eq:E-uptriang2o}%
 \noeqref{eq:E-uptriang2e}%
\begin{subequations} \label{eq:E-uptriang}
\begin{enumerate}\itemsep=0pt
\renewcommand{\theenumi}{\alph{enumi}}
\renewcommand{\labelenumi}{$(\theenumi)$}
\item when $E = \sqzer$, then
 \begin{equation} \label{eq:E-uptriang0}
 \tilde{E} = \frac{1}{\omega^2} \D_0,
 \end{equation}
\item when $E = \sqone_o$, then
 \begin{equation} \label{eq:E-uptriang1o}
 \tilde{E} = \frac{\B_l}{\omega^2} \D_1,
 \end{equation}
\item when $E = \sqone_e$, then
 \begin{equation} \label{eq:E-uptriang1e}
 \tilde{E} = {\frac{1}{\omega^2}} \begin{bmatrix}
 \D_0 & 0 & -\tfrac{f_1}{r^2} \left(\B_l+\tfrac{1}{2}f_1\right) \\
 0 & \B_l \D_1 & 0 \\
 0 & 0 & \D_0
 \end{bmatrix}\!,
 \end{equation}
\item when $E = \sqtwo_o$, then
 \begin{equation} \label{eq:E-uptriang2o}
 \tilde{E} = {-\frac{2}{\omega^2}} \begin{bmatrix}
 \B_l \D_1 & 0 & \tfrac{1}{3}\B_l^2 \tfrac{f_1}{r^2} \\
 0 & \A_l \D_2 & 0 \\
 0 & 0 & \B_l \D_1
 \end{bmatrix}\!,
 \end{equation}
\item when $E = \sqtwo_e$, then
 \begin{gather} \label{eq:E-uptriang2e}
 \tilde{E} = {-\frac{2}{\omega^2}} \begin{bmatrix}[c@{~}c@{\,}c@{\!\!}c@{\!\!}c@{\!}c@{\!}c]
 \D_0 & 0 & -\tfrac{f_1}{r^2}\left(\B_l \!+\! \tfrac{1}{2}f_1\right) &
 0 & \tfrac{f_1}{r^2}\left(\B_l \!+\! \tfrac{1}{2}f_1\right) & 0 &
 \tfrac{f_1^2}{8r^2} (7\B_l \!+\! 2) \\
 0 & \B_l\D_1 & 0 & 0 & 0 & -\tfrac{5}{3} \B_l^2 \tfrac{f_1}{r^2} & 0 \\
 0 & 0 & \D_0 & 0 & 0 & 0 & \tfrac{f_1}{r^2}\left(\B_l \!+\! \tfrac{1}{2}f_1\right) \\
 0 & 0 & 0 & \alpha\A_l \D_2 & 0 & 0 & 0 \\
 0 & 0 & 0 & 0 & \D_0 & 0 & -\tfrac{f_1}{r^2}\left(\B_l \!+\! \tfrac{1}{2}f_1\right) \\
 0 & 0 & 0 & 0 & 0 & \B_l\D_1 & 0 \\
 0 & 0 & 0 & 0 & 0 & 0 & \D_0
 \end{bmatrix}\!.
 \end{gather}
\end{enumerate}
\end{subequations}
\end{thm}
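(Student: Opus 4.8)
The plan is to prove Theorem~\ref{thm:rw-equiv} by explicit construction: for each of the five operators $E \in \{\sqzer, \sqone_o, \sqone_e, \sqtwo_o, \sqtwo_e\}$ we will write down the upper triangular target $\tilde{E}$ (already displayed in~\eqref{eq:E-uptriang}), exhibit all the differential operators $k, \bar{k}, k', \bar{k}', h, \tilde{h}$ filling the equivalence diagram~\eqref{eq:E-equiv}, and then verify by direct computation that the required compositional identities hold. Since every object in sight is a differential operator with rational coefficients in $r$ (and polynomial in $\omega, M, \B_l$), each identity reduces to a finite rational-function identity that can be checked by computer algebra. The scalar case $E=\sqzer$ is immediate — the separated scalar wave operator is literally $\frac{1}{\omega^2}\D_0$ up to a trivial rescaling — so parts~\eqref{subthm:Erat} through~\eqref{subthm:selfadj} are trivial there and serve mainly to fix notation and conventions (Section~\ref{sec:sw}).

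The substantive work proceeds in increasing order of complexity, exactly mirroring the section structure. First I would treat the odd sectors $\sqone_o, \sqtwo_o$ and the even vector sector $\sqone_e$ using the hierarchical decomposition reviewed in Section~\ref{sec:formal}: split the mode equations into gauge modes, gauge-invariant modes, and constraint-violating modes, which automatically produces a block upper triangular form with Regge--Wheeler operators $\D_s$ on the diagonal (up to $\omega$- and $\B_l$-dependent scalar factors, as recorded). This reproduces and slightly cleans up the results of~\cite{kh-vwtriang} for the vector case and then feeds into the Lichnerowicz odd sector. Next, for the off-diagonal entries, I would invoke the rational-ODE techniques of Section~\ref{sec:rw}: the question of whether a given off-diagonal coupling between two $\D_s$ blocks can be removed (or reduced to canonical form $\frac{\Delta_0}{f} + \Delta_1 \del_r$) is reduced there to a finite-dimensional linear algebra problem over the rational functions, which pins down part~\eqref{subthm:Eoffdiag} and, together with the classification of rational morphisms between Regge--Wheeler equations, yields the automorphism statement~\eqref{subthm:sym}. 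The self-adjointness claim~\eqref{subthm:selfadj} is then checked by producing the sign matrix $\Sigma$ explicitly and verifying $(\Sigma\tilde{E})^* = \Sigma\tilde{E}$ entrywise, while the $M\to 0$ limit~\eqref{subthm:libM0} and the pole structure~\eqref{subthm:equiv-poles} are read off directly from the explicit formulas for the equivalence operators, with the algebraically special frequencies $\omega = \pm\mathrm{i}\frac{\A_l}{12M}$ and their removal via the Chandrasekhar transformation (Section~\ref{sec:rwz}) handled as a separate lemma.

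The main obstacle — and the reason the $s=2$ even sector is relegated partly to Appendix~\ref{sec:lichev-formulas} — is the $7\times 7$ case $E=\sqtwo_e$. Here the raw mode equation~\eqref{eq:radial-ev} is genuinely enormous, the hierarchical decomposition does not land directly in the maximally simplified triangular form, and a second round of reduction (clearing off-diagonal entries using the tools of Section~\ref{sec:rw}) is needed. The delicate point is that some of the intermediate equivalence operators acquire spurious poles at $\omega = \pm\mathrm{i}\frac{\A_l}{12M}$, precisely the frequencies where $\alpha = (12M\omega)^2 + \A_l^2$ vanishes — which is why the $\D_2$ diagonal block in~\eqref{eq:E-uptriang2e} carries the factor $\alpha\A_l$ rather than just $\A_l$. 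Showing that these are the \emph{only} extra poles, that the off-diagonal entries can be simplified all the way down to the compact form displayed, and that no further reduction is possible (the sharpness claim, again via the rational-morphism classification of Section~\ref{sec:rw}), is where the bulk of the computational effort and the genuinely new content beyond~\cite{berndtson} lies. Everything else is bookkeeping: once the operators are written down, the verification is mechanical.
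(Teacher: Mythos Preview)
Your proposal is correct and follows essentially the same approach as the paper: the proof is by explicit construction of all the operators in the equivalence diagram for each of the five cases (carried out in Section~\ref{sec:schw}, with the lengthy $\sqtwo_e$ formulas pushed to Appendix~\ref{sec:lichev-formulas}), verified by direct computer-algebra computation of the required compositional identities, after which parts~(\ref{subthm:Erat}) and (\ref{subthm:equiv-poles})--(\ref{subthm:selfadj}) are simply read off from the explicit formulas. Your description of the workflow---hierarchical gauge/gauge-invariant/constraint-violating splitting from Section~\ref{sec:formal}, followed by off-diagonal simplification via the rational-ODE cokernel machinery of Section~\ref{sec:rw}, with the $\alpha$-pole subtlety in the $\sqtwo_e$ case handled via Section~\ref{sec:rwz}---matches the paper's actual route.
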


\begin{proof}
Section~\ref{sec:schw} gives a detailed presentation of the explicit
formulas for the operators in the equivalence diagram~\eqref{eq:E-equiv},
thus proving~($ii$), for each case being considered:
$\eqref{eq:sw-equiv}{\implies}(a)$,
$\eqref{eq:vw1o-equiv}{\implies}(b)$,
$\eqref{eq:vw1e-equiv}{\implies}(c)$,
$\eqref{eq:lich2o-equiv}{\implies}(d)$,
$\eqref{eq:lich2e-equiv}{\implies}(e)$. The rest of the theorem is
simply a~summary, collected here in a way convenient for future
reference, of properties of these ope\-ra\-tors that can be gleaned from
their explicit formulas.
\end{proof}

\subsection{Spectral problem}

One might notice that the explicit form~\eqref{eq:E-uptriang} of the
triangular Regge--Wheeler systems in Theorem~\ref{thm:rw-equiv} gives
much more detailed information about the off-diagonal elements than
point (\ref{subthm:Eoffdiag}) of Theorem~\ref{thm:rw-equiv}. But the
reason we have phrased it like that is to draw attention to the following
almost immediate
\begin{cor} \label{cor:relbound}
Let $\tilde{E}$ be one of $\sqzer$, $\sqone_o$, $\sqone_e$, $\sqtwo_o$
and $\sqtwo_e$ from Theorem~$\ref{thm:rw-equiv}$. Then, sup\-po\-sing that~$\tilde{E}$ is an $n\times n$ operator matrix, the bounded inverse
$\tilde{E}^{-1}\colon L^2(2M,\oo; f\,\d{r})^{\oplus n} \to L^2\big(2M,\oo;
\frac{\d r}{f}\big)^{\oplus n}$ exists for any value of $\omega$ for which
the bounded inverse $\D_s^{-1} \colon L^2(2M,\oo; f\,\d{r}) \allowbreak\to
L^2\big(2M,\oo; \frac{\d r}{f}\big)$ also exists.
\end{cor}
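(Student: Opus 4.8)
The plan is to reduce the invertibility of $\tilde{E}$ to that of its diagonal, exploiting the upper triangular structure established in Theorem~\ref{thm:rw-equiv}. First I would observe that $\tilde{E}$ splits as $\tilde{E} = D + N$, where $D = \diag(c_1 \D_{s_1}, \dots, c_n \D_{s_n})$ is the diagonal part (with each $c_i$ a nonzero constant times a power of $\omega$ and a polynomial in $\B_l$, as read off from~\eqref{eq:E-uptriang}) and $N$ is strictly upper triangular, collecting the off-diagonal entries. By point~(\ref{subthm:Eoffdiag}), each entry of $N$ has the form $\frac{\Delta_0(r)}{f} + \Delta_1(r)\del_r$ with $\Delta_0,\Delta_1$ uniformly bounded on $(2M,\oo)$.

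The key analytic input is that $N$ is $D$-bounded with relative bound zero, or at least that the off-diagonal coupling is dominated by the diagonal in the appropriate operator norms. Concretely, I would argue that $N\,D^{-1}$ (or $D^{-1}N$), viewed between the appropriate $L^2$ spaces, is bounded: the $\frac{\Delta_0(r)}{f}$ term composed with $\D_s^{-1}$ is bounded because $\D_s^{-1}$ maps $L^2(2M,\oo;f\,\d r)$ into the form domain, which controls both the function in $L^2(2M,\oo;\frac{\d r}{f})$ and (via the $\del_r f \del_r$ structure) the weighted derivative $f^{1/2}\del_r$ in $L^2(2M,\oo;\d r)$; the first-derivative term $\Delta_1(r)\del_r$ composed with $\D_s^{-1}$ is likewise bounded by the same form-domain estimate. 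Then I would write $\tilde{E} = D\,(\id + D^{-1}N)$, note that $D^{-1}N$ is strictly upper triangular, hence \emph{nilpotent} as a matrix of operators ($(D^{-1}N)^n = 0$), so $\id + D^{-1}N$ is invertible with inverse $\sum_{j=0}^{n-1}(-D^{-1}N)^j$, which is bounded on $L^2(2M,\oo;\frac{\d r}{f})^{\oplus n}$ by the previous step. Therefore $\tilde{E}^{-1} = (\id + D^{-1}N)^{-1} D^{-1}$ exists and is bounded as a map $L^2(2M,\oo;f\,\d r)^{\oplus n} \to L^2(2M,\oo;\frac{\d r}{f})^{\oplus n}$ whenever every $\D_{s_i}^{-1}$ in $D$ exists and is bounded in the stated sense — and $D^{-1}$ exists precisely when each $\D_{s_i}^{-1}$ does, since the scalar prefactors $c_i$ are nonzero for the values of $\omega$ under consideration.

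The main obstacle I anticipate is making precise the sense in which the first-order operator $\Delta_1(r)\del_r$ composed with $\D_s^{-1}$ is bounded between the two weighted $L^2$ spaces; this is really the statement that $\D_s^{-1}$ maps into the form domain of $\D_s$, together with a uniform bound on $\Delta_1$ and the fact that near $r=2M$ the weight $f$ degenerates while $\del_r f\del_r$ compensates, and near $r=\oo$ the weight $\frac1f \to 1$. This is a standard Sturm--Liouville form-domain estimate for each Regge--Wheeler operator (cf.~\cite{dss}), but it must be invoked with the correct weights so that the nilpotent Neumann series genuinely lives in the target space rather than merely in a larger Sobolev-type space. Once that single estimate is in hand, the rest is the elementary triangular/nilpotent bookkeeping sketched above.
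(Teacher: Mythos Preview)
Your proposal is correct and is essentially the paper's own argument: the proof in the paper simply invokes Remark~\ref{rmk:relbound}, which exhibits the explicit $2\times 2$ block inverse (the small case of your nilpotent Neumann series) and relies on exactly the relative boundedness you describe, then says ``direct generalization to larger operator matrices.'' The one thing the paper does that you leave open is your flagged obstacle: rather than a form-domain estimate, Remark~\ref{rmk:relbound} passes to the tortoise coordinate $r_*$ with $\d r_* = \d r/f$, so that $L^2(2M,\oo;\d r/f)\cong L^2(\mathbb{R},\d r_*)$, the operator $f\D_s$ becomes $\del_{r_*}^2$ plus bounded lower-order terms, and $f\Delta = \Delta_0 + \Delta_1\,\del_{r_*}$ is first order in $\del_{r_*}$ with bounded coefficients --- whence the relative bound is immediate from the standard fact that a constant-coefficient operator of order $\le 2$ is relatively bounded by one of order $2$ with nondegenerate principal part.
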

\begin{proof}
The corollary follows from a direct generalization to larger operator
matrices of Remark~\ref{rmk:relbound}, about the relative boundedness of
off-diagonal matrix elements with respect to the Regge--Wheeler operators
on the diagonal, which here follows from
Theorem~\ref{thm:rw-equiv}(\ref{subthm:Eoffdiag}).
\end{proof}

As point out in Remark~\ref{rmk:relbound}, the novel results of
Section~\ref{sec:rw} concerning the \emph{cokernel} of the differential
operator in~\eqref{eq:rw-decoupling} are crucial for guaranteeing the
possibility of choosing the off-diagonal elements to be relatively
bounded while simplifying an upper triangular Regge--Wheeler system. In
practice, we have found that at the initial stages of this
simplification, which is part of the general strategy of
Section~\ref{sec:formal}, the off-diagonal elements were generally not
relatively bounded. Using only the results from~\cite{kh-rwtriang} about
only the \emph{solutions} of~\eqref{eq:rw-decoupling}, or other off-the-shelf
tools from the literature, it would have been impossible to decide a
priori whether those off-diagonal elements that cannot be chosen to be
zero could be chosen to be relatively bounded.

From Corollary~\ref{cor:relbound} one can essentially conclude that all
the properties of the resolvent $\tilde{E}^{-1}$ (like its dependence on
$\omega$), and by equivalence hence also of $E^{-1}$, can be deduced
from the properties of the resolvent $\D_s^{-1}$, which has been very
well studied. In particular, we can conclude that the $\omega$-spectrum
of each $E$ in Theorem~\ref{thm:rw-equiv} is purely real, as it is for
$\D_s$~\cite{dss}, once the relevant function spaces are chosen to
respect the equivalences that we have constructed. This information is
an important starting point for the integral representation of the
solutions of the corresponding Cauchy problem and of the large time
asymptotics of such solutions~\cite{dss}. It is also crucial for the
explicit construction of propagators in quantum field
theory~\cite{bdm, bfr-qg, fewster}.

We now make the above reasoning slightly more precise. When $E$ is one
of $\sqzer$, $\sqone_o$, $\sqone_e$, $\sqtwo_o$ or $\sqtwo_e$, both it
and its decoupled triangular form $\tilde{E}$ are rational ODEs with a
regular singular point at $r=2M$ and an irregular singular point at
$r=\oo$. Hence, for each system, and at each singular point, using the
methods of asymptotic analysis of ODEs~\cite{wasow}, each solution can
be uniquely identified by its asymptotics. Being bijective on solutions,
the equivalence mor\-phisms~$k$,~$\bar{k}$ of Theorem~\ref{thm:rw-equiv}
must hence be bijective on these asymptotics. Moreover, since these
morphisms are differential operators, they can be applied to the
asymptotics directly, without knowledge of the full solution. This means
that if we know the \emph{connection coefficients} between the singular points
at $r=2M$ and $r=\oo$ (essentially the \emph{transmission} and
\emph{reflection} coefficients for incoming and outgoing waves) at
frequency $\omega$ for $\tilde{E}$, the equivalence morphisms transfer
them to the corresponding connection coefficients for $E$. This argument
of course works only when the equivalence morphisms are well-defined,
which excludes $l=0,1$ in some cases and $\omega = 0$, as well as the
algebraically special $\omega = \pm {\rm i} \frac{\A_l}{12M}$ in the case
$E=\sqtwo_e$. But these excluded cases can be studied individually. For
instance, for the algebraically special frequencies, it is sufficient to
replace the Regge--Wheeler operator $\D_2$ with the Zerilli operator
$\D_2^+$ in $\tilde{\sqtwo}_e$, as pointed out in
Theorem~\ref{thm:rw-equiv}(\ref{subthm:equiv-poles}). For $\omega=0$ and
$l=0,1$ one can get more information by studying the coefficients of the
Laurent expansion of the equivalence morphisms. Thus, we arrive at
\begin{cor} \label{cor:rw-equiv-asymp}
For $\omega \ne 0$ and $l\ge 2$, the equivalence morphisms $k$,
$\bar{k}$ from Theorem~$\ref{thm:rw-equiv}$ are bijective on solution
asymptotics of each $E$, $\tilde{E}$ pair near $r=2M$ and $r=\oo$.
Hence, after bijectively identifying their solutions, the connection
coefficients for $E$ and $\tilde{E}$ are equal.
\end{cor}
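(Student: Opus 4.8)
The plan is to turn the heuristic argument of the paragraph preceding the statement into a rigorous one, in three substantive steps followed by a discussion of the single delicate point. First I would set up the asymptotic data. By Theorem~\ref{thm:rw-equiv}(\ref{subthm:Erat}) and the construction of $\tilde E$ in Section~\ref{sec:schw}, both $E$ and $\tilde E$ are rational ODE systems whose only singular points on the closure of $(2M,\oo)$ are the regular singular point $r=2M$ and the irregular singular point $r=\oo$. I would invoke the standard asymptotic theory of linear ODE systems (Frobenius theory at $r=2M$, and the theory of asymptotic expansions and Stokes sectors at the irregular point $r=\oo$, as in~\cite{wasow}) to attach to each system, near each of the two singular points, a canonical basis of formal/asymptotic solutions together with genuine solutions realizing those asymptotics in a fixed sector. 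The hypotheses of the corollary enter precisely here: for $\omega\ne0$ the two oscillatory (Jost-type) asymptotic solutions at $r=\oo$ are genuinely distinct, and for $l\ge2$ the indicial exponents at $r=2M$ are distinct, so a solution of either system is uniquely pinned down by its leading asymptotic coefficients at either end.

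Next I would check that the equivalence morphisms act on this asymptotic data. From the equivalence diagram~\eqref{eq:E-equiv}, $k$ and $\bar k$ are differential operators restricting to mutually inverse bijections between the solution spaces of $E$ and $\tilde E$, the homotopies $h,\tilde h$ acting trivially on solutions. By Theorem~\ref{thm:rw-equiv}(\ref{subthm:equiv-poles}), for $\omega\ne0$ and $l\ge2$ the coefficients of $k,\bar k$ (and of $k',\bar k'$) are rational with poles only at $r=0,2M$, hence regular on $(2M,\oo)$, with at worst a finite-order pole at $r=2M$ and finite limits at $r=\oo$. A differential operator with such coefficients maps a formal Frobenius expansion at $r=2M$, resp.\ an asymptotic expansion at $r=\oo$, to an expansion of the same type, computed termwise; so $k$ sends the canonical asymptotic data of $E$ at a given singular point to asymptotic data of $\tilde E$ at the same point, and $\bar k$ does the reverse. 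Since $k\bar k$ and $\bar k k$ restrict to the identity on solutions, the induced maps on asymptotic data at $r=2M$ and at $r=\oo$ are mutually inverse bijections, which is the first assertion.

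For the second assertion I would argue as follows. The connection (transmission/reflection) coefficients of a system are the entries of the matrix expressing the asymptotic basis at $r=2M$ in terms of the asymptotic basis at $r=\oo$, for fixed normalizations of the two bases. Given a solution $\phi$ of $E$, its asymptotics at the two ends are intertwined by the connection matrix of $E$; applying $k$ produces a solution $k\phi$ of $\tilde E$ whose asymptotics at $r=2M$ and at $r=\oo$ are the images of those of $\phi$ under the bijections of the previous step. Because those bijections are attached to the singular points once and for all, independently of the particular solution, the change-of-basis matrix between the two ends is literally the same for $E$ and for $\tilde E$ after this identification; hence the connection coefficients coincide.

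The hard part is not the existence of the bijections but their \emph{normalization}: to conclude that the connection coefficients are \emph{equal} rather than merely conjugate by a fixed invertible matrix, one must verify that the maps induced by $k$ and $\bar k$ on leading asymptotic coefficients respect the chosen normalizations of the Frobenius solutions at $r=2M$ and of the oscillatory solutions at $r=\oo$ — equivalently, that the leading symbol of $k$ at each singular point acts as the expected scalar on each asymptotic channel. This is routine from the explicit first-order operators of Section~\ref{sec:schw}, but it is where the bookkeeping resides. A secondary caveat, already flagged in Theorem~\ref{thm:rw-equiv}(\ref{subthm:equiv-poles}), concerns $E=\sqtwo_e$ at the algebraically special frequencies $\omega=\pm{\rm i}\A_l/(12M)$, where the diagonal block $\alpha\A_l\D_2$ of $\tilde E$ degenerates; there one first passes to the Zerilli operator $\D_2^+$ on the diagonal, in which form the equivalence morphisms are again regular, and then repeats the argument, recovering these isolated frequencies alternatively by continuity in $\omega$.
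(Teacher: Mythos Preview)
Your proposal is correct and follows essentially the same approach as the paper, which gives only the informal argument in the paragraph immediately preceding the corollary; you have usefully made it more precise and correctly flagged the normalization subtlety that the paper's phrase ``after bijectively identifying their solutions'' quietly absorbs.

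One small correction: the hypothesis $l\ge2$ is not about distinctness of indicial exponents at $r=2M$ --- those are $\pm 2{\rm i}M\omega$ for the Regge--Wheeler blocks and depend on $\omega$, not on $l$ --- but rather ensures that the morphisms $k,\bar k$ are well-defined, since by Theorem~\ref{thm:rw-equiv}(\ref{subthm:equiv-poles}) they have poles at $\B_l=0,2$. You already invoke this correctly in your second paragraph, so the misattribution in the first paragraph is harmless to the argument.
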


\subsection{Symmetries and potentials}

Each of Section~\ref{sec:sw}, \ref{sec:vw} and~\ref{sec:lich} starts
with separation of variables on the static spherically symmetric
Schwarzschild black hole background, which turns spacetime partial
differential operators into ordinary differential operators acting on
radial mode functions. The time derivative ${\rm i}\del_t$ is replaced by the
frequency parameter $\omega$, while the spherical Laplacian $D_A D^A$ is
replaced by the angular momentum parameter $-\B_l$ (shifted by a constant
depending on the tensor rank of its argument), among other
substitutions. Not every operator on radial mode functions comes from
such a separation of variables, but it is straightforward to work out
the rules for recognizing those operators that do. Thus, directly from
the statement of Theorem~\ref{thm:rw-equiv}, we can also draw
conclusions about spacetime partial differential operators.

The first conclusion is about symmetries. A \emph{symmetry} of a
differential equation $E[u]=0$ is a pair of differential operators $S$,
$S'$ such that $ES = S'E$ (Section~\ref{sec:eq-mor}). Such symmetries are
extremely important in understanding the structure of a differential
equation~\cite{olver-lie}.

The second conclusion is about a variational
formulation~\cite{olver-lie}. A formally self-adjoint differential
operator $E=E^*$ (Section~\ref{sec:adjoint}) is the Euler--Lagrange
equation of a variational principle $\int \langle u, E[u] \rangle \,
\d{x}$. Conversely, the Euler--Lagrange equation of a variational
principle is always formally self-adjoint. For linear differential
equations the two notions are essentially equivalent. Having a
variational formulation leads to the possibility of using Noether's
theorem to convert symmetries to conservation laws. Even more
interesting is when the same equation has more than one variational
principle. This is the case for a self-adjoint equation $E[u]=0$ when
there exists another self-adjoint equation $\tilde{E}[v]=0$, which are
equivalent in a non-trivial way ($E$ and~$\tilde{E}$ are not just
multiples of each other). Multiple variational formulations can lead to
interesting integrability properties and bi-Hamiltonian structures.

\begin{cor} \label{cor:rw-equiv-spacetime}\quad
\begin{enumerate}
\renewcommand{\theenumi}{\alph{enumi}}
\renewcommand{\labelenumi}{$(\theenumi)$}
\item It is easy to see that, reversing the mode separation performed in
Section~$\ref{sec:schw}$, each the triangular systems $\tilde{\sqone}_o$,
$\tilde{\sqone}_e$, $\tilde{\sqtwo}_o$ and $\tilde{\sqtwo}_e$
constitutes the radial mode equation of a triangularly coupled system of
scalar spacetime differential equations. The diagonal components of this
system consist of the spacetime version of Regge--Wheeler wave equations,
possibly multiplied by a polynomial in the spherical Laplacian $D_A D^A$
and ${\rm i}\del_t$ with constant coefficients.

\item By similar arguments, the morphisms $k$, $k'$ and $\bar{k}$, $\bar{k}'$
from~$\eqref{eq:E-equiv}$ can be promoted to spacetime differential
operators, possibly after being multiplied by a constant coefficient
polynomial in $\B_l$ and $\omega$ of sufficiently high degree. These
spacetime differential operators will effect a morphism between the
original harmonic gauge tensor wave equation and the triangular system
of Regge--Wheeler wave equations from $(a)$.

\item The systems of Regge--Wheeler wave equations from $(a)$ are
self-adjoint on spacetime after composition with the $\Sigma$ operator
from Theorem~$\ref{thm:rw-equiv}${\rm (\ref{subthm:selfadj})}. Hence they
possess a variational formulation.
\end{enumerate}
\end{cor}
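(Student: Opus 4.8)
The plan is to treat each of the three claims as a straightforward ``de-separation'' argument, reversing the mode decomposition of Section~\ref{sec:schw} and then invoking the structural properties already established in Theorem~\ref{thm:rw-equiv}. First, for part~(a), I would recall the dictionary between radial mode operators and spacetime operators: under the separation ${\rm i}\del_t \mapsto \omega$, $D_A D^A \mapsto -\B_l$ (shifted by the appropriate tensor-rank constant), a radial operator built polynomially out of $\del_r$, $f(r)$, $\omega$ and $\B_l$ lifts to a spacetime operator precisely when it is polynomial in $\omega$ and $\B_l$ (no denominators in those variables). Each diagonal block of $\tilde E$ is $c\,\D_s$ with $c$ a monomial in $\omega$ and $\B_l$, and the spacetime Regge--Wheeler operator is exactly what $\D_s$ de-separates to; hence each diagonal block lifts to a (polynomial-in-$D_AD^A$-and-${\rm i}\del_t$ multiple of a) spacetime Regge--Wheeler wave operator. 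The off-diagonal entries, by Theorem~\ref{thm:rw-equiv}(\ref{subthm:Eoffdiag}) and the explicit forms in~\eqref{eq:E-uptriang}, are of shape $\frac{\Delta_0(r)}{f}+\Delta_1(r)\del_r$ with coefficients that are polynomial (indeed constant) in $\omega$ and $\B_l$, so they too lift. This establishes~(a).

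For part~(b), the only obstruction to lifting the equivalence morphisms $k,k',\bar k,\bar k'$ is the presence of poles in $\omega$ and $\B_l$, which by Theorem~\ref{thm:rw-equiv}(\ref{subthm:equiv-poles}) occur only at $\omega=0$ and $\B_l=0,2$ (and, for $\sqtwo_e$, at the algebraically special frequencies, which are roots of a polynomial in $\omega$ and $\B_l$ with constant coefficients, namely $\alpha=0$). Multiplying each morphism by a fixed constant-coefficient polynomial in $\B_l$ and $\omega$ of high enough degree to clear all these denominators yields operators that are polynomial in $\omega$ and $\B_l$, hence de-separate to honest spacetime differential operators. That the lifted operators still effect a morphism between the harmonic-gauge tensor wave equation and the triangular Regge--Wheeler system from~(a) follows because the defining compositional identities of the morphism (the commuting-square relations encoded in~\eqref{eq:E-equiv}) are polynomial identities in $\del_r,f,\omega,\B_l$, and such identities are preserved under de-separation, the scalar multiples on both sides being absorbed consistently.

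For part~(c), I would invoke Theorem~\ref{thm:rw-equiv}(\ref{subthm:selfadj}): there is a constant matrix $\Sigma$ with $\Sigma^2=\id$, $\Sigma^*=\Sigma$, such that $\Sigma\tilde E$ is formally self-adjoint as a radial operator. The radial adjoint is taken with respect to the $L^2(2M,\oo;f\,\d r)$-type pairing that de-separates to the natural spacetime $L^2$ pairing (with the Schwarzschild volume element and the flat fibre metric on scalars); since $\Sigma$ is constant, it commutes with de-separation, and formal self-adjointness is again a polynomial identity in the coefficients, hence survives the lift. Therefore the spacetime system $\Sigma$-composed is formally self-adjoint, and by the standard correspondence between formally self-adjoint linear operators and variational principles $\int\langle u,\Sigma\tilde E[u]\rangle\,\d x$ (recalled in the paragraph preceding the corollary), it possesses a variational formulation.

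The main obstacle is bookkeeping rather than conceptual: one must check that the shifts in the $D_AD^A\mapsto-\B_l$ substitution (which differ between scalar, vector and tensor harmonics) are handled consistently across all blocks and across the morphisms, and that the various scalar prefactors ($\frac1{\omega^2}$, $\B_l$, $\A_l$, $\alpha\A_l$, etc.) appearing in~\eqref{eq:E-uptriang} and in the morphisms are exactly the ones cleared by the multiplying polynomial in~(b)—so that the lifted square still commutes on the nose. This is routine given the explicit formulas in Sections~\ref{sec:schw}--\ref{sec:lich}, but it is where the ``it is easy to see'' of the statement hides its content.
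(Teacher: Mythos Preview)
Your proposal is correct and matches the paper's approach: the paper does not give an explicit proof of this corollary at all, treating it as immediate from the separation-of-variables dictionary discussed just before the statement together with Theorem~\ref{thm:rw-equiv}, and your write-up is precisely the kind of routine de-separation argument the paper leaves to the reader. One small wording issue: in part~(a) you call the diagonal prefactors ``monomials in $\omega$ and $\B_l$'', but the overall $\omega^{-2}$ factor is a negative power, so you should say explicitly (as you effectively do in your final paragraph) that one first clears the global $\omega^{-2}$ (or $-2/\omega^2$) from $\tilde E$ before lifting; after that the remaining diagonal coefficients $1,\B_l,\A_l,\alpha\A_l$ are genuinely polynomial in $\B_l$ and $\omega$.
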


\begin{rmk}
Note that Corollary~\ref{cor:rw-equiv-spacetime} cannot hold for the
Zerilli form of $\tilde{\sqtwo}_e$ mentioned in
Theorem~\ref{thm:rw-equiv}(\ref{subthm:equiv-poles}), because no
spacetime differential operator has a radial mode decomposition with
$\B_l$-dependent factors like $(\B_l-2+3f_1)$ in the denominator.
\end{rmk}

A \emph{Debye potential} is a differential operator that maps solutions
of a simple auxiliary PDE equation (typically a scalar wave equation
with a potential) to solutions of another PDE of interest (typically a
more complicated tensor wave equation), with the usual additional
requirement that the image solutions are non-trivial (typically not pure
gauge, in an appropriate sense). In the terminology of
Section~\ref{sec:formal}, a Debye potential a special kind of morphism
between PDEs. Debye potentials for Maxwell and linearized Einstein
equations, on both flat and some curved backgrounds (including black
hole backgrounds), have been known for some time, but have also gained
more attention recently~\cite{aab-debye, ab-adjoint, stewart-debye, wald-adjoint}. In the case of black hole backgrounds, the Debye
potentials for Maxwell and linearized Einstein equations typically give
solutions in some version of \emph{radiation gauge}~\cite{stewart-debye,
wald-adjoint}, which is adapted to the algebraically special nature of
these backgrounds, but may not be optimal for other purposes. The recent
work~\cite{johnson-wave} produced a Debye potential in a (non-local)
generalized harmonic gauge. Of course, a solution in one gauge can be
(at least locally) transformed into any other gauge, but it is by no
means obvious when such a transformation can be done by a differential
operator. With that in mind, we draw attention to the next corollary
(first implicitly obtained in~\cite{berndtson}):
\begin{cor} \label{cor:rw-equiv-debye}
On the exterior spacetime of a Schwarzschild black hole, there exist
spacetime Debye potentials generating solutions of $(a)$ Maxwell equations
in harmonic $($Lorenz$)$ gauge and $(b)$ linearized Einstein equations in
traceless harmonic $($de~Donder$)$ gauge, starting from solutions of
Regge--Wheeler wave equations of appropriate spin.
\end{cor}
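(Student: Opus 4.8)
The plan is to derive Corollary~\ref{cor:rw-equiv-debye} as an essentially immediate consequence of Theorem~\ref{thm:rw-equiv} and Corollary~\ref{cor:rw-equiv-spacetime}. The key observation is that a Debye potential, in the terminology of Section~\ref{sec:formal}, is just a morphism from a scalar wave equation (with potential) into a tensor wave equation of interest, subject to the non-triviality (non-pure-gauge) requirement. Theorem~\ref{thm:rw-equiv}(ii) provides, for each of $\sqone_o$, $\sqone_e$ (Maxwell/Lorenz gauge) and $\sqtwo_o$, $\sqtwo_e$ (linearized Einstein in traceless de~Donder gauge), an equivalence diagram~\eqref{eq:E-equiv} in which the map $\bar{k}\colon \tilde{E}\to E$ is a rational differential operator of order at most one that carries solutions of the decoupled triangular Regge--Wheeler system $\tilde{E}$ back to solutions of the original radial mode system $E$. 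Restricting $\bar{k}$ to a single diagonal Regge--Wheeler slot of $\tilde{E}$—say a slot carrying $\D_s$—yields a morphism from the scalar spin-$s$ Regge--Wheeler equation into $E$. This is, by definition, a radial-mode Debye potential of spin $s$.

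First I would recall from Corollary~\ref{cor:rw-equiv-spacetime}(a)--(b) that the entire construction lifts from radial modes to spacetime: reversing the mode separation of Section~\ref{sec:schw}, the triangular system $\tilde{E}$ becomes a triangularly coupled system of scalar spacetime wave equations whose diagonal blocks are (polynomial-in-$D_AD^A$-and-${\rm i}\partial_t$ multiples of) the spacetime Regge--Wheeler wave equations, and the morphisms $k,k',\bar k,\bar k'$ become genuine spacetime differential operators after multiplication by a suitable constant-coefficient polynomial in $\B_l$ and $\omega$ of sufficiently high degree (equivalently, in $D_AD^A$ and ${\rm i}\partial_t$). In particular, the spacetime avatar of $\bar{k}$, composed with the spacetime Regge--Wheeler solution operator for a single spin-$s$ diagonal slot, gives a spacetime differential operator taking solutions of a spin-$s$ Regge--Wheeler wave equation to solutions of the harmonic-gauge Maxwell equations (for $E=\sqone_{o,e}$) or the traceless de~Donder-gauge linearized Einstein equations (for $E=\sqtwo_{o,e}$). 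This establishes the existence of the claimed Debye potentials; one then simply reads off which spins $s$ occur from the explicit diagonals in~\eqref{eq:E-uptriang}.

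The one point that requires a genuine (if short) argument is the non-triviality clause built into the notion of a Debye potential: the image solution must not be pure gauge. Here I would argue that if the restriction of $\bar{k}$ to a given Regge--Wheeler slot produced only pure-gauge (respectively, trivial) solutions, then that slot would be superfluous in the equivalence diagram, contradicting the fact that $(k,\bar k)$ are mutually inverse on solutions modulo homotopy (part~(ii) of Theorem~\ref{thm:rw-equiv})—the homotopy operators $h,\tilde h$ in~\eqref{eq:E-equiv} precisely bookkeep the gauge/trivial ambiguity, and a slot that mapped entirely into it would be annihilated by $k$ up to homotopy, forcing $\tilde E$ to have been chosen with fewer blocks. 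Put differently, the bijectivity of $k,\bar k$ on solutions guarantees that the image of each diagonal slot is a full rank-one family of honest (non-pure-gauge) solutions. For $E=\sqtwo_e$ near the algebraically special frequencies one should, as in Theorem~\ref{thm:rw-equiv}(\ref{subthm:equiv-poles}), either work with the generic-$\omega$ diagram and note the excluded frequencies form a measure-zero set handled separately, or pass to the Zerilli form; this is the only place any care is needed.

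The main obstacle, such as it is, is therefore not the construction—which is dictated entirely by the equivalence diagram already in hand—but the careful phrasing of the non-triviality assertion and the clearance of the polynomial-in-$(\B_l,\omega)$ denominators so that the resulting operator is genuinely a spacetime differential operator; both of these are already handled, respectively, by the homotopy structure in~\eqref{eq:E-equiv} and by Corollary~\ref{cor:rw-equiv-spacetime}(b), so the proof is a matter of assembling these pieces and recording which spins appear in each sector.
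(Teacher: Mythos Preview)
Your construction correctly identifies the relevant columns of $\bar{k}$ as candidate Debye potentials, but there is a genuine gap in the argument: you conflate the wave operators $\sqone$ and $\sqtwo$ with the gauge-fixed Maxwell and linearized Einstein equations. The operator $\sqone$ is the \emph{vector wave equation}, obtained by adding a gauge-fixing term to Maxwell; a solution of $\sqone v=0$ satisfies Maxwell in Lorenz gauge only if, in addition, $T_1[v]=\grf^\mu v_\mu=0$. Likewise, $\sqtwo p=0$ gives the linearized Einstein equations in traceless de~Donder gauge only when both $T_2[p]=0$ and $\tr[p]=0$ hold. Your morphism $\bar{k}$ lands in solutions of $\sqone$ (or $\sqtwo$), but you never verify the gauge and trace constraints on its image, and these do \emph{not} follow from the equivalence diagram alone. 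Indeed, for a generic diagonal slot of $\tilde{E}$ (say one of the $\D_0$ slots in $\tilde{\sqone}_e$, or one of the $\D_0$ or $\D_1$ slots in $\tilde{\sqtwo}_e$) the image under $\bar{k}$ will typically violate the constraints; these slots correspond to the pure-gauge and constraint-violating sectors of the decoupling.

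The paper's proof fills exactly this gap. It singles out the specific slots whose columns in $\tilde{E}$ carry no off-diagonal entries (the $\D_1$ slot for $\tilde{\sqone}_{o,e}$ and the $\D_2$ slot for $\tilde{\sqtwo}_{o,e}$), so that a single Regge--Wheeler solution in that slot is already a solution of $\tilde{E}$. It then invokes the explicit composition identities computed in Sections~\ref{sec:vwod}--\ref{sec:lichev}, namely $k_{0}\circ T_{1e}\circ\bar{k}_{1e}\sim[0~0~1]$, $k_{1o}\circ T_{2o}\circ\bar{k}_{2o}\sim[0~0~1]$, $k_{1e}\circ T_{2e}\circ\bar{k}_{2e}$ with vanishing fourth column, and $\tr_e\circ\bar{k}_{2e}$ with vanishing fourth column. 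These show that $T$ (and $\tr$) annihilate precisely those columns of $\bar{k}$, hence the image is genuinely in harmonic (and traceless) gauge. Your homotopy argument for non-triviality is reasonable, but it cannot substitute for this verification of the constraints; you should add it.
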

\begin{proof}
From Theorem~\ref{thm:rw-equiv}, it is clear that the spin $s=1$
components decouple from the other components in $\tilde{\sqone}_o$ and
$\tilde{\sqone}_e$, so they satisfy independent Regge--Wheeler equations.
Hence, the corresponding columns of the $\bar{k}_{1o}$ and
$\bar{k}_{1e}$ operators, once promoted to spacetime differential
operators as in Corollary~\ref{cor:rw-equiv-spacetime}, are Debye
potentials for the vector wave equation $\sqone$. However, the
composition properties of $k_{1o}$, $k_{1e}$ with $T_{1o}$, $T_{2e}$ (as
computed in Sections~\ref{sec:vwod} and~\ref{sec:vwev}) imply that the
image of the Debye potential is annihilated by the spacetime operator
$T_1$. In other words, the Debye potentials actually generate solutions
of Maxwell equations in harmonic gauge. For linearized Einstein
equations, the discussion is completely analogous, starting with the
Debye potentials for the Lichnerowicz wave equation generated by
solutions of Regge--Wheeler equations of spin $s=2$. In that case, the
composition properties with $T_2$ and $\tr$ operators (as computed in
Sections~\ref{sec:lichod} and~\ref{sec:lichev}) imply that the image of
these Debye potentials is both harmonic and trace free.
\end{proof}

\section{Formal properties of differential equations and operators}\label{sec:formal}

\subsection{Equations and morphisms} \label{sec:eq-mor}

For the purposes of this work, a (\emph{partial}) \emph{differential operator},
say $E$, is always linear, with smooth coefficients, which we will write
as $E[u]$, where $u$ is a possibly vector valued function. Later on we
will even specialize to ordinary differential operators with rational
coefficients, but all abstract statements expressed in terms of operator
identities or commutative diagrams will be valid at the greater level of
generality. Compositions of differential operators will be written
$E_1\circ E_2$ or just $E_1 E_2$ when no confusion could arise.

An operator $E$ can always be thought of as a matrix of scalar
differential operators acting on the components of $u$. By a
\emph{scalar differential operator}, we mean an operator that takes
possibly vector valued functions into scalar valued functions
(corresponding to a matrix with a~single row). Scalars are real or
complex numbers (for the sake of generality we will generally allow
complex scalars). Differential operators could be of any order,
including order zero, which just corresponds to multiplication by some
matrix valued function. While the operators can be thought of as acting
on smooth (vector valued) functions, we will be mostly concerned with
composition identities among operators, and so we will not bother
specifying precisely the domain or codomain of each operator. This
information can always be deduced from the context (e.g.,\ the size of
the matrix). Also, for the abstract discussion below, it is also not
necessary to fix the number of independent variables, but in the rest of
this work we will be mostly concerned with applications to ordinary
differential operators (i.e.,\ acting on functions of a single
independent variable).

Quick example: the operator $E = r^{-2} \del_r r^2$ should be
interpreted as $E[u] = r^{-2} \big(\del_r \big(r^2 u\big)\big) = \del_r u + 2u/r$, so
alternatively it can be rewritten as $E = \del_r + 2/r$.

Below, we state some basic definitions and results concerning
differential equations and \emph{morphisms} between them, which essentially
correspond to differential operators that map solutions to solutions.
The presentation is logically self-contained and does not extend far
beyond what is needed in the rest of the paper. We generalize and
streamline slightly the presentation previously given
in~\cite[Section~2]{kh-vwtriang}, which can be consulted for a more
pedagogical exposition. For proper context, these ideas can be seen as
simple special cases of concepts coming from the more general frameworks of
$D$-modules~\cite[Section~10.5]{seiler}, the category of differential
equations~\cite[Section~VII.5]{vinogradov} or homological
algebra~\cite{weibel}. We will not delve into the precise connection
with these larger frameworks, but it is useful to mention that all
arrows/morphisms need to be reversed when our statements are interpreted
in the language of $D$-modules.

Central objects of our attention are \emph{differential equations}, like
$E[u] = 0$. Sometimes we will refer just to the operator $E$ as the
equation itself and vice versa. This should not lead to any confusion.
Technically, it becomes convenient to consider not just an equation $E^{(0)}$,
but also a set of \emph{Noether $($or compatibility$)$ identities} that
comes with it, namely an operator $E'$ such that $E' \circ E^{(0)} = 0$. One
can equally consider higher stage Noether identities, for example an
operator $E''$ such that $E'' \circ E' = 0$, etc. Such a sequence of
operators $E^{(0)}, E', E'', \ldots, E^{(n)}$, is also called a
\emph{complex} (a term from \emph{homological algebra}~\cite{weibel}),
but we might as well still refer to it as a differential equation,
albeit equipped with the extra data of Noether and higher stage Noether
identities. We get back to a notion of \emph{solutions} via the idea of
\emph{cohomology} $H^{(n)}(E) := \ker E^{(n)} / \im E^{(n-1)}$ of the
complex, which may be non-vanishing in any location and can be
interpreted as the space of solutions of $E^{(n)}[u] = 0$ modulo a gauge
symmetry generated by $u \sim u + E^{(n-1)}[v]$. The complex is said to
be \emph{exact} (in location $n$) if $H^{(n)}(E) = 0$. When $E^{(n-1)}
= 0$ or is absent, by definition $\im E^{(n-1)} = 0$, meaning that
$H^{(n)}(E) = \ker E^{(n)}$ coincides with the usual notion of
solutions. Note however that the precise spaces of solutions and
cohomologies depend on the choice of the function space on which our
differential operators act. The various transformations and
constructions involving complexes that we introduce below are designed
to preserve cohomologies, hence also solutions, once the function spaces
have been chosen, which is the main reason that we abstract from
solutions themselves and work only at the level of complexes of
differential operators.

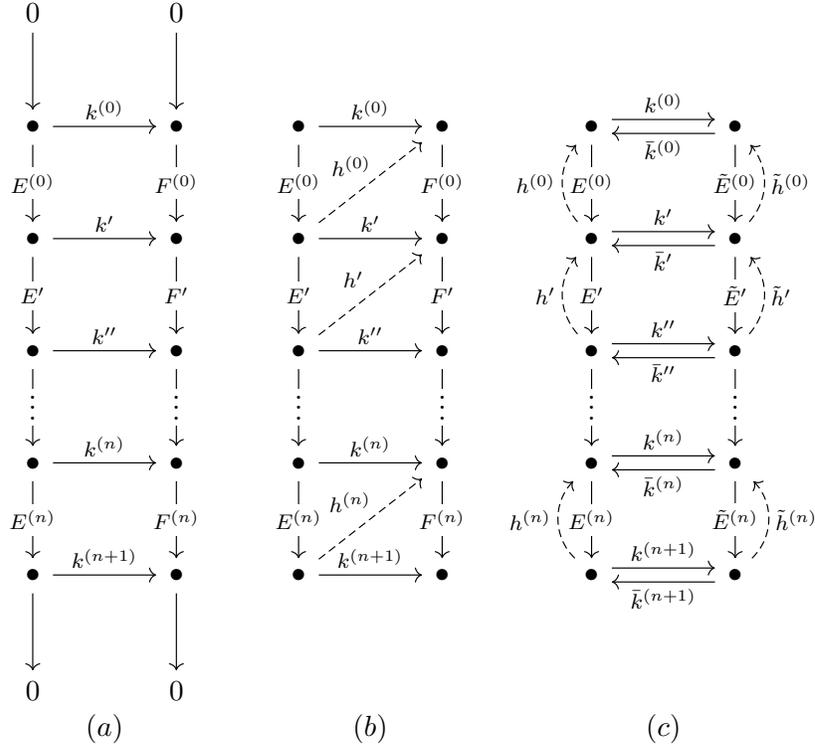
\begin{figure}
\begin{center}
\begin{tabular}{cccc}
 $\begin{tikzcd}[column sep=large,row sep=large]
 0 \ar{d} \& 0 \ar{d} \\
 \bullet \ar{d}[description]{E^{(0)}} \ar{r}{k^{(0)}} \&
 \bullet \ar{d}[description]{F^{(0)}} \\
 \bullet \ar{d}[description]{E'} \ar{r}{k'} \&
 \bullet \ar{d}[description]{F'} \\
 \bullet \ar{d}[description]{{\rvdots}} \ar{r}{k''} \&
 \bullet \ar{d}[description]{{\rvdots}} \\
 \bullet \ar{d}[description]{E^{(n)}} \ar{r}{k^{(n)}} \&
 \bullet \ar{d}[description]{F^{(n)}} \\
 \bullet \ar{d} \ar{r}{k^{(n+1)}} \&
 \bullet \ar{d} \\
 0 \& 0
 \end{tikzcd}$
 &&
 $\begin{tikzcd}[column sep=large,row sep=large]
 \bullet \ar{d}[description]{E^{(0)}} \ar{r}{k^{(0)}} \&
 \bullet \ar{d}[description]{F^{(0)}} \\
 \bullet \ar{d}[description]{E'} \ar{r}{k'} \ar[dashed]{ru}{h^{(0)}\!\!} \&
 \bullet \ar{d}[description]{F'} \\
 \bullet \ar{d}[description]{{\rvdots}} \ar{r}{k''} \ar[dashed]{ru}{h'} \&
 \bullet \ar{d}[description]{{\rvdots}} \\
 \bullet \ar{d}[description]{E^{(n)}} \ar{r}{k^{(n)}} \&
 \bullet \ar{d}[description]{F^{(n)}} \\
 \bullet \ar{r}{k^{(n+1)}} \ar[dashed]{ru}{h^{(n)}\!\!} \&
 \bullet
 \end{tikzcd}$
 &
 $\begin{tikzcd}[column sep=large,row sep=large]
 \bullet
 \ar[swap]{d}[description]{E^{(0)}}
 \ar[shift left]{r}{k^{(0)}}
 \&
 \bullet
 \ar[shift left]{l}{\bar{k}^{(0)}}
 \ar{d}[description]{\tilde{E}^{(0)}}
 \\
 \bullet
 \ar[shift left]{r}{k'}
 \ar[dashed,bend left=40]{u}{h^{(0)}}
 \ar{d}[description]{E'}
 \&
 \bullet
 \ar[shift left]{l}{\bar{k}'}
 \ar[dashed,bend right=40]{u}[swap]{\tilde{h}^{(0)}}
 \ar{d}[description]{\tilde{E}'}
 \\
 \bullet
 \ar[shift left]{r}{k''}
 \ar[dashed,bend left=40]{u}{h'}
 \ar{d}[description]{{\rvdots}}
 \&
 \bullet
 \ar[shift left]{l}{\bar{k}''}
 \ar[dashed,bend right=40]{u}[swap]{\tilde{h}'}
 \ar{d}[description]{\rvdots}
 \\
 \bullet
 \ar[shift left]{r}{k^{(n)}}
 \ar{d}[description]{E^{(n)}}
 \&
 \bullet
 \ar[shift left]{l}{\bar{k}^{(n)}}
 \ar{d}[description]{\tilde{E}^{(n)}}
 \\
 \bullet
 \ar[shift left]{r}{k^{(n+1)}}
 \ar[dashed,bend left=45]{u}{h^{(n)}}
 \&
 \bullet
 \ar[shift left]{l}{\bar{k}^{(n+1)}}
 \ar[dashed,bend right=45]{u}[swap]{\tilde{h}^{(n)}}
 \end{tikzcd}$
 \\
 ($a$) && ($b$) & ($c$)
\end{tabular}
\end{center}
\caption{Diagrams of morphisms, homotopies and equivalences (see text).}
 \label{fig:morph}
\end{figure}

The next most important concept is that of a \emph{morphism} between two
differential equations. Referring again to terminology from homological
algebra, a morphism for us will be a \emph{cochain map} between the
complexes corresponding to the two equations. In more detail, given two
complexes $E^{(0)},E',\ldots$ and $F^{(0)},F',\ldots$, a cochain map consists of
differential operators $k^{(0)},k',\ldots$ with domains and codomains
prescribed by the diagram in Figure~\ref{fig:morph}$(a)$,
which also needs to be \emph{commutative} (any sequence of operators
starting and ending at the same point must compose to the same thing).
We have denoted by bullets the appropriate function spaces, which are
less important to us than the structure and properties of the
differential operators acting between them. When needed, e.g.,\ for the
purposes of defining the cohomology or for defining a cochain map
between complexes of different lengths, any complex can be freely
extended by zero maps in both directions. This explains the extra zero
maps appended and prepended to the complexes in~Figure~\ref{fig:morph}$(a)$.
But since this can be done implicitly, we
will not show such extensions and rather use more compact diagrams like
the rest of Figure~\ref{fig:morph}.
When a complex consists of a
single operator $E$ ($E'=E''=\cdots = 0$, and we drop the now
unnecessary superscript $^{(0)}$), the cohomology has the
obvious identification $H(E) = \ker E / \im 0 = \ker E$ and $H'(E) =
\ker 0 / \im E = \coker E$. Once again, this highlights the
interpretation of a complex as a generalization of a linear differential
equation and of its cohomology as generalizing the space of solutions.
It is not hard to check that the defining properties of a morphism
(cochain map) $k, k', \ldots$ imply that it descends to well-defined
maps in cohomology $k^{(n)} \colon H^{(n)}(E) \to H^{(n)}(F)$. In the
simplest example of a~single operator, this means that $k\colon \ker E
\to \ker F$ maps solutions of $E[u] = 0$ to solutions of $F[v] = 0$
(also $k'\colon \coker E \to \coker F$ in this case). An
\emph{automorphism} is a morphism from a given equation to itself, in
which case the operator $k$ mapping solutions of the equation to
solutions is also known as a \emph{symmetry operator}.

A morphism is \emph{induced by a homotopy} if there exists a
\emph{homotopy}, that is, a sequence of operators $h^{(0)},h',\ldots$ (again,
extended by zero operators in either direction, as necessary) fitting
into the morphism diagram in Figure~\ref{fig:morph}$(b)$,
such that each horizontal arrow satisfies the formula
\begin{equation}
 k^{(n)} = h^{(n)} \circ E^{(n)} + F^{(n-1)}\circ h^{(n-1)},
\end{equation}
which is indeed a morphism. It is a straightforward exercise to check
that a morphism induced by a homotopy always descends to the zero map in
cohomology. Two morphisms $k_1^{(0)}, k'_1, \ldots$ and $k_2^{(0)}, k'_2, \ldots$
are \emph{equivalent up to homotopy} when the difference $\big(k_2^{(0)}-k_1^{(0)}\big),
(k'_2-k'_1), \ldots$ is induced by a homotopy. Two equations are
\emph{equivalent up to homotopy} if there exists morphisms between them
that are \emph{mutually inverse up to homotopy} (their compositions are
equivalent to the identity morphisms up to homotopy). All the relevant
operators can be exhibited in an \emph{equivalence diagram}, like
in Figure~\ref{fig:morph}$(c)$,
where the solid arrows commute and the homotopy corrections enter the
identities
\begin{gather}
\bar{k}^{(n)} \circ k^{(n)} = \id
- h^{(n)}\circ E^{(n)} - E^{(n-1)} \circ h^{(n-1)}, \nonumber
\\
k^{(n)} \circ \bar{k}^{(n)} = \id - \tilde{h}^{(n)}\circ \tilde{E}^{(n)} - \tilde{E}^{(n-1)} \circ \tilde{h}^{(n-1)}.\label{eq:homotopy-iso}
\end{gather}
Equivalence up to homotopy, both for operators and equations, of course
defines an equivalence relation.

\begin{rmk}
Unwinding the definition of a morphism $k$ between two single operator
complexes corresponding to the equations $E[v] = 0$ and $F[w] = 0$, we
find that when $E[v] = 0$ and $w=k^{(0)}[v]$, then $F[w] = F\big[k^{(0)}[v]\big]
= k^{(1)}[E[v]] = 0$. Hence, as promised, in this case, a~morphism
induces a map from solutions to solutions, which is a familiar notion in
PDE theory. Similarly, a symmetry (or automorphism) in our sense induces
a map from solutions to solutions of a given equation. Extending our
earlier example by an inverse morphism $\bar{k}$ up to homotopy implies
a bijection between the solutions of $E[v] = 0$ and $F[w] = 0$, since
then $\bar{k}^{(0)}\big[k^{(0)}[v]\big] = v - h^{(0)}[E[v]] = v$ and
$k^{(0)}\big[\bar{k}^{(0)}[w]\big] = w - \tilde{h}^{(0)}[F[w]] = w$. But even
more than that, an equivalence allows us to transfer the solutions of
inhomogeneous problems between equations, namely $E[v] = u$ is solved by
$v = \bar{k}^{(0)}[w] + h^{(0)}[u]$ if $w$ solves $F[w] = k^{(1)}[u]$,
which can be checked by direct calculation.

Essentially, our notion of morphism is the same as the usual PDE notion
of a map between solution spaces induced by a differential operator, but
equipped with extra structure. For some equations that are not in some
sense pathological, the extra operators $k^{(n)}$ may be reconstructed
(perhaps up to homotopy equivalence) from the knowledge of a single
operator $k^{(0)}$. Then, the above two notions actually coincide.
See~\cite{kh-vwtriang} for a bit more discussion on this point.
Considering subtle cases when differences between the two notions
actually arise is beyond the scope of this work. We will restrict
ourselves to using the more structured notion of morphism, which is
happens to be most convenient for our work.
\end{rmk}

\subsection{Triangular decoupling strategy} \label{sec:triang-strategy}

In this section, we outline our general decoupling strategy. We
basically summarize the more pedagogical discussion
from~\cite{kh-vwtriang}. There are two small differences. We use
slightly different nota\-tion, more adapted to the later needs of
Section~\ref{sec:schw}. Also, we take advantage of having defined, in
the preceding section, morphisms for complexes of possibly more than one
differential operators. Part of the discussion in~\cite{kh-vwtriang} was
unnecessarily awkward because we insisted on using morphisms only with
single differential operators (namely, the homotopy corrections on the
right-hand side of~\eqref{eq:homotopy-iso} did not have the same uniform
structure because of the truncation).

The input is a differential equation $E[u] = 0$, together with the
following morphisms
\begin{equation} \label{eq:input-ops}
 \begin{tikzcd}[column sep=2cm,row sep=2cm]
 \bullet
 \ar{d}[description]{\D_D}
 \ar{r}{D}
 \&
 \bullet
 \ar{d}[description]{E}
 \\
 \bullet
 \ar{r}[swap]{D'}
 \&
 \bullet
 \end{tikzcd},
 \qquad
 \begin{tikzcd}[column sep=2cm,row sep=2cm]
 \bullet
 \ar{d}[description]{E}
 \ar{r}{T}
 \&
 \bullet
 \ar{d}[description]{\D_T}
 \\
 \bullet
 \ar{r}[swap]{T'}
 \&
 \bullet
 \end{tikzcd},
 \qquad
 \begin{tikzcd}[column sep=2cm,row sep=2cm]
 \bullet
 \ar{d}[description]{\begin{bmatrix} E \\ T \end{bmatrix}}
 \ar{r}{\Phi}
 \&
 \bullet
 \ar{d}[description]{\D_\Phi}
 \\
 \bullet
 \ar{r}[swap]{\begin{bmatrix} \Phi' & -\Delta_{\Phi T} \end{bmatrix}}
 \&
 \bullet
 \end{tikzcd}
\end{equation}
from/to ``simpler'' equations $\D_D$, $\D_T$, $\D_\Phi$.
The end result is an equation in
block upper triangular form with $\D_D$, $\D_\Phi$ and $\D_T$ on the
diagonal, where both the diagonal blocks and the off-diagonal ones have
also been simplified as much as possible, keeping upper triangular
structure, as shown in~\eqref{eq:E-decoupled}.
For our purposes, we may refer to the degrees of freedom
captured by the $D$ operator as \emph{pure gauge modes}, those by the
$\Phi$ operator as \emph{gauge invariant modes}, and those by the $T$
operator as \emph{constraint violating modes}. This terminology will be
justified in how the strategy will be applied in Section~\ref{sec:schw}
to Maxwell and linearized Einstein equations in harmonic gauge (see also
the detailed discussion in~\cite{kh-vwtriang}). At the abstract level,
we have simply introduced convenient names to the degrees of freedom
corresponding to the 3 sectors of the eventual block upper triangular
decoupling $\bar{E}$ in~\eqref{eq:E-decoupled}.

The first non-trivial step is to find an equivalence diagram,
illustrated in~\eqref{eq:res-gauge-equiv}, between~$\D_D$ (the \emph{residual gauge equation}) and the joint system $E[u] =
0$, $\Phi[u] = 0$ (\emph{vanishing of gauge invariant modes}), $T[u] =
0$ (\emph{gauge fixing condition}). That is, we must complement the
input differential operators already introduced in~\eqref{eq:input-ops}
by a bunch of new operators that are defined by their relation to the
following diagram being an equivalence up to homotopy:
\begin{subequations}
\begin{equation} \label{eq:res-gauge-equiv}
 \scalebox{0.9}{\begin{tikzcd}[column sep=4cm,row sep=3cm]
 \bullet
 \ar{d}[description]{\begin{bmatrix} E \\ \Phi \\ T \end{bmatrix}}
 \ar[shift left]{r}{\bar{D}}
 \&
 \bullet
 \ar{d}[description]{\D_D}
 \ar[shift left]{l}{D}
 \\
 \bullet
 \ar{d}[description]{\begin{bmatrix}
 -\Phi' & \D_\Phi & \Delta_{\Phi T} \\
 -T' & 0 & \D_T \end{bmatrix}\hspace{-3em}}
 \ar[shift right]{r}[swap]{\begin{bmatrix}
 \bar{D}' & -\Delta_{D \Phi} & -\Delta_{D T} \end{bmatrix}}
 \ar[dashed,bend left]{u}{\begin{bmatrix}
 h_E & \bar{\Phi} & \bar{T} \end{bmatrix}}
 \&
 \bullet
 \ar{d}[description]{0}
 \ar[shift right]{l}[swap]{\begin{bmatrix} D' \\ H_\Phi \\ H_T \end{bmatrix}}
 \ar[dashed,bend right]{u}[swap]{h_D}
 \\
 \bullet
 \ar[shift right]{r}[swap]{0}
 \ar[dashed,bend left=60]{u}{\begin{bmatrix}
 -\bar{\Phi}' & -\bar{T}' \\
 h_{\Phi \Phi} & h_{\Phi T} \\
 h_{T \Phi} & h_{T T} \end{bmatrix}}
 \&
 \bullet
 \ar[shift right]{l}[swap]{0}
 \end{tikzcd}}
\end{equation}
supplemented by two more operators $\bar{H}_\Phi$, $\bar{H}_T$ satisfying
the relation
\begin{equation} \label{eq:homotopy-compat}
\begin{aligned}
&\bar{D} \begin{bmatrix} h_E & \bar{\Phi} & \bar{T} \end{bmatrix}
 - h_D \begin{bmatrix}
 \bar{D}' & -\Delta_{D \Phi} & -\Delta_{D T} \end{bmatrix}
= \begin{bmatrix} \bar{H}_\Phi & \bar{H}_T \end{bmatrix}
 \begin{bmatrix}
 -\Phi' & \D_\Phi & \Delta_{\Phi T} \\
 -T' & 0 & \D_T \end{bmatrix}\!,
\\
&\begin{bmatrix} \bar{D}' & -\Delta_{D \Phi} & -\Delta_{D T} \end{bmatrix}
 \begin{bmatrix}
 -\bar{\Phi}' & -\bar{T}' \\
 h_{\Phi \Phi} & h_{\Phi T} \\
 h_{T \Phi} & h_{T T}
 \end{bmatrix}
= \D_D \begin{bmatrix} \bar{H}_\Phi & \bar{H}_T \end{bmatrix} \!.
\end{aligned}
\end{equation}
\end{subequations}
If such a diagram does not exist, than our strategy cannot proceed
directly. In this work, we will only consider cases where this step of
the strategy succeeds.
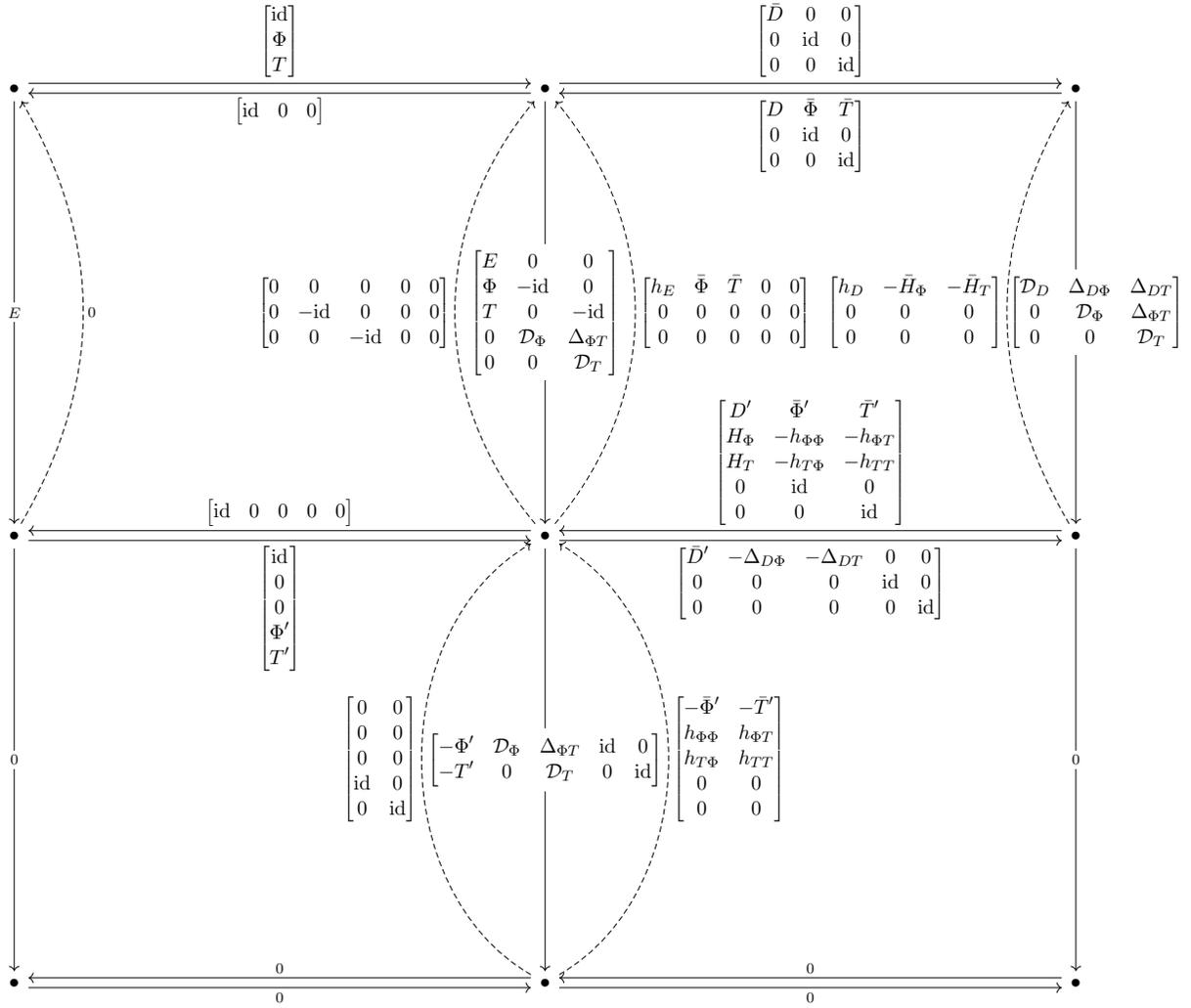
\begin{figure}[t!]
\centering
$\hspace*{-5mm}\scalebox{0.72}{%
\begin{tikzcd}[column sep=9.5cm,row sep=8cm]
 \bullet
 \ar{d}[description]{E}
 \ar[shift left]{r}{\begin{bmatrix} \id \\ \Phi \\ T \end{bmatrix}}
 \&
 \bullet
 \ar{d}[description]{\begin{bmatrix}
 E & 0 & 0 \\
 \Phi & -\id & 0 \\
 T & 0 & -\id \\
 0 & \D_\Phi & \Delta_{\Phi T} \\
 0 & 0 & \D_T \end{bmatrix}}
 \ar[shift left]{l}{\begin{bmatrix} \id & 0 & 0 \end{bmatrix}}
 \ar[shift left]{r}{\begin{bmatrix}
 \bar{D} & 0 & 0 \\
 0 & \id & 0 \\
 0 & 0 & \id \end{bmatrix}}
 \&
 \bullet
 \ar{d}[description]{\begin{bmatrix}
 \D_D & \Delta_{D \Phi} & \Delta_{D T} \\
 0 & \D_\Phi & \Delta_{\Phi T} \\
 0 & 0 & \D_T \end{bmatrix}\hspace{-2em}}
 \ar[shift left]{l}{\begin{bmatrix}
 D & \bar{\Phi} & \bar{T} \\
 0 & \id & 0 \\
 0 & 0 & \id \end{bmatrix}}
 \\
 \bullet
 \ar{d}[description]{0}
 \ar[shift right]{r}[swap]{\begin{bmatrix} \id \\ 0 \\ 0 \\ \Phi' \\ T' \end{bmatrix}}
 \ar[dashed,bend right]{u}[swap]{0}
 \&
 \bullet
 \ar[shift right]{l}[swap]{\begin{bmatrix} \id & 0 & 0 & 0 & 0 \end{bmatrix}}
 \ar{d}[description]{\begin{bmatrix}
 -\Phi' & \D_\Phi & \Delta_{\Phi T} & \id & 0 \\
 -T' & 0 & \D_T & 0 & \id \end{bmatrix}}
 \ar[shift right]{r}[swap]{\begin{bmatrix}
 \bar{D}' & -\Delta_{D \Phi} & -\Delta_{D T} & 0 & 0 \\
 0 & 0 & 0 & \id & 0 \\
 0 & 0 & 0 & 0 & \id \end{bmatrix}}
 \ar[dashed,bend left=40]{u}{\begin{bmatrix}
 0 & 0 & 0 & 0 & 0 \\
 0 & -\id & 0 & 0 & 0 \\
 0 & 0 & -\id & 0 & 0 \end{bmatrix}}
 \ar[dashed,bend right=40]{u}[swap]{\begin{bmatrix}
 h_E & \bar{\Phi} & \bar{T} & 0 & 0 \\
 0 & 0 & 0 & 0 & 0 \\
 0 & 0 & 0 & 0 & 0
 \end{bmatrix}}
 \&
 \bullet
 \ar{d}[description]{0}
 \ar[shift right]{l}[swap]{\begin{bmatrix}
 D' & \bar{\Phi}' & \bar{T}' \\
 H_\Phi & -h_{\Phi \Phi} & -h_{\Phi T} \\
 H_T & -h_{T \Phi} & -h_{T T} \\
 0 & \id & 0 \\
 0 & 0 & \id \end{bmatrix}}
 \ar[dashed,bend left]{u}{\begin{bmatrix}
 h_D & -\bar{H}_\Phi & -\bar{H}_T \\
 0 & 0 & 0 \\
 0 & 0 & 0 \end{bmatrix}}
 \\
 \bullet
 \ar[shift right]{r}[swap]{0}
 \&
 \bullet
 \ar[shift right]{r}[swap]{0}
 \ar[shift right]{l}[swap]{0}
 \ar[dashed,bend left=60]{u}{\begin{bmatrix}
 0 & 0 \\
 0 & 0 \\
 0 & 0 \\
 \id & 0 \\
 0 & \id \end{bmatrix}}
 \ar[dashed,bend right=60]{u}[swap]{\begin{bmatrix}
 -\bar{\Phi}' & -\bar{T}' \\
 h_{\Phi \Phi} & h_{\Phi T} \\
 h_{T \Phi} & h_{T T} \\
 0 & 0 \\
 0 & 0 \end{bmatrix}}
 \&
 \bullet
 \ar[shift right]{l}[swap]{0}
 \end{tikzcd}}
$
\caption{Concatenation of diagrams of morphisms showing the equivalence
 between $E[u] = 0$ and a~system of equations in block upper triangular
 form.\label{fig:big-equiv}}
\end{figure}

\begin{rmk}
There are a few potential points of failure, which should all be
checked.

First, the gauge-invariant and constraint violating modes should be
compatible and independent from the pure gauge modes, namely
$\left[\begin{smallmatrix} \Phi \\ T \end{smallmatrix}\right] \circ D
\sim 0$ up to homotopy. This condition is captured by the following
subset of identities from~\eqref{eq:res-gauge-equiv}:
\begin{equation}
 \begin{bmatrix} \Phi \\ T \end{bmatrix} D
 = \begin{bmatrix} H_\Phi \\ H_T \end{bmatrix} \D_D,
 \qquad
 \begin{bmatrix} \Phi' \\ T' \end{bmatrix} D'
 = \begin{bmatrix} \D_\Phi & \Delta_{\Phi T} \\ 0 & \D_T \end{bmatrix}
 \begin{bmatrix} H_\Phi \\ H_T \end{bmatrix} \D_D.
\end{equation}
Strictly speaking, to justify the adjective \emph{gauge-invariant}, we
should have $\Phi \circ D = 0$ ($H_T=0$), but here we are allowing
$\Phi$ to mix with $E$ and $T$, that is, terms that vanish on-shell or
by the gauge-fixing condition. In practive, within this work we will
always be able to make an initial choice of $\Phi$ such that $H_T=0$,
and then update it to fit our needs while maintaining the above
relations.

Note that we have chosen to represent the $E$-$\Phi$-$T$ equation in
diagram~\eqref{eq:res-gauge-equiv} by a complex of length 2, with the
second operator representing Noether identities for the first one. If we
do not choose this complex correctly, then the construction of the
equivalence diagram may be obstructed, for instance when some Noether
identity for the first operator is not a consequence of the rows of the
second operator. When such obstructions are absent, the existence of
operators $\bar{H}_\Phi$, $\bar{H}_T$ in~\eqref{eq:homotopy-compat}
actually follows from the relations already exhibited
by~\eqref{eq:res-gauge-equiv}. In this work, we will only encounter
cases where both sides of the equivalence
diagram~\eqref{eq:res-gauge-equiv} are correctly chosen, but to save
space we simply postulate the factorizations~\eqref{eq:homotopy-compat}
separately.

Finally, another point of failure could be that the $T[u]=0$ and
$\Phi[u]=0$ conditions are simply not strong enough to reduce the number
of remaining degrees of freedom to equal that of the $\D_D$ equation.
Again, in this work we will only encounter cases where this step
succeeds.
\end{rmk}

\looseness=-1
Next, using only operators that have already been introduced above,
together with their defining properties, we can construct two more
equivalence diagrams that are concatenated and explicitly shown in
Figure~\ref{fig:big-equiv}.
Composing the morphisms in Figure~\ref{fig:big-equiv}, in both
directions, we obtain the much more compact equivalence with an upper
triangular form, which we denote by~$\tilde{E}$,
\begin{equation} \label{eq:E-decoupled}
\scalebox{0.87}{\begin{tikzcd}[column sep=4cm,row sep=3.5cm]
 \bullet
 \ar{d}[description]{E}
 \ar[shift left]{r}{\begin{bmatrix} \bar{D} \\ \Phi \\ T \end{bmatrix}}
 \&
 \bullet
 \ar{d}[description]{\hspace{-5em}\tilde{E} = \begin{bmatrix}
 \D_D & \Delta_{D \Phi} & \Delta_{D T} \\
 0 & \D_\Phi & \Delta_{\Phi T} \\
 0 & 0 & \D_T \end{bmatrix}}
 \ar[shift left]{l}{\begin{bmatrix} D & \bar{\Phi} & \bar{T} \end{bmatrix}}
 \\
 \bullet
 \ar[shift right]{r}[swap]{\begin{bmatrix} \bar{D}' \\ \Phi' \\ T' \end{bmatrix}}
 \ar[dashed,bend left]{u}{h_E}
 \&
 \bullet
 \ar[shift right]{l}[swap]{\begin{bmatrix} D' & \bar{\Phi}' & \bar{T}' \end{bmatrix}}
 \ar[dashed,bend right=50]{u}[swap]{\begin{bmatrix}
 h_D & -\bar{H}_\Phi & -\bar{H}_T \\
 -H_\Phi & h_{\Phi \Phi} & h_{\Phi T} \\
 -H_T & h_{T \Phi} & h_{T T} \end{bmatrix}.}
 \end{tikzcd}}
\end{equation}
Conversely, knowing the final equivalence
diagram~\eqref{eq:E-decoupled} of $E$ with the decoupled form
$\tilde{E}$, simply by placing individual operators in the appropriate
places, we can reproduce the equivalence~\eqref{eq:res-gauge-equiv} of
the $E$-$\Phi$-$T$ system with $\D_D$ and the
relations~\eqref{eq:homotopy-compat}

The triangular system $\tilde{E}$ may still be rather complicated. So
the strategy proceeds with two kinds of simplifications: simplifying the
diagonal blocks, and simplifying the off-diagonal blocks (cf.~\cite[Sections
2.2--3]{kh-vwtriang}). The first kind is summarized in the following
\begin{lem} \label{lem:diag-simp}
If there exists an equivalence diagram between equations $E_2$ and
$\bar{E}_2$, then it implies the following equivalence diagram for an upper
triangular system, where $E_2$ is the middle block:
\begin{gather}
\scalebox{1.2}{\begin{tikzcd}[column sep=huge,row sep=huge]
 \bullet
 \ar[swap]{d}[description]{E_2}
 \ar[shift left]{r}{k}
 \&
 \bullet
 \ar[shift left]{l}{\bar{k}}
 \ar{d}[description]{\bar{E}_2}
 \\
 \bullet
 \ar[shift left]{r}{k'}
 \ar[dashed,bend left=40]{u}{h}
 \&
 \bullet
 \ar[shift left]{l}{\bar{k}'}
 \ar[dashed,bend right=40]{u}[swap]{\bar{h}}
\end{tikzcd}}
 \\ \qquad
{}\implies\scalebox{.95}{\begin{tikzcd}[column sep=5cm,row sep=5cm]
 \bullet
 \ar[swap]{d}[description]{\begin{bmatrix}
 E_1 & \Delta_{12} & \Delta_{13} \\
 0 & E_2 & \Delta_{23} \\
 0 & 0 & E_3
 \end{bmatrix}}
 \ar[shift left]{r}{\begin{bmatrix}
 \id & 0 & 0 \\ 0 & k & 0 \\ 0 & 0 & \id
 \end{bmatrix}}
 \&
 \bullet
 \ar[shift left]{l}{\begin{bmatrix}
 \id & 0 & 0 \\ 0 & \bar{k} & -h\Delta_{23} \\ 0 & 0 & \id
 \end{bmatrix}}
 \ar{d}[description]{\begin{bmatrix}
 E_1 & \bar{\Delta}_{12} & \bar{\Delta}_{13} \\
 0 & \bar{E}_2 & \bar{\Delta}_{23} \\
 0 & 0 & E_3
 \end{bmatrix}}
 \\
 \bullet
 \ar[shift left]{r}{\begin{bmatrix}
 \id & -\Delta_{12} h & 0 \\ 0 & k' & 0 \\ 0 & 0 & \id
 \end{bmatrix}}
 \ar[dashed,bend left=60]{u}{\begin{bmatrix}
 0 & 0 & 0 \\ 0 & h & 0 \\ 0 & 0 & 0
 \end{bmatrix}}
 \&
 \bullet
 \ar[shift left]{l}{\begin{bmatrix}
 \id & 0 & 0 \\ 0 & \bar{k}' & 0 \\ 0 & 0 & \id
 \end{bmatrix}}
 \ar[dashed,bend right=60]{u}[swap]{\begin{bmatrix}
 0 & 0 & 0 \\ 0 & \bar{h} & 0 \\ 0 & 0 & 0
 \end{bmatrix},}
\end{tikzcd}}
\end{gather}
where $\bar{\Delta}_{12} = \Delta_{12} \bar{k}$, $\bar{\Delta}_{23} = k'
\Delta_{23}$ and $\bar{\Delta}_{13} = \Delta_{13} - \Delta_{12} h
\Delta_{23}$, provided the extra identity $k h = \bar{h} k'$ also holds
$($though it is only needed to verify the form of the barred homotopy
operator on the right$)$.
\end{lem}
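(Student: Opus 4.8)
The plan is a direct verification at the level of block operators. Write $\mathbf{E}$ and $\bar{\mathbf{E}}$ for the two $3\times 3$ upper-triangular operators appearing on the left and right of the asserted implication, and write $K,\bar K,K',\bar K',H,\bar H$ for the six block operators labelling, respectively, the two upper horizontal arrows, the two lower horizontal arrows, and the two dashed homotopies. By the definition of equivalence in Section~\ref{sec:eq-mor} (the identities~\eqref{eq:homotopy-iso} specialised to single-operator, i.e.\ two-term, complexes) it suffices to check the two commuting squares $K'\mathbf{E}=\bar{\mathbf{E}}K$ and $\bar K'\bar{\mathbf{E}}=\mathbf{E}\bar K$, together with the four homotopy identities $\bar KK=\id-H\mathbf{E}$, $\bar K'K'=\id-\mathbf{E}H$, $K\bar K=\id-\bar H\bar{\mathbf{E}}$ and $K'\bar K'=\id-\bar{\mathbf{E}}\bar H$. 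Each of these is a $3\times3$ block identity, and I would expand and match it entry by entry, feeding in as the only inputs the six equivalence relations for the pair $E_2\leftrightarrow\bar E_2$ --- namely $k'E_2=\bar E_2k$, $\bar k'\bar E_2=E_2\bar k$, $\bar kk=\id-hE_2$, $\bar k'k'=\id-E_2h$, $k\bar k=\id-\bar h\bar E_2$, $k'\bar k'=\id-\bar E_2\bar h$ --- and the definitions $\bar\Delta_{12}=\Delta_{12}\bar k$, $\bar\Delta_{23}=k'\Delta_{23}$, $\bar\Delta_{13}=\Delta_{13}-\Delta_{12}h\Delta_{23}$.

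Most entries are immediate. In the first square the only work is at positions $(1,2)$ and $(1,3)$: the $(1,2)$ entry of $K'\mathbf{E}$ is $\Delta_{12}-\Delta_{12}hE_2=\Delta_{12}(\id-hE_2)=\Delta_{12}\bar kk=\bar\Delta_{12}k$ using $\bar kk=\id-hE_2$, while the $(1,3)$ entry is $\Delta_{13}-\Delta_{12}h\Delta_{23}=\bar\Delta_{13}$ by definition; the $(2,\cdot)$ row matches via $k'E_2=\bar E_2k$ and $\bar\Delta_{23}=k'\Delta_{23}$. The second square is symmetric, its $(2,3)$ entry using $\bar k'\bar\Delta_{23}=\bar k'k'\Delta_{23}=(\id-E_2h)\Delta_{23}$, i.e.\ $\bar k'k'=\id-E_2h$. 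The identities $\bar KK=\id-H\mathbf{E}$ and $\bar K'K'=\id-\mathbf{E}H$ reduce on the $(2,2)$ block to exactly $\bar kk=\id-hE_2$ and $\bar k'k'=\id-E_2h$, while their off-diagonal $(2,3)$ and $(1,2)$ entries equal $-h\Delta_{23}$ and $-\Delta_{12}h$ on both sides by construction; this is precisely what forced the chosen off-diagonal shapes of $\bar K$ and $K'$ and the form of $\bar\Delta_{13}$.

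The one delicate point --- and the only place where the extra hypothesis $kh=\bar hk'$ is used --- is obtaining $K\bar K=\id-\bar H\bar{\mathbf{E}}$ and $K'\bar K'=\id-\bar{\mathbf{E}}\bar H$ with the \emph{same} block-diagonal $\bar H=\diag(0,\bar h,0)$. Their $(2,2)$ entries are $k\bar k=\id-\bar h\bar E_2$ and $k'\bar k'=\id-\bar E_2\bar h$. The $(2,3)$ entry of $K\bar K$ is $-kh\Delta_{23}$, which must equal $-\bar h\bar\Delta_{23}=-\bar hk'\Delta_{23}$: this is exactly $kh=\bar hk'$. The $(1,2)$ entry of $K'\bar K'$ is $-\Delta_{12}h\bar k'$, which must equal $-\bar\Delta_{12}\bar h=-\Delta_{12}\bar k\bar h$; for this I would first derive the operator identity $h\bar k'=\bar k\bar h$ from the inputs, by left-multiplying $kh=\bar hk'$ by $\bar k$ and using $\bar kk=\id-hE_2$ to get $h-hE_2h=\bar k\bar hk'$, then right-multiplying by $\bar k'$ and using $k'\bar k'=\id-\bar E_2\bar h$, and finally collapsing the resulting term $\bar k\bar h\bar E_2\bar h$ to $hE_2h\bar k'$ through the short chain of substitutions $\bar h\bar E_2=\id-k\bar k$, $\bar kk=\id-hE_2$, $E_2\bar k=\bar k'\bar E_2$, $\bar E_2\bar h=\id-k'\bar k'$, $\bar k'k'=\id-E_2h$, after which the two correction terms cancel and $h\bar k'=\bar k\bar h$ follows. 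All remaining entries are either identically zero or the unchanged corners $E_1$, $E_3$. I expect the bookkeeping --- keeping track of which input relation to insert into which entry, and the closed substitution chain establishing $h\bar k'=\bar k\bar h$ --- to be the only real obstacle; there is no conceptual difficulty, since the off-diagonal shapes of $\bar K$, $K'$ and $\bar\Delta_{13}$ were reverse-engineered precisely so that the homotopy corrections land where they must. Finally, I would remark that with $kh=\bar hk'$ one also obtains the block-level compatibility $KH=\bar HK'$, matching the analogous hypothesis needed when this lemma is applied recursively.
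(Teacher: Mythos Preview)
Your proposal is correct and follows exactly the approach the paper takes: the paper's own proof consists of the single sentence ``The proof is by direct calculation,'' and you have simply carried out that calculation in full detail. Your derivation of the auxiliary identity $h\bar{k}'=\bar{k}\bar{h}$ from the six equivalence relations together with $kh=\bar{h}k'$ is a genuine contribution beyond what the paper writes out, and it checks (the substitution chain collapses as you describe).
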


The proof is by direct calculation. Though, we should note that when $E_2$
has no non-trivial Noether identities (for example, in the ODE case,
when it can be solved for all highest derivatives), then the extra
identity on the homotopy operators is automatic. Indeed, then
\begin{equation}
 \big(k h - \bar{h} k'\big) E_2
 = \big(k - h\bar{E}_2 k\big) - (k - k h E_2)
 = \big(k \bar{k}\big) k - k \big(\bar{k} k\big) = 0,
\end{equation}
and since $E_2$ has no non-trivial Noether identities, we must have $kh -
\bar{h} k' = 0$.

Note that Lemma~\ref{lem:diag-simp} remains valid when either of the
untransformed diagonal blocks $E_1$ or~$E_3$ is zero dimensional (a
$0\times 0$ matrix, or just absent, in other words).

The second kind of simplification can be achieved by applying the next
\begin{lem} \label{lem:offdiag-simp}
For any operators $\delta$ and $\delta'$ the following equivalence
diagram holds:
\begin{equation}
\scalebox{0.9}{%
\begin{tikzcd}[column sep=3.5cm,row sep=3.5cm]
 \bullet
 \ar[swap]{d}[description]{\begin{bmatrix}
 E_1 & \Delta \\
 0 & E_2
 \end{bmatrix}}
 \ar[shift left]{r}{\begin{bmatrix}
 \id & \delta \\ 0 & \id
 \end{bmatrix}}
 \&
 \bullet
 \ar[shift left]{l}{\begin{bmatrix}
 \id & -\delta \\ 0 & \id
 \end{bmatrix}}
 \ar{d}[description]{\begin{bmatrix}
 E_1 & \bar{\Delta} \\
 0 & E_2
 \end{bmatrix}}
 \\
 \bullet
 \ar[shift left]{r}{\begin{bmatrix}
 \id & \delta' \\ 0 & \id
 \end{bmatrix}}
 \ar[dashed,bend left=50]{u}{\begin{bmatrix}
 0 & 0 \\ 0 & 0
 \end{bmatrix} }
 \&
 \bullet
 \ar[shift left]{l}{\begin{bmatrix}
 \id & -\delta' \\ 0 & \id
 \end{bmatrix}}
 \ar[dashed,bend right=50]{u}[swap]{\begin{bmatrix}
 0 & 0 \\ 0 & 0
 \end{bmatrix},}
\end{tikzcd}
}
\end{equation}
where $\bar{\Delta} = \Delta - E_1\delta + \delta' E_2$.
\end{lem}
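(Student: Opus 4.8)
The plan is to prove Lemma~\ref{lem:offdiag-simp} by direct computation: one takes the four corner operators and the two (vanishing) homotopies exactly as displayed and checks that together they satisfy the defining identities of an equivalence up to homotopy, namely commutativity of the solid squares (cf.\ Figure~\ref{fig:morph}$(c)$) and the homotopy identities~\eqref{eq:homotopy-iso}. Here the diagram involves only single-operator complexes (two bullets per column), so there are no Noether-identity rows and no higher-stage compatibility conditions to worry about.

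First I would verify that the two upper-triangular operators with off-diagonal entries $\delta$ (on the domain side) and $\delta'$ (on the codomain side) constitute a morphism from the unbarred triangular operator to the barred one. Expanding the $2\times 2$ block products gives
\begin{equation*}
 \begin{bmatrix}\id & \delta'\\ 0 & \id\end{bmatrix}\begin{bmatrix}E_1 & \Delta\\ 0 & E_2\end{bmatrix}
 = \begin{bmatrix}E_1 & \Delta + \delta' E_2\\ 0 & E_2\end{bmatrix}
 = \begin{bmatrix}E_1 & E_1\delta + \bar{\Delta}\\ 0 & E_2\end{bmatrix}
 = \begin{bmatrix}E_1 & \bar{\Delta}\\ 0 & E_2\end{bmatrix}\begin{bmatrix}\id & \delta\\ 0 & \id\end{bmatrix},
\end{equation*}
where the middle equality is precisely the definition $\bar{\Delta} = \Delta - E_1\delta + \delta' E_2$. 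Replacing $\delta,\delta'$ by $-\delta,-\delta'$, the same computation shows that the operators on the lower row of the diagram give a morphism in the reverse direction; hence all four solid squares commute by construction.

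Next I would check the relations~\eqref{eq:homotopy-iso}. Since the prescribed homotopies vanish, these reduce to the statements that the forward and backward horizontal operators are genuine two-sided inverses at both the domain and codomain levels, which is immediate: the unipotent upper-triangular operator matrices with off-diagonal $\delta$ and $-\delta$ multiply to the identity in either order, and likewise for the pair $\delta'$, $-\delta'$. Every remaining compatibility condition in the equivalence diagram involves only the zero homotopies and is therefore vacuously satisfied.

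There is no real obstacle here --- the verification is pure bookkeeping. The one place requiring care is the placement of $\delta$ and $\delta'$ dictated by the conventions of Section~\ref{sec:eq-mor}: $\delta$ acts on the domain of the triangular operator, so it enters $\bar{\Delta}$ on the left as $E_1\delta$, whereas $\delta'$ acts on its codomain and enters on the right as $\delta' E_2$; and the backward morphisms must carry the opposite signs $-\delta$, $-\delta'$ so that the off-diagonal corrections cancel exactly, which is what makes the homotopies genuinely zero rather than merely homotopic to zero.
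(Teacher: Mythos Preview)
Your proposal is correct and follows exactly the approach the paper takes: the paper's own proof consists of the single sentence ``The proof is again by direct calculation,'' and your verification of the morphism identity $\bar{\Delta} = \Delta - E_1\delta + \delta' E_2$ together with the observation that the unipotent matrices are exact inverses (making the zero homotopies work) is precisely that calculation spelled out in full.
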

The proof is again by direct calculation. This Lemma implies that by a
clever choice of $\delta$ and $\delta'$ it might be possible to set
$\bar{\Delta} = 0$, thus decoupling the two diagonal blocks, or at least
set it some preferred canonical form. For instance, in the case when
$E_1$ and $E_2$ are ODEs that can be solved for their higher
derivatives, say respectively of orders $n_1$ and $n_2$, then by an
obvious choice of $\delta$ and $\delta'$ the off-diagonal term
$\bar{\Delta}$ can be reduced to order $\min(n_1,n_2)$.

Our decoupling strategy proceeds by applying the last two lemmas
recursively, repartitioning our triangular equation into blocks as
necessary, until no further simplification is possible or practical. In
the successful applications of this strategy in Section~\ref{sec:schw}
we stop when the system~$\tilde{E}$ has been reduced to fully upper
triangular form, where each operator on the diagonal is a \emph{spin-$s$
Regge--Wheeler} operator (Section~\ref{sec:rw}), while off-diagonal terms
are mostly zero, with those that are non-vanishing expressed in terms of
a small number of canonical representatives. We dedicate the following
section (Section~\ref{sec:rw}) to giving an effective method of deciding
when Lemma~\ref{lem:offdiag-simp} can be used to cancel (or at least
``maximally simplify'') specific off-diagonal terms in a triangular
system with Regge--Wheeler operators on the diagonal.

\subsection{Formal adjoints} \label{sec:adjoint}

Given a local sesquilinear scalar paring $\langle -, -\rangle =
\int\d{x}\, \langle -, -\rangle_x$, with $\langle -, -\rangle_x$ the
corresponding pointwise non-degenerate pairing, the \emph{formal
adjoint} $E^*$ of a differential operator $E$ is defined by the usual
relation $\langle E^*[v], u \rangle = \langle v, E[u] \rangle$ for
smooth compactly supported test functions~$u$ and~$v$. If an operator
satisfies $E^* = E$, then it is \emph{(formally) self-adjoint}. As
defined, formal adjoints always exist and are unique. In local
coordinates they may be computed by the standard rules $(E F)^* = F^*
E^*$, $\del_{x^i}^* = -\del_{x^i}$ and $f(x)^*$ for a multiplicative
matrix operator coinciding with its usual pointwise adjoint with respect
to $\langle -, -\rangle_x$.

On a manifold with a metric, for tensor fields $v$ and $u$, we
define the paring $\langle v, u\rangle$ to be the integral with respect
to the metric volume form of the metric contraction of $v^*$ (the
complex conjugate of $v$) and $u$. We use this convention in
Section~\ref{sec:schw}, both for the spacetime metric $\gf_{\mu\nu}$ on
$\M = \mathbb{R}^2 \times S^2$ and for the metric $g_{ab}$ on the
radio-temporal $\mathbb{R}^2$-factor.

In Sections~\ref{sec:rw} and~\ref{sec:schw}, we will also be dealing
with ordinary differential operators with respect to the Schwarzschild
radial variable $r$. In that context, we take the pairing $\langle v, u
\rangle = \int_{2M}^\oo v^* u \, \d{r}$, where $u$ is a column vector
valued function and $v^*$ is the conjugate transpose of a column vector
valued function.

When the operators depend on spectral parameters, like $\omega$ and $l$,
they are considered to be real, so that $\omega^* = \omega$ and $l^* =
l$. However, that does not prevent us from considering complex values of
these parameters once all the adjoints have been computed.

\section{Morphisms and extensions of Regge--Wheeler equations}\label{sec:rw}

In this section, we present the theory needed for simplifying the
off-diagonal elements of rational ODE systems in upper triangular form. We first
concisely summarize some necessary notions and results
from~\cite{kh-rwtriang} (where a more detailed and pedagogical treatment
can be found) and then extend them. In this section, we will work
exclusively with rational differential operators.

The following functions will appear in the Schwarzschild metric and its
derivatives:
\begin{equation}
 f(r) = 1 - \frac{2M}{r}, \qquad
 f_1(r) = r\del_r \frac{2M}{r}.
\end{equation}
Let also $l$, $s$ and $\omega$ be real (or more generally complex)
constants, with $\B_l = l(l+1)$ a convenient notation. Throughout this
work, we use only $\del_r$ to denote differentiation, while primes are
used only as decorations (to distinguish two objects from each
other, like to operators $k$ and $k'$).

Define the \emph{$($generalized$)$ spin-$s$ Regge--Wheeler operator} with
\emph{mass parameter} $M$, \emph{angular momentum quantum number} $l$
and \emph{frequency} $\omega$ by (following the notation conventions of
Section~\ref{sec:eq-mor})
\begin{equation} \label{eq:rw}
 \D_s \phi = \del_r f \del_r \phi
 - \frac{1}{r^2} \big[\B_l + \big(1-s^2\big)f_1\big] \phi + \frac{\omega^2}{f} \phi.
\end{equation}
We will usually assume that $l=0,1,2,\ldots$, that $s$ is a non-negative
integer (in fact, only $s=0,1,2$ will concern us), and that $\omega \ne
0$. Of course, $\D_s$ has rational coefficients in $r$ and all of these
parameters.

Consider the following equivalence of upper triangular Regge--Wheeler
systems:
\begin{equation} \label{eq:rw-sys}
\begin{tikzcd}[column sep=huge,row sep=huge]
 \bullet
 \ar[swap]{d}{\begin{bmatrix} \D_{s_0} & \Delta \\ 0 & \D_{s_1} \end{bmatrix}}
 \ar{r}{\begin{bmatrix} \id & \delta \\ 0 & \id \end{bmatrix}}
 \&
 \bullet
 \ar{d}{\begin{bmatrix} \D_{s_0} & \bar{\Delta} \\ 0 & \D_{s_1} \end{bmatrix}}
 \\
 \bullet
 \ar[swap]{r}{\begin{bmatrix} \id & \eps \\ 0 & \id \end{bmatrix}}
 \&
 \bullet
\end{tikzcd}
 \!, \qquad
\begin{tikzcd}[column sep=huge,row sep=huge]
 \bullet
 \ar[swap]{d}{\begin{bmatrix} \D_{s_0} & \bar{\Delta} \\ 0 & \D_{s_1} \end{bmatrix}}
 \ar{r}{\begin{bmatrix} \id & -\delta \\ 0 & \id \end{bmatrix}}
 \&
 \bullet
 \ar{d}{\begin{bmatrix} \D_{s_0} & \Delta \\ 0 & \D_{s_1} \end{bmatrix}}
 \\
 \bullet
 \ar[swap]{r}{\begin{bmatrix} \id & -\eps \\ 0 & \id \end{bmatrix}}
 \&
 \bullet
\end{tikzcd} \!.
\end{equation}
As discussed in Section~\ref{sec:formal}, this system is reducible to
diagonal form by the above equivalence (Lemma~\ref{lem:offdiag-simp}) if
and only if the following equation is satisfied by some $\delta$ and
$\eps$ with rational coefficients:
\begin{equation} \label{eq:rw-offdiag}
 \D_{s_0} \circ \delta = (\Delta-\bar{\Delta}) + \eps \circ \D_{s_1}.
\end{equation}
For simplicity of notation, where it does not introduce any ambiguities,
we set $\bar{\Delta} = 0$ below. By~\cite[Theorem~3.1]{kh-rwtriang},
without loss of generality, we can consider this problem restricted to
first order $\delta$, $\epsilon$ and $\Delta$, with $\delta$ uniquely
determining $\eps$ and vice versa. For instance, if $\Delta$ is of
order higher than one, we can always reduce the order by writing $\Delta
= \Delta_1 + \Delta' \circ \D_{s_1}$ with $\Delta_1$ of order at most
one, and absorb $\Delta'$ into $\eps$ (or analogously with $\delta$).

Let us parametrize these operators as follows:
\begin{subequations}
\begin{gather}
\label{eq:Delta-param}
 \Delta = \frac{{\rm i}\omega r}{r^2}
 (-\Delta_- + \{r f \Delta_+, \del_r\}), \\
\label{eq:delta-param}
 \delta = \delta_+
 - 2\del_r(r f \delta_-) + f_1 \delta_-
 + \{r f \delta_-, \del_r\}, \\
\label{eq:eps-param}
 \eps = \delta_+
 + 2\del_r(r f \delta_-) - f_1 \delta_-
 + \{r f \delta_-, \del_r\},
\end{gather}
\end{subequations}
\noeqref{eq:Delta-param}%
\noeqref{eq:delta-param}%
\noeqref{eq:eps-param}%
where $\delta_\pm$, $\Delta_\pm$ are arbitrary rational functions and we
have used the anti-commutator notation $\{g,h\} = g\circ h + h\circ g$
for differential operators, namely $\{g_0,\del_r\} = 2g_0\del_r +
(\del_r g_0)$ for any function~$g_0$. Plugging this parametrization
into~\eqref{eq:rw-offdiag} and comparing coefficients of powers of
$\del_r$, we find the equivalent ODE system
\begin{gather} \label{eq:rw-decoupling}
\underbrace{\begin{bmatrix}
\frac{1}{{\rm i}\omega} \del_r & \frac{(s_0^2-s_1^2) f_1}{{\rm i}\omega r}
\\[2ex]
-\frac{(s_0^2\!-\!s_1^2) f_1}{{\rm i}\omega r}
& \!\!\!\frac{\del_r r f \del_r r f \del_r
\!-\! \big\{f (2\B_l\! -\! (s_0^2\!+\!s_1^2\!+\!1)f_1 \!+\! \tfrac{1}{2} f
\!-\! \tfrac{2}{f} {\omega^2 r^2}), \del_r\big\}}{{\rm i}\omega}
\end{bmatrix}}_{\DD(s_0,s_1)}\!\!
\begin{bmatrix} \delta_+ \\ \delta_- \end{bmatrix}
\!=\!\begin{bmatrix} \Delta_+ \\ \Delta_- \end{bmatrix} \!.
\end{gather}

The parametrization we have used for $\Delta$ and $\delta$ is different
from the one we used previously in~\cite{kh-rwtriang}, but it is more
convenient for some simple technical reasons. If $\Delta$ is
parametrized by $(\Delta_+, \Delta_-)$, then $\Delta^*$ is parametrized
by $(\Delta_+, -\Delta_-)$. Moreover, given a solution $\delta$
of~\eqref{eq:rw-decoupling} parametrized by $(\delta_+, \delta_-)$, if
we simultaneously replace $\Delta$ by $\Delta^*$ and exchange $s_0$ with
$s_1$, then the solution to the new problem is parametrized in the same
way but by $(\delta_+^*, -\delta_-^*)$. The last property is due to the
formal self-adjointness of $\D_s^* = \D_s$. With our new
parametrization, it is also easy to check that the operator
$\DD(s_0,s_1)$ in~\eqref{eq:rw-decoupling} is itself formally
self-adjoint, and setting $s_0=s_1$ puts it into diagonal form.
The version of the decoupling equation used in~\cite{kh-rwtriang} did
not manifestly exhibit these properties.

Furthermore, the ODE coefficients in~\eqref{eq:rw-decoupling} are not
just rational functions, but even more conveniently they are Laurent
polynomials in $r$. The rationality of the coefficients means that we
can use the results from~\cite{kh-rwtriang} to systematically
solve~\eqref{eq:rw-decoupling}, or check that it has no solution.
Actually, we will need to extend the results of~\cite{kh-rwtriang} in an
important way. When $\Delta$ is such that no solution $\delta$ exists,
we would like to find some canonical choice of $\bar{\Delta}$ such that
replacing the right-hand side by $\big(\Delta - \bar{\Delta}\big)$ does make the
equation solvable. That is, we would like to be able to pick some
convenient representatives of the equivalence classes in the cokernel of
$\DD(s_0,s_1)$ in~\eqref{eq:rw-decoupling}. In fact, exploiting the fact
that its coefficients are actually Laurent polynomials, we will show in
Theorem~\ref{thm:coker} that fixing the locations of the poles of
$\Delta$, the cokernel of the operator in~\eqref{eq:rw-decoupling} is
finite dimensional and a complete set of representatives can be
explicitly selected.

First, let us recall some useful terminology and notions
from~\cite{kh-rwtriang}, where further details and more references to
the literature on this topic can be found.
We will be working with%
 \footnote{What follows is standard notation in commutative algebra,
 where in general $\CC$ can be replaced by any commutative ring $R$ and
 the variable $r$ could be replaced by any set of independent symbolic
 variables.} %
complex valued \emph{rational functions} in $r$,
$\CC(r)$, \emph{polynomials} in $r$ and $r^{-1}$, $\CC[r]$ and
$\CC\big[r^{-1}\big]$, \emph{formal power series} in $r$ and $r^{-1}$,
$\CC[[r]]$ and $\CC\big[\big[r^{-1}\big]\big]$, as well as \emph{formal Laurent series}
in $r$. Let us distinguish between
\emph{unbounded} Laurent series $\CC\big[\big[r,r^{-1}\big]\big]$, \emph{bounded $($from
below$)$} Laurent series $\CC\big[r^{-1}\big][[r]]$, \emph{bounded from above}
Laurent series $\CC[r]\big[\big[r^{-1}\big]\big]$ and \emph{Laurent polynomials}
$\CC[r,r^{-1}]$. Of course, we could also consider Laurent series
centered at some other $r = \rho \ne 0$, but for convenience of notation
whenever possible we will stick with $\rho=0$.

By convention, matrix ordinary differential operators $E[\phi]$ act on
column vectors $\phi$ with components either in $\CC(r)$, or
belonging to one of these classes of Laurent series. When $E[\phi]$ has
bounded Laurent series coefficients, its action on bounded Laurent
series produces bounded Laurent series (from above or below,
respectively). Rational coefficients become bounded Laurent series upon
appropriate expansion. For the action of $E[\phi]$ to be well-defined on
unbounded Laurent series, its coefficients must be Laurent polynomials.
In each case, let us call such an $E$ a \emph{compatible differential
operator}.

We call a Laurent polynomial matrix $S = S(r)$ \emph{Laurent unimodular}
if its inverse $S(r)^{-1}$ is also Laurent polynomial (equivalently, the
determinant of $S$ is proportional to a single power of~$r$ and
non-vanishing). We say that Laurent unimodular matrices $S$ and $T$ are
respectively \emph{source} and \emph{target leading multipliers} of
$E$ when, after expanding all rational coefficients as bounded Laurent
series, we have
\begin{equation} \label{eq:lead-mult}
 E\big[S(r) \phi_n r^n\big] = T(r) \big(E_n \phi_n r^n + r^n O(r)\big),
\end{equation}
with the components of $O(r)$ all in $r\CC[[r]]$ and $E_n$ an
$r$-independent matrix that is invertible for almost all $n$ (i.e.,\ all
but finitely many). We call $E_n$ the \emph{leading characteristic
matrix} of $E$ with respect to the given multipliers. Similarly, we say
that $S$ and $T$ are respectively \emph{source} and \emph{target
trailing multipliers} of $E$ when, after expanding all rational
coefficients as bounded from above Laurent series, we have
\begin{equation} \label{eq:trail-mult}
 E\big[S(r) \phi_n r^n\big] = T(r) \big(E_n \phi_n r^n + r^n O\big(r^{-1}\big)\big),
\end{equation}
with the components of $O\big(r^{-1}\big)$ all in $r^{-1}\CC\big[\big[r^{-1}\big]\big]$ and $E_n$
an $r$-independent matrix that is invertible for almost all $n$. We call
$E_n$ the \emph{trailing characteristic matrix} of $E$ with respect to
the given multipliers.

Those integers $n\in \mathbb{Z}$ such that $\det E_n = 0$, which is a
polynomial equation in $n$, are called (respectively \emph{leading} or
\emph{trailing}) \emph{$($integer$)$ characteristic roots} or
\emph{exponents} of $E$ with respect to given multipliers $S$, $T$. We
denote the set of such leading characteristic exponents by
$\check{\sigma}(E)$ and the set of such trailing characteristic
exponents by $\hat{\sigma}(E)$, with implicit dependence on the $S$, $T$
multipliers, of course.

If the leading (trailing) multipliers $S$, $T$ of $E$ are known, the
formal adjoint operator $E^*$ obviously has $S_{E^*} = T^{-*}$ and
$T_{E^*} = S^{-*}$ as corresponding multipliers,
where $A^{-*} = (A^*)^{-1}$ is the inverse of the
conjugate transpose matrix. Dependence on $n$ in $E_n$ is due to the
action of differential operators $r\del_r r^n = n r^n$. Due to the
identity $(r \del_r)^* = -(r\del_r + 1)$, using the above multipliers
for $E^*$ allows us identify the characteristic exponents of $E^*$ as
$\hat{\sigma}(E^*) = -\hat{\sigma}(E)-1$ and $\check{\sigma}(E^*) =
-\check{\sigma}(E)-1$.

We will not dwell on when leading or trailing multipliers exist or on their uniqueness, but
will just assume that they are given for any particular problem. Our
focus, instead, is to use these data as a certificate that a concrete,
finite dimensional linear algebra computation can \emph{prove} some
statement about a rational ODE. In practice, often $S$ and $T$ may be
taken to be diagonal, with appropriately chosen powers of $r$ on the
diagonal. Otherwise, they could be determined by a recursive procedure
similar to that used in the analysis of regular and irregular
singularities for ODEs with meromorphic coefficients~\cite{wasow}.
Related algorithms can be found in the recent
papers~\cite{abramov16,ark17}.

For the inhomogeneous problem $E[\phi] = \beta$, the solution $\phi$ may
only have poles at a subset of the union of the poles of $\beta$ and of
the (regular or irregular) singular points of $E[\phi]$. When the number
of such singular points is finite, the knowledge of the leading and
trailing multipliers at the singular points (including $r=\oo$) and at
the poles of $\beta$ is sufficient to reduce the problem of finding all
rational solutions $\phi$ for rational $\beta$ to a finite dimensional
linear algebra problem~\cite[Theorem~2.4]{kh-rwtriang}. For our purposes, we
now need to extend this result to characterize both the kernel and
cokernel of $E[\phi]$ when acting on the different classes of Laurent
series. There exist results in the literature~\cite{cluzeau-quadrat, oaku-takayama-tsai} that are, on an abstract level, equivalent or even more
general than what we present below. But, being written in the abstract
language of $D$-modules, they would require substantial interpretation
to be applicable to our very concrete situation.

Let us fix the following notation. We denote the kernel of $E$ by $\ker
E$ in the $\CC\big[r,r^{-1}\big]$ class, by $\ker_+ E$ in the $\CC\big[r^{-1}\big][[r]]$
class, by $\ker_- E$ in the $\CC[r]\big[\big[r^{-1}\big]\big]$ class, and by $\ker_{-+}
E$ in the $\CC\big[\big[r,r^{-1}\big]\big]$ class. For the cokernel of $E$, we write
$\coker E$, $\coker_+ E$, $\coker_- E$ and $\coker_{-+} E$, using the
same conventions. In each case, the operator $E[\phi]$ must have a
well-defined action on the Laurent series of the appropriate class. If
we have in mind Laurent series centered at $r=\rho$ instead of $r=0$,
then we write $\ker^\rho$ and $\coker^\rho$ with appropriate subscripts.
Finally, we write $\ker_{\mathrm{rat}}^{\rho_1,\rho_2,\ldots}$ and
$\coker_{\mathrm{rat}}^{\rho_1,\rho_2,\ldots}$ when we restrict our
attention to rational functions with possible poles only at $r=\rho_1,
\rho_2, \ldots$.

\begin{prp} \label{prp:findim}
If a compatible equation $E[\phi]=0$ $($with given multiplier matrices$)$
has finitely many singular points,
then $\ker E$, $\ker_+ E$, $\ker_- E$, $\ker_{-+} E$ or
$\ker_{\mathrm{rat}}^{\rho_1,\rho_2,\ldots,\rho_N} E$, as appropriate,
is finite dimensional.
\end{prp}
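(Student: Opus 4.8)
To prove Proposition~\ref{prp:findim} I would reduce every case to a statement about a linear recurrence for the Laurent coefficients of $\phi$. After expanding the rational coefficients of $E$ as Laurent series of the class relevant to the kernel in question, the equation $E[\phi]=0$ becomes, collected power by power in $r$, a recurrence of some fixed finite order on the vector-valued coefficients $\phi_n$; by the very definition of the leading and trailing multipliers (relations~\eqref{eq:lead-mult} and~\eqref{eq:trail-mult}), the matrix multiplying the extreme term of this recurrence is the characteristic matrix $E_n$, and $\det E_n$ is a polynomial in $n$ that does not vanish identically, so $E_n$ is invertible for all $n$ outside the finite set $\check\sigma(E)$ (or $\hat\sigma(E)$, as appropriate). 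The guiding principle is then that a nonzero Laurent-series solution can only be ``turned on'', or acquire genuinely new free parameters, at one of these finitely many characteristic exponents, and that the freedom available at each is at most $\dim\ker E_n$; summing over the exponents bounds the dimension of the solution space.

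First I would handle the one-sidedly bounded classes. For $\ker_+ E$ I would use the leading multipliers, so the recurrence takes the form $E_n\phi_n=(\text{a fixed linear function of the }\phi_m\text{ with }m<n)$. A nonzero element of $\CC[r^{-1}][[r]]$ has a lowest nonvanishing degree $n_0$, at which the right-hand side is $0$, forcing $n_0\in\check\sigma(E)$; above that, $\phi_n$ is uniquely determined except at the remaining $n\in\check\sigma(E)$, where at most $\dim\ker E_n$ new parameters appear. Hence $\dim\ker_+ E\le\sum_{n\in\check\sigma(E)}\dim\ker E_n<\infty$. The class $\ker_- E$ is treated identically with the trailing multipliers, the recurrence run downward from the highest degree, and $\hat\sigma(E)$ in place of $\check\sigma(E)$. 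Since a Laurent series that is bounded both below and above is a Laurent polynomial, $\ker E=\ker_+ E\cap\ker_- E$ is finite-dimensional as well, and the versions centred at a point $\rho\ne0$ reduce to these by the substitution $r\mapsto r-\rho$.

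The case $\ker_{-+}E$, acting on the unbounded Laurent series $\CC[[r,r^{-1}]]$, is the one I expect to be the main obstacle: a bi-infinite series has neither a lowest nor a highest degree, so there is no place to anchor the recurrence. Here I would use \emph{both} the leading multipliers at $r=0$ and the trailing multipliers at $r=\oo$. For $n$ large enough (past every characteristic root relevant in that direction) the recurrence can be solved for its top coefficient $\phi_n$ in terms of a fixed finite window of lower coefficients; symmetrically, for $n$ negative enough it can be solved for its bottom coefficient in terms of a window of higher ones. I would then fix a single finite window $W\subset\mathbb Z$ wide enough to contain all the exceptional indices in both directions plus a buffer equal to the order of the recurrence, and check that the restriction $\phi\mapsto(\phi_n)_{n\in W}$ is injective on $\ker_{-+}E$ --- if the coefficients indexed by $W$ all vanish, propagating the recurrence upward and downward out of $W$ annihilates every coefficient. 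Finiteness of $W$ then gives $\dim\ker_{-+}E<\infty$. The delicate points, which I expect to cost the most work, are making the upward and downward propagations mutually consistent and confirming that a single finite $W$ suffices; this is precisely where the finiteness of the characteristic-exponent sets at \emph{both} $r=0$ and $r=\oo$ enters essentially.

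Finally, for $\ker_{\mathrm{rat}}^{\rho_1,\ldots,\rho_N}E$ I would invoke~\cite[Theorem~2.4]{kh-rwtriang} with vanishing right-hand side: a rational function with poles confined to $\rho_1,\ldots,\rho_N$ and possibly $r=\oo$ is determined by the finitely many coefficients of its principal parts, the pole order at each $\rho_j$ being bounded by the characteristic exponents of $E$ there and the degree of the polynomial part by those at $r=\oo$; it is exactly the hypothesis that $E$ has finitely many singular points that makes this a finite list of local conditions. In short, the one-sided and rational cases are variations on the same coefficient-recurrence bookkeeping as in~\cite{kh-rwtriang}, while $\ker_{-+}E$ is the genuinely new ingredient and the step most likely to require care.
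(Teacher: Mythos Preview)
Your proposal is correct and follows essentially the same approach as the paper's proof: both reduce the question to a recurrence on Laurent coefficients governed by the characteristic matrices $E_n$, observe that free parameters can only arise at the finitely many integer characteristic exponents, and handle $\ker_{-+}E$ by using both leading and trailing multipliers so that a finite window of coefficients determines the whole bi-infinite series. Your phrasing in terms of injectivity of the restriction map to a finite window $W$ is exactly the paper's splitting $\phi=\check S\phi_-+\phi'+S\phi_+$ viewed from the other side, and your explicit bound $\dim\ker_+E\le\sum_{n\in\check\sigma(E)}\dim\ker E_n$ is a mild sharpening the paper does not state.
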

\begin{proof}
For rational functions with prescribed poles, this is the statement
of~\cite[Corollary~2.5]{kh-rwtriang}. For the other cases, the proof can be
attacked with similar methods, so we merely sketch it. For $\ker_+ E$,
we can split $\phi = \phi' + S \phi_+$, using the leading multiplier
$S$, where the powers of $r$ in~$S^{-1}\phi'$ are bounded from below by
$\check{n}$ and from above by $\hat{n}$ satisfying $\check{n} \ll
\check{\sigma}(E) \ll \hat{n}$, while the powers of $r$ in $\phi_+$ are only
bounded from below by $\hat{n}$. Then there are no obstructions to
solving $E[S \phi_+] = -E[\phi']$ for $\phi_+$ order by order, since no
characteristic exponents will appear by our conditions on $\hat{n}$.
Then finding all $\phi'$ such that $E[\phi' + S\phi_+] = 0$ reduces to a
finite dimensional linear algebra problem. The $\ker_- E$ case is
completely analogous. For $\ker_{-+} E$, one needs to start with the
splitting $\phi = \check{S} \phi_- + \phi' + S \phi_+$, where now
$\check{S}$ is the trailing multiplier and again proceed analogously.
\end{proof}

To deal with cokernels, it is helpful to introduce the sesquilinear
\emph{residue pairing}
\begin{equation} \label{eq:res-def}
 \langle \alpha, \phi \rangle_\rho = \Res_\rho (\alpha^* \phi),
\end{equation}
where we take the \emph{residue at $r=\rho$} (defined as the coefficient of
$(r-\rho)^{-1}$) of the product of a column vector $\phi$ and the
conjugate transposed column vector $\alpha$ ($r^*=r$ and its powers are
treated as real under conjugation). We may drop the subscript when $\rho
= 0$, $\langle -, - \rangle = \langle -, - \rangle_0$. The residue
pairing is of course well-defined for rational functions, but it is also
well-defined for the following pairs of Laurent series classes: (bounded
from above and below, unbounded), (bounded from below, bounded from
below), and (bounded from above, bounded from above). This pairing has
several nice properties. It is easy to see that it is
\emph{non-degenerate} in either argument, for each possible combination
of pairs Laurent series classes. If $S$ is a Laurent unimodular matrix,
then the pairing is obviously \emph{invariant} under $\langle S^{-*}
\alpha, S \phi \rangle = \langle \alpha, \phi \rangle$. The following
lemma is slightly less obvious and will be particularly helpful.

\begin{lem} \label{lem:res-adjoint}
For a compatible differential operator $E[\phi]$, the formal adjoint
$E^*[\alpha]$ is also the adjoint with respect to the residue pairing,
$\langle \alpha, E[\phi] \rangle = \langle E^*[\alpha], \phi \rangle$.
The same is true with the arguments of the pairing reversed.
\end{lem}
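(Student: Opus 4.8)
The plan is to deduce the identity from two elementary base cases, using the multiplicativity of formal adjoints. First I would note that both sides of $\langle \alpha, E[\phi]\rangle = \langle E^*[\alpha],\phi\rangle$ are additive in $E$ and that the identity is stable under composition: if it holds for compatible operators $E_1$ and $E_2$ whose intermediate Laurent-series classes match up, then $\langle \alpha, E_1 E_2[\phi]\rangle = \langle E_1^*[\alpha], E_2[\phi]\rangle = \langle E_2^* E_1^*[\alpha],\phi\rangle = \langle (E_1 E_2)^*[\alpha],\phi\rangle$, using $(E_1 E_2)^* = E_2^* E_1^*$ from Section~\ref{sec:adjoint}. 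Since every differential operator of order $N$ is a finite sum $\sum_k A_k(r)\,\partial_r^k$, hence a finite sum of compositions of multiplication operators with copies of $\partial_r$, it then suffices to verify the claim when $E$ is (a) multiplication by a matrix $A(r)$ of the relevant coefficient type, and (b) $E=\partial_r$.

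Case (a) is immediate: by the rules of Section~\ref{sec:adjoint}, $E^*$ is pointwise conjugate transposition $A\mapsto A^*$ (powers of $r$ treated as real), so $\langle \alpha, A\phi\rangle = \Res_\rho(\alpha^* A\phi) = \Res_\rho\big((A^*\alpha)^*\phi\big) = \langle A^*\alpha,\phi\rangle$. Case (b) rests on the single fact that the residue of a formal derivative vanishes, $\Res_\rho(\partial_r g)=0$, since $\partial_r (r-\rho)^n = n(r-\rho)^{n-1}$ never contributes an $(r-\rho)^{-1}$ term. Combined with the Leibniz rule $\partial_r(\alpha^*\phi) = (\partial_r\alpha^*)\phi + \alpha^*(\partial_r\phi)$ and the identity $(\partial_r\alpha)^* = \partial_r(\alpha^*)$ (valid because $r$ is real), this gives $\langle\alpha,\partial_r\phi\rangle = \Res_\rho(\alpha^*\partial_r\phi) = -\Res_\rho\big((\partial_r\alpha^*)\phi\big) = \langle -\partial_r\alpha,\phi\rangle = \langle (\partial_r)^*[\alpha],\phi\rangle$. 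The version with the arguments of the pairing reversed then follows from the conjugate symmetry $\langle\phi,\alpha\rangle_\rho = \overline{\langle\alpha,\phi\rangle_\rho}$ of the residue pairing (a consequence of $(\alpha^*\phi)^* = \phi^*\alpha$ and $\Res_\rho(g^*) = \overline{\Res_\rho(g)}$, again because $r-\rho$ is real): conjugating $\langle\alpha,E[\phi]\rangle = \langle E^*[\alpha],\phi\rangle$ yields $\langle E[\phi],\alpha\rangle = \langle\phi, E^*[\alpha]\rangle$.

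The part that needs care — and the only real obstacle — is the bookkeeping of Laurent-series classes that keeps every intermediate expression well-defined. One must check that the decomposition of $E$ into generators respects the three admissible pairings (Laurent polynomials against unbounded series; bounded-from-below against bounded-from-below; bounded-from-above against bounded-from-above), i.e.\ that each intermediate factor $\partial_r^j$ maps the relevant class into itself, and that taking the formal adjoint sends a compatible operator for one pairing to a compatible operator for the transposed pairing. This holds because the adjoint of multiplication by a Laurent polynomial (resp.\ a bounded-from-below, resp.\ a bounded-from-above series) is again of the same type, and $\partial_r^* = -\partial_r$ preserves all of these classes. Once this is arranged, the entire content of the lemma is contained in the two base cases above.
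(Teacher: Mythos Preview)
Your proof is correct and follows essentially the same approach as the paper: reduce to generators (multiplication operators and $\partial_r$), verify the multiplication case by the pointwise adjoint, verify the $\partial_r$ case via the Leibniz rule together with $\Res_\rho(\partial_r g)=0$, and obtain the reversed-argument version from the conjugate symmetry $\langle\alpha,\phi\rangle^*=\langle\phi,\alpha\rangle$. The only cosmetic difference is that the paper splits multiplication into constant matrices and powers of $r$ separately, whereas you treat general $A(r)$ at once; your additional paragraph on the compatibility of Laurent-series classes is more explicit than what the paper records but not logically new.
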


\begin{proof}
The last statement holds simply because $\langle \alpha, \phi \rangle^*
= \langle \phi, \alpha \rangle$. To prove the adjoint relation, we need
to make use of the defining formula~\eqref{eq:res-def}. First, note that
it is obviously true that $\langle \alpha, C \phi \rangle =
\langle C^* \alpha, \phi \rangle$, when $C$ is a constant matrix, and
that $\langle \alpha, r^n \phi \rangle = \langle r^n \alpha, \phi
\rangle$, for any power of $r$. Then, due to the Leibniz identity
$\alpha^* \del_r \phi - (-\del_r \alpha)^* \phi = \del_r (\alpha^*
\phi)$ and the observation that $\Res_0 (\del_r w) = 0$ for any Laurent
series $w$, it is also true that $\langle \alpha, \del_r \phi \rangle =
\langle -\del_r \alpha, \phi \rangle$. Hence, the adjoitness property
follows for all differential operators, since any operator can be
written as a~sum of compositions of these three kinds of operators.
\end{proof}

We can now state and prove the following lemmas characterizing cokernels
in the bounded Laurent series and Laurent polynomial cases. In each
case, since $E^{**} = E$, we are free to exchange the roles of $E$ and
$E^*$.

\begin{lem} \label{lem:coker-semibounded}
Let $E[\phi] = 0$ be an ODE with coefficients that are Laurent series
bounded from below $($above$)$, with given multiplier matrices.
Then the induced sesquilinear pairing $\langle [\alpha], \phi \rangle =
\Res_0(\alpha^* \phi)$ is non-degenerate between $\coker_+ E^*$ and
$\ker_+ E$ $($between $\coker_- E^*$ and $\ker_- E)$.
\end{lem}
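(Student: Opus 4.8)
\emph{Strategy.} The statement has an easy half (non-degeneracy in the $\ker_+E$ slot) and a hard half (non-degeneracy in the $\coker_+E^*$ slot); the bounded-from-above assertion about $\coker_-E^*$ and $\ker_-E$ is entirely parallel and is obtained by running the same argument with trailing instead of leading multipliers (equivalently, in the variable $r^{-1}$), so I only treat the bounded-from-below case. First, the pairing descends: by Lemma~\ref{lem:res-adjoint}, changing $\alpha$ to $\alpha+E^*[\psi]$ with $\psi\in\CC[r^{-1}][[r]]$ alters $\Res_0(\alpha^*\phi)$ by $\langle E^*[\psi],\phi\rangle=\langle\psi,E[\phi]\rangle$, which vanishes for $\phi\in\ker_+E$. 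So $\langle[\alpha],\phi\rangle:=\Res_0(\alpha^*\phi)$ is well defined on $\coker_+E^*\times\ker_+E$, and in particular $\im_+E^*\subseteq(\ker_+E)^\perp$. The easy half is then immediate from the non-degeneracy of the ambient residue pairing recorded above: every $\alpha$ represents a class in $\coker_+E^*$, so given $0\ne\phi\in\ker_+E$ with lowest-order term $\phi_mr^m$, the monomial $\alpha=\phi_mr^{-m-1}\in\CC[r^{-1}][[r]]$ has $\langle[\alpha],\phi\rangle=\Res_0(\alpha^*\phi)=\phi_m^*\phi_m>0$.

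\emph{Reduction of the hard half.} What remains is the reverse inclusion $(\ker_+E)^\perp\subseteq\im_+E^*$: given $\alpha\in\CC[r^{-1}][[r]]$ with $\langle\alpha,\phi\rangle=0$ for all $\phi\in\ker_+E$, solve $E^*[\psi]=\alpha$ with $\psi\in\CC[r^{-1}][[r]]$. First I would normalize. Replacing $E$ by $T^{-1}ES$, we may assume $E$ has both leading multipliers equal to the identity, whence (by the rules $S_{E^*}=T^{-*}$, $T_{E^*}=S^{-*}$) so does $E^*$; this is harmless, since the Laurent unimodular conjugation only renames $\ker_+E$ and $\coker_+E^*$, and by the invariance $\langle S^{-*}\alpha,S\phi\rangle=\langle\alpha,\phi\rangle$ it carries the residue pairing to itself. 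After normalization $E^*[\psi_nr^n]=(E^*)_n\psi_nr^n+r^nO(r)$, with $(E^*)_n$ invertible for every $n$ outside the finite set $\check\sigma(E^*)=-\check\sigma(E)-1$.

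\emph{The recursion and the main obstacle.} One now solves $E^*[\psi]=\alpha$ coefficient by coefficient, exactly as in the proof of Proposition~\ref{prp:findim}: starting from an exponent below $\check\sigma(E^*)$ (padding $\alpha$ with zeros if necessary), each $\psi_n$ with $n\notin\check\sigma(E^*)$ is forced, while at $n\in\check\sigma(E^*)$ the equation imposes a linear condition on $\alpha$ (an obstruction valued in the cokernel of the matrix $(E^*)_n$) and, when that holds, leaves a choice in the kernel of $(E^*)_n$; past $\max\check\sigma(E^*)$ every $\psi_n$ is again forced, so any solution automatically lies in $\CC[r^{-1}][[r]]$. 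Consequently $\im_+E^*$ is carved out of $\CC[r^{-1}][[r]]$ by finitely many linear conditions on $\alpha$. The crux --- which I expect to be the real work --- is to identify this finite set of obstructions with the functionals $\alpha\mapsto\langle\alpha,\phi\rangle$, $\phi$ ranging over $\ker_+E$; I would do this by comparing the recursion for $E^*[\psi]=\alpha$ with the homogeneous recursion for $E[\phi]=0$. The residue pairing matches the $r^n$-coefficient of $\alpha$ with the $r^{-n-1}$-coefficient of $\phi$, and $-n-1$ runs through $\check\sigma(E)$ precisely when $n$ runs through $\check\sigma(E^*)$; moreover, by the identity $(r\del_r)^*=-(r\del_r+1)$, the characteristic matrices $(E^*)_n$ and $E_{-n-1}$ are conjugate transposes of one another, so the cokernel of $(E^*)_n$ is Hermitian-dual to the kernel of $E_{-n-1}$. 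Propagating this duality through the coupled recursion --- the freedom in $\ker(E^*)_n$ at one characteristic exponent feeding the obstruction at the next, dually to how $\ker_+E$ embeds in $\bigoplus_{m\in\check\sigma(E)}\ker E_m$ --- shows the obstruction functionals are exactly $\alpha\mapsto\langle\alpha,\phi\rangle$ for $\phi\in\ker_+E$. Combined with the inclusion $\im_+E^*\subseteq(\ker_+E)^\perp$ obtained above, this yields $\im_+E^*=(\ker_+E)^\perp$, i.e., the non-degeneracy in the $\coker_+E^*$ slot.
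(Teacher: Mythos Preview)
Your outline is sound and the easy half is complete. Your normalization is correct, as is the key observation that $(E^*)_n=(E_{-n-1})^*$ once both multipliers are trivial. The difference from the paper is organizational: you attack the hard half by direct recursion/obstruction analysis, whereas the paper reduces to finite-dimensional linear algebra on a sub-quotient.

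The weakness in your write-up is that the ``propagating the duality through the coupled recursion'' step is only a sketch, and it is precisely here that the content lies. What you must show is that the \emph{total} obstruction space for $E^*[\psi]=\alpha$ --- which is not simply $\bigoplus_{n\in\check\sigma(E^*)}\coker(E^*)_n$ but the quotient of it by what the free parameters in $\bigoplus\ker(E^*)_n$ can kill at later levels --- is residue-dual to $\ker_+E$, which is likewise not all of $\bigoplus_{m\in\check\sigma(E)}\ker E_m$ but only the subspace whose leading data extend to full formal solutions. The level-by-level duality of characteristic matrices does not by itself give this; one has to check that the coupling (how freedom at one characteristic exponent feeds the obstruction at the next) on the $E^*$ side is adjoint to the coupling on the $E$ side. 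This can be carried out by induction on the characteristic exponents, but it is genuinely fiddly to write down.

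The paper's device sidesteps the induction. Fix $\check n\ll\check\sigma(E)\ll\hat n$ and pass to the finite-dimensional sub-quotients $\mathcal L_{-\hat n}/\mathcal L_{-\check n}$ and $\mathcal R_{\check n}/\mathcal R_{\hat n}$; the residue pairing descends to a perfect pairing there, and the induced operators $[E]$, $[E^*]$ remain adjoint. Then the single finite-dimensional fact ``$\coker[E^*]$ and $\ker[E]$ pair perfectly'' does all the work at once. One finishes by identifying $\ker[E]\cong\ker_+E$ and $\coker[E^*]\cong\coker_+E^*$ using the recursion only \emph{outside} the characteristic window, where every $E_n$ is invertible and no coupling needs to be tracked. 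This is your argument repackaged --- the sub-quotient window is exactly where your obstruction/freedom bookkeeping takes place --- but doing the duality in one finite-dimensional stroke replaces the inductive ``propagation'' you left unwritten.
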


\begin{proof}
Obviously, $E[\phi]$ is a compatible differential operator. We will
consider Laurent series bounded from below (the bounded from above case
is completely analogous).

Let $\mathcal{L}$ and $\mathcal{R}$ denote the spaces of left and right
arguments of the residue pairing. Denote by $\mathcal{L}_n \subset
\mathcal{L}$ and $\mathcal{R}_n \subset \mathcal{R}$ the subspaces
consisting of Laurent series containing powers of $r$ only equal to $n$
or greater. Note that $\mathcal{L}_n^\perp = \mathcal{R}_{-n}$ and vice
versa. By the invariance of the residue pairing, without loss of
generality, we can assume that $E[\phi]$ has trivial source and target
multipliers (they are just identity matrices). Then $E[\mathcal{R}_n]
\subset \mathcal{R}_n$ for any $n$. Fix some exponents $\check{n} \ll
\check{\sigma}(E) \ll \hat{n}$.

By invoking Lemma~\ref{lem:res-adjoint} and using standard arguments, we
can conclude that the residue pairing descends to the non-degenerate
pairing $\langle [\alpha], [\phi] \rangle = \langle \alpha, \phi
\rangle$ on the finite-dimensional sub-quotients $\mathcal{L}_{-\hat{n}}
/ \mathcal{L}_{-\check{n}} \times \mathcal{R}_{\check{n}} /
\mathcal{R}_{\hat{n}}$, with the induced operators $[E][\phi]=[E[\phi]]$
and $[E^*][\alpha] = [E^*[\alpha]]$ still satisfying the adjoint
relation $\langle [\alpha], [E][\phi]] \rangle = \langle [E^*][\alpha]
, [\phi] \rangle$. Hence, by standard finite dimensional linear algebra,
this induced pairing descends also to a non-degenerate pairing between
$\ker [E]$ and $\coker [E^*]$, for the operators induced on these
sub-quotients. But by the same arguments as in the proof of
Proposition~\ref{prp:findim}, the sub-quotients provide isomorphisms
$\ker_+ E = \ker [E]$ and $\coker_+ E^* = \ker [E^*]$. Hence, since these
isomorphisms respect the hierarchy of induced pairings, the residue
pairing induces a non-degenerate pairing between $\coker_+ E^*$ and
$\ker_+ E$.\vspace{-1ex}
\end{proof}

\begin{lem} \label{lem:coker-bounded}
Let $E[\phi] = 0$ be an ODE with Laurent polynomial coefficients $($with
given multiplier matrices$)$. Then the induced sesquilinear pairing $\langle
\alpha, [\phi] \rangle = \Res_0(\alpha^* \phi)$ is non-degenerate
between $\ker_{-+} E^*$ and $\coker E$.
\end{lem}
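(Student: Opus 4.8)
Write $\mathcal{P}$ for the space of Laurent-polynomial column vectors on which $E$ acts, so that $\coker E=\mathcal{P}/\im E$, and $\mathcal{U}$ for the space of unbounded Laurent-series column vectors of the appropriate size, so that $\ker_{-+}E^*\subseteq\mathcal{U}$. The guiding observation is that, in contrast with the bounded-from-below (or -above) classes in Lemma~\ref{lem:coker-semibounded} --- whose mutual residue pairing has a ``gap'' that forced the passage to finite-dimensional sub-quotients --- the pair $(\mathcal{U},\mathcal{P})$ sits in a \emph{perfect} sesquilinear duality under the residue pairing. Granting that, the lemma reduces to the elementary fact that the dual of a cokernel is the kernel of the transpose, and no sub-quotient detour is needed.

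First I would check that the pairing descends to $\coker E$. Since $E$ has Laurent-polynomial coefficients, both $E$ and its formal adjoint $E^*$ are compatible differential operators (the formal adjoint of a Laurent-polynomial-coefficient operator again has Laurent-polynomial coefficients, hence acts on $\mathcal{U}$), so for $\alpha\in\ker_{-+}E^*$ and $\psi\in\mathcal{P}$ Lemma~\ref{lem:res-adjoint} gives $\langle\alpha,E[\psi]\rangle=\langle E^*[\alpha],\psi\rangle=0$. Hence $\langle\alpha,\cdot\rangle$ annihilates $\im E$ and $\langle\alpha,[\phi]\rangle$ is a well-defined pairing on $\ker_{-+}E^*\times\coker E$.

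The algebraic heart of the argument is the duality. A Laurent polynomial has only finitely many nonzero coefficients, so $\langle\alpha,\phi\rangle=\Res_0(\alpha^*\phi)$ is always a finite sum of coefficient-level pairings; grouping by degree in $r$, one sees that this exhibits $\mathcal{U}$ as the full (conjugate-linear) algebraic dual of $\mathcal{P}$, and in particular the pairing is non-degenerate in the $\phi$ slot over all of $\mathcal{U}$. By Lemma~\ref{lem:res-adjoint}, $E^*\colon\mathcal{U}\to\mathcal{U}$ is exactly the transpose of $E\colon\mathcal{P}\to\mathcal{P}$ under this duality. Therefore $\ker_{-+}E^*=\{\alpha:\langle E^*[\alpha],\phi\rangle=0\ \text{for all }\phi\in\mathcal{P}\}=(\im E)^{\perp}=(\coker E)^{*}$, the full dual of $\coker E$. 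Non-degeneracy on both sides is then immediate: a nonzero class in $\coker E$ is separated by some coordinate functional, which lies in $(\coker E)^*=\ker_{-+}E^*$, while a nonzero $\alpha\in\ker_{-+}E^*$ is by construction a nonzero functional on $\coker E$. As a bonus, $\ker_{-+}E^*$ is finite-dimensional by Proposition~\ref{prp:findim} applied to the compatible operator $E^*$ (which inherits finitely many singular points from $E$); since a space with finite-dimensional dual is itself finite-dimensional, $\coker E$ is finite-dimensional with $\dim\coker E=\dim\ker_{-+}E^*$, and the pairing is in fact perfect.

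I do not anticipate a serious obstacle; the one point demanding care is to make sure Lemma~\ref{lem:res-adjoint} is being invoked strictly within its stated scope --- the \emph{mixed} residue pairing between a Laurent polynomial and an unbounded Laurent series, which is one of the admissible class pairs --- and that every series product, residue and coefficient-level sum appearing above is a genuinely finite expression, which is guaranteed because one of the two factors is always a Laurent polynomial. The remainder is routine bookkeeping in graded linear algebra.
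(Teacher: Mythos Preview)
Your argument is correct and takes a genuinely different route from the paper. The paper proves non-degeneracy in the $[\phi]$ slot constructively: given a Laurent polynomial $\psi$ annihilated by all of $\ker_{-+}E^*$, it invokes Lemma~\ref{lem:coker-semibounded} to produce preimages $\phi_\pm$ of $\psi$ in the bounded-from-below and bounded-from-above classes, and then uses a gluing trick (pairing against arbitrary $\alpha\in(\ker_+E)^\perp\cap(\ker_-E)^\perp$ and correcting by elements of $\ker_\pm E$) to show that $\phi_+$ and $\phi_-$ can be matched up to a genuine Laurent-polynomial preimage $\phi$. Your approach bypasses this entirely by observing that the residue pairing exhibits $\mathcal{U}$ as the \emph{full} algebraic dual of $\mathcal{P}$, so that $\ker_{-+}E^*=(\im E)^\perp\cong(\coker E)^*$ by the standard transpose-kernel identity; non-degeneracy on both sides is then just the injectivity of $V\hookrightarrow V^{**}$.

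Your route is shorter and more conceptual, and it makes no use of Lemma~\ref{lem:coker-semibounded}. The paper's route, while longer, is more explicit about how a Laurent-polynomial preimage is actually assembled from the two semi-bounded ones, which fits the paper's overall emphasis on concretely computable certificates. Both are valid; your bonus observation that finite-dimensionality of $\coker E$ follows immediately from that of its dual $\ker_{-+}E^*$ (via Proposition~\ref{prp:findim}) is also sound.
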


\begin{proof}
Clearly, $E$ is a compatible differential operator, the induced pairing
is well-defined and the non-degeneracy in the first argument follows
directly from the non-degeneracy of the residue pairing. It remains to
show non-degeneracy in the second argument.

Suppose $\psi$ is a Laurent polynomial and $\langle \beta, [\psi]
\rangle = 0$, for any $\beta \in \ker_{-+} E^* \supset \ker_\pm E^*$.
Then, by Lemma~\ref{lem:coker-semibounded}, there exist bounded
(respectively from above or below) Laurent series $\phi_\pm$ such that
$\psi = E[\phi_\pm]$. Pick any Laurent polynomial $\alpha \in (\ker_+
E)^\perp \cap (\ker_- E)^\perp$. Then, again by
Lemma~\ref{lem:coker-semibounded}, there exist bounded (respectively
from above or below) Laurent series $\beta_\pm$ such that $\alpha =
E^*[\beta_\pm]$, and hence $\beta_+ - \beta_- \in \ker_{-+} E^*$. By the
hypotheses on $\psi$,
\begin{align*}
 0 &= \langle \beta_+ - \beta_-, [\psi] \rangle
 = \langle \beta_+, E[\phi_+] \rangle - \langle \beta_-, E[\phi_-] \rangle
= \langle E^*[\beta_+], \phi_+ \rangle
 - \langle E^*[\beta_-], \phi_- \rangle
\\
 &= \langle \alpha, \phi_+ - \phi_- \rangle.
\end{align*}
Since the residue paring is non-degenerate and $\alpha$ was arbitrary up
to being annihilated by $\ker_\pm E$ (finite dimensional subspaces, by
Proposition~\ref{prp:findim}), there must be $\phi'_\pm \in \ker_\pm E$
such that $\phi_+ - \phi_- = \phi'_+ - \phi'_-$. But then $\phi_+ -
\phi'_+ = \phi_- - \phi'_-$ and hence, by their boundedness properties,
both sides are equal to a Laurent polynomial $\phi$ such that $E[\phi] =
\psi$, implying as desired that $[\psi] = 0 \in \coker E$.\vspace{-1ex}
\end{proof}

Finally, we can package the above results in the following, for our
purposes, conveniently phrased
\begin{thm} \label{thm:coker}
Consider an ordinary differential operator $E[\phi]$ with Laurent
polynomial coefficients with a finite number of singular points $($with
given multiplier matrices$)$. Suppose that the leading multipliers and
integer characteristic exponents are known at each point $\rho \in
\mathbb{C}$. The formal adjoint $E^*[\alpha]$ will share the same
properties.
\begin{enumerate}\itemsep=0pt
{\samepage\item[$(a)$] Given $\rho \in \mathbb{C}$, there is a finite basis
$\big(\alpha^\rho_j\big)$ spanning a subspace of $\ker_+^\rho E^*$ and as many
polynomials $\big(\psi^\rho_k\big)$ in $(r-\rho)^{-1}$ such that the residue
pairing $R^\rho_{jk} = \Res_\rho \big(\big(\alpha^\rho_j\big)^* \psi^\rho_k\big)$ is
full rank and such that, for any rational $\psi$, there exist constants
$c_\rho^k$ and a Laurent polynomial~$\phi_\rho$ such that
\begin{gather}
 \bar{\psi}^\rho = \psi - \sum_k c_\rho^k \psi^\rho_k - E[\phi_\rho],\vspace{-1ex}
\end{gather}
has no poles at $r=\rho$.

}

\item[$(b)$] There is a finite basis $(\alpha_j)$ for $\ker_{-+} E^*$ and as many
Laurent polynomials $(\psi_k)$ such that $R_{jk} = \Res_\rho (\alpha_j^*
\psi_k)$ is full rank, for any Laurent polynomial $\psi$, there
exist constants $c^k$ and a Laurent polynomial $\phi$ such that
\begin{equation}
 E[\phi] = \psi - \sum_k c^k \psi_k.
\end{equation}
\end{enumerate}
\end{thm}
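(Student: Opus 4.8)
The plan is to read both items as explicit, effectively computable repackagings of the non-degenerate residue pairings already established in Lemmas~\ref{lem:coker-bounded} and~\ref{lem:coker-semibounded}; the one genuinely new ingredient, needed only for item~$(a)$, is a refinement of Lemma~\ref{lem:coker-semibounded} that isolates the ``principal part at $\rho$'' directions from the ``holomorphic at $\rho$'' ones. Proposition~\ref{prp:findim} supplies finite dimensionality of every kernel that appears, and the standing hypothesis that leading multipliers and integer characteristic exponents are known at each point makes all these spaces, together with the residue pairing matrices below, computable by the finite linear algebra of Proposition~\ref{prp:findim} and its proof --- so that all the data produced (the $\alpha$'s, the $\psi$'s, and the coefficients $c^k$) are effective.

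For item~$(b)$, I would proceed directly from Lemma~\ref{lem:coker-bounded}: the residue pairing $\langle\alpha,[\psi]\rangle=\Res_0(\alpha^*\psi)$ is non-degenerate between $\ker_{-+}E^*$ and $\coker E$ (the latter being the cokernel on Laurent polynomials). Since $\ker_{-+}E^*$ is finite dimensional, so is $\coker E$, of the same dimension $N$. Fix a basis $(\alpha_j)_{j=1}^N$ of $\ker_{-+}E^*$ and Laurent polynomials $(\psi_k)_{k=1}^N$ whose classes form a basis of $\coker E$; non-degeneracy plus equality of dimensions forces $R_{jk}=\Res_0(\alpha_j^*\psi_k)$ to be invertible. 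For an arbitrary Laurent polynomial $\psi$, put $c^k=\sum_j(R^{-1})^{kj}\Res_0(\alpha_j^*\psi)$. By Lemma~\ref{lem:res-adjoint} every element of $\im E$ pairs to zero with each $\alpha_j\in\ker E^*$, so $\psi-\sum_k c^k\psi_k$ pairs to zero with all $\alpha_j$ and therefore, by non-degeneracy, represents $0$ in $\coker E$; that is, $\psi-\sum_k c^k\psi_k=E[\phi]$ for some Laurent polynomial $\phi$, which is exactly the claim.

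For item~$(a)$, I would localize at $\rho$: let $M_\rho=\CC[(r-\rho)^{-1}][[r-\rho]]$ be the Laurent series bounded from below there and $M_\rho^{\mathrm{hol}}=\CC[[r-\rho]]\subset M_\rho$ the holomorphic ones; the rational coefficients of $E$ expand in $M_\rho$, so $E$ acts on $M_\rho$. The $\rho$-centred version of Lemma~\ref{lem:coker-semibounded}, with the roles of $E$ and $E^*$ swapped (legitimate since $E^{**}=E$), gives a non-degenerate residue pairing between $\coker_+^\rho E$ and $\ker_+^\rho E^*$, both finite dimensional. Now push forward along the natural surjection $\coker_+^\rho E\to M_\rho/(E[M_\rho]+M_\rho^{\mathrm{hol}})$: the annihilator of the image of $M_\rho^{\mathrm{hol}}$ under the perfect pairing consists of those $\alpha\in\ker_+^\rho E^*$ pairing to zero with every holomorphic function, which a direct residue computation shows to be exactly the $\alpha$ with no negative powers of $(r-\rho)$, i.e.\ the subspace $V_\rho:=\ker_+^\rho E^*\cap M_\rho^{\mathrm{hol}}$, the holomorphic-at-$\rho$ solutions of $E^*$. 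Hence $M_\rho/(E[M_\rho]+M_\rho^{\mathrm{hol}})$ is finite dimensional and pairs perfectly with $V_\rho$, which in general spans only a \emph{subspace} of $\ker_+^\rho E^*$ --- precisely the hedge in the statement. Choosing a basis $(\alpha^\rho_j)$ of $V_\rho$ and representatives $(\psi^\rho_k)$ of a basis of the quotient, taken to be polynomials in $(r-\rho)^{-1}$ (the principal parts of any representatives, legitimate since $M_\rho^{\mathrm{hol}}$ is killed), makes $R^\rho_{jk}=\Res_\rho((\alpha^\rho_j)^*\psi^\rho_k)$ invertible. Given a rational $\psi$, its expansion at $\rho$ lies in $M_\rho$, and the same residue-linear-system argument as in item~$(b)$ yields constants $c^k_\rho$ with $\psi-\sum_k c^k_\rho\psi^\rho_k=E[\tilde\phi]+h$ for some $\tilde\phi\in M_\rho$ and $h\in M_\rho^{\mathrm{hol}}$. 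Finally I would truncate $\tilde\phi$ to a Laurent polynomial $\phi_\rho$ in $(r-\rho)$, keeping enough terms that $E$ applied to the discarded tail --- which has high enough order relative to the order of $E$ and the pole order of its coefficients at $\rho$ --- is holomorphic at $\rho$; then $E[\phi_\rho]$ and $E[\tilde\phi]$ share the same principal part at $\rho$, so $\bar\psi^\rho=\psi-\sum_k c^k_\rho\psi^\rho_k-E[\phi_\rho]$ is holomorphic at $\rho$, as required.

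The step I expect to carry the real content is the refinement inside item~$(a)$: Lemma~\ref{lem:coker-semibounded} controls $\coker_+^\rho E$, but item~$(a)$ concerns the strictly smaller quotient $M_\rho/(E[M_\rho]+M_\rho^{\mathrm{hol}})$, in which one may additionally discard holomorphic functions, and one must check both that this quotient is still finite dimensional and that its linear dual is realized \emph{inside} $\ker_+^\rho E^*$ precisely by the holomorphic-at-$\rho$ solutions rather than by all of $\ker_+^\rho E^*$. The two technical underpinnings --- the residue computation identifying the annihilator of the holomorphic classes with $V_\rho$, and the local-to-global truncation of $\tilde\phi$ to an honest Laurent polynomial without disturbing its principal part at $\rho$ --- are routine but are where the argument is not pure bookkeeping. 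Everything else (extracting dual bases against a non-degenerate pairing, and solving the resulting square linear systems for the $c^k$) is straightforward and, by the multiplier/exponent hypothesis, effective.
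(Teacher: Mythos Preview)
Your proposal is correct; part~$(b)$ is essentially identical to the paper's argument. For part~$(a)$ you take a slightly different route. The paper works directly with the \emph{subspace} $W\subset\coker_+^\rho E$ consisting of classes represented by polynomials in $(r-\rho)^{-1}$: it chooses the $\psi^\rho_k$ to represent a basis of $W$, then picks $\alpha^\rho_j\in\ker_+^\rho E^*$ dual to them via the non-degenerate pairing of Lemma~\ref{lem:coker-semibounded}, without identifying the $\alpha^\rho_j$'s any further. You instead pass to the further \emph{quotient} $M_\rho/(E[M_\rho]+M_\rho^{\mathrm{hol}})$ and identify its dual inside $\ker_+^\rho E^*$ as the holomorphic-at-$\rho$ solutions $V_\rho$. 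Your route is a little more work (the annihilator computation) but buys a sharper characterization of the $\alpha^\rho_j$'s and may yield fewer $\psi^\rho_k$'s, since $\dim V_\rho=\dim W-\dim(W\cap[M_\rho^{\mathrm{hol}}])\le\dim W$; the paper's route is more direct but can be redundant. The final truncation of $\tilde\phi$ to a Laurent polynomial is handled the same way in both.
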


\begin{proof}
Part $(a)$ follows from Proposition~\ref{prp:findim} and
Lemma~\ref{lem:coker-semibounded}. The implication is that both
$\ker_+^\rho E^*$ and $\coker_+^\rho E$ are finite dimensional. Hence,
the dimension of the subspace of $\coker_+^\rho E$ represented by
polynomials in $(r-\rho)^{-1}$ must also be finite. Let $\big(\psi^\rho_k\big)$
be a set of polynomials in $(r-\rho)^{-1}$ that represent a basis of this
subspace. Then, by non-degeneracy of the residue pairing, we can pick a
finite subset $\big(\alpha_\rho^k\big)$ satisfying the desired non-degeneracy
condition. So by subtracting the appropriate linear combination $\sum_k
c_\rho^k \psi^\rho_k$ from $\psi$ we can find a representative
$\bar{\psi}^\rho$ from the same equivalence class in $\coker_+^\rho E$
that has no poles at $r = \rho$. This argument gives us~$\phi_\rho$ that
is merely bounded from below as a Laurent series, but once we have it,
we can simply truncate it after some sufficiently high finite power or
$r$.

Part $(b)$ follows from Proposition~\ref{prp:findim} and
Lemma~\ref{lem:coker-bounded} similarly. Once we know that $\ker_{-+}
E^*$ and $\coker E$ are finite dimensional, it remains only to pick a
basis $(\alpha_j)$ for $\ker_{-+} E^*$ and a finite set $(\psi_k)$ whose
representatives form a basis of $\coker E$.
\end{proof}

\begin{rmk}
Theorem~\ref{thm:coker} also implies that
$\coker_{\mathrm{rat}}^{\rho_1,\rho_2,\ldots} E$, the cokernel of $\phi
\mapsto \psi = E[\phi]$ for~$\psi$ and $\phi$ rational with potential
pole locations fixed at $\rho_1, \rho_2, \ldots$, is finite dimensional.
The functions $\big(\psi_k, \psi^{\rho_1}_k, \psi^{\rho_2}_k, \ldots\big)$ then
constitute a complete set of independent representatives of the cokernel
equivalence classes exactly when they satisfy the conditions in $(a)$ and
$(b)$ of the theorem.
\end{rmk}

We conclude this section by applying Theorem~\ref{thm:coker} to the
operator $\DD(s_0,s_1)$ from the Regge--Wheeler decoupling
equation~\eqref{eq:rw-decoupling}. Let us denote the multiplier and
characteristic matrices as in the following identity:
\begin{equation}
 \DD(s_0,s_1) S_\rho \delta_n (r-\rho)^n
 = T_\rho \big(E_{\rho,n} \delta_n + O(r-\rho)\big) (r-\rho)^n.
\end{equation}

The only singular points of $\DD(s_0,s_1)$ are $r=0$, $2M$, $\oo$. They are
accompanied by the following data:
\begin{itemize}\itemsep=0pt
\item $\rho=0$:
 $\check{\sigma}_0(\DD(s_0,s_1)) = \{\pm s_0 \pm s_1\}$, with
\begin{gather}
S_0 = \begin{bmatrix}1 & 0 \\0 & \tfrac{r}{2M}\end{bmatrix} \!, \quad\
T_0 = \begin{bmatrix}\tfrac{2M}{r} & 0 \\0 & \tfrac{4M^2}{r^2}\end{bmatrix}\!,\quad\
E_{0,n} = \frac{1}{2{\rm i}M\omega}
\begin{bmatrix} n & s_0^2-s_1^2 \\ s_1^2-s_0^2 & n(n^2 - 2s_0^2 - 2s_1^2) \end{bmatrix}\!,
\\
\det E_{0,n}= -\frac{(n+s_0+s_1) (n+s_0-s_1) (n-s_0+s_1) (n-s_0-s_1)}{4M^2\omega^2}.
\end{gather}

\item $\rho=2M$:
$\check{\sigma}_{2M}(\DD(s_0,s_1)) = \{ 0 \}$, with
\begin{gather}
S_{2M} = \begin{bmatrix}1 & 0 \\0 & 1\end{bmatrix} \!, \qquad
T_{2M} = \begin{bmatrix}\tfrac{1}{f} & 0 \\0 & \tfrac{1}{f}\end{bmatrix}\!, \qquad 
E_{2M,n} = \frac{1}{2{\rm i}M\omega}
\begin{bmatrix}
n & 0 \\0 & n (n^2 + 16 M^2\omega^2)\end{bmatrix}\!,
\\
\det E_{2M,n} = -\frac{n^2(n^2 + 16 M^2\omega^2)}{4M^2\omega^2}.
\end{gather}

\item $\rho=\oo$:
$\hat{\sigma}_\oo(\DD(s_0,s_1)) = \{ 0 \}$, with
\begin{gather}
S_\oo = \begin{bmatrix} 1 & 0 \\ 0 & \tfrac{1}{\omega r} \end{bmatrix} \!, \qquad
T_\oo = \begin{bmatrix} \tfrac{1}{\omega r} & 0 \\ 0 & 1 \end{bmatrix}\!, \qquad
E_{\oo,n} = \frac{1}{\rm i} \begin{bmatrix} n & 0 \\ 0 & 4n \end{bmatrix}\!,\qquad
\det E_{\oo,n} = -4n^2.
\end{gather}

\item $\rho \not\in \{ 0, 2M, \oo \}$:
$\hat{\sigma}_\rho(\DD(s_0,s_1)) = \{ 0, 1, 2 \}$, with
\begin{gather}
S_\rho = \begin{bmatrix} 1 & 0 \\ 0 & 1 \end{bmatrix}\!, \quad\
T_\rho = \begin{bmatrix}\tfrac{1}{(r-\rho)} & 0 \\0 & \tfrac{1}{(r-\rho)^3}\end{bmatrix}, \quad\
E_{\rho,n} = \frac{1}{{\rm i}\omega}
\begin{bmatrix} n & 0 \\ 0 & \rho^2 f(\rho)^2 n (n-1) (n-2) \end{bmatrix}\!,
\\
\det E_{\rho,n} = -\frac{\rho^2 f(\rho)^2 n^2 (n-1) (n-2)}{\omega^2}.
\end{gather}
\end{itemize}

Notice that, besides $r=0$ and $r=\oo$, the only singular point of
$\DD(s_0,s_1)$ is $r=2M$. In practice (in Section~\ref{sec:schw}), we
will only deal with rational functions that have poles at these three
points, so we will restrict our attention only to them. Given the above
data, it is possible to explicitly calculate (using computer algebra)
the dimensions of the solution spaces $\ker_+^{2M} \DD(s_0,s_1)^*$ and
$\ker_{-+} \DD(s_0,s_1)^*$, as well as their bases. Since these solutions
are presented by infinite series, it is impractical to write them here
explicitly. Instead, we will simply report their dimensions, while
explicitly writing a choice corresponding cokernel representatives
$\big(\Delta_k, \Delta^{2M}_k\big)$. In practice, the self-adjointness
$\DD(s_0,s_1)^* = \DD(s_0,s_1)$ turns out to be extremely useful and
saves a significant amount of effort since the same computer algebra
code can be used for both operators.

For $\Delta$ involving poles at points other than $\{0,\oo\}$, we can
report the following.
\begin{itemize}\itemsep=0pt
\item $\rho \not\in \{0,2M,\oo\}$:
 $\dim \ker_+^\rho \DD(s_0,s_1)^* = 2$, while
 \begin{equation}
 (\Delta^\rho_k) = \left(
 \begin{bmatrix} \tfrac{1}{r-\rho} \\ 0 \end{bmatrix}\!,
 \begin{bmatrix} 0 \\ \tfrac{1}{r-\rho} \end{bmatrix}
 \right) \!.
 \end{equation}
 Although we will not need this case later (in Section~\ref{sec:schw}),
 we report it here for completeness.
\item $\rho=2M$:
 $\dim \ker_+^{2M} \DD(s_0,s_1)^* = 2$, while
 \begin{equation}
 (\Delta^\rho_k) = \left(
 \begin{bmatrix} \tfrac{1}{r-2M} \\ 0 \end{bmatrix}\!,
 \begin{bmatrix} 0 \\ \tfrac{1}{r-2M} \end{bmatrix}
 \right) = \left(
 \begin{bmatrix} \tfrac{1}{rf} \\ 0 \end{bmatrix}\!,
 \begin{bmatrix} 0 \\ \tfrac{1}{rf} \end{bmatrix}
 \right)\!.
 \end{equation}
\end{itemize}

For $\Delta$ a Laurent polynomial, we can report the following. This
data depends on the values of $s_0$ and $s_1$, so we concentrate only on
the cases that will be needed later (in Section~\ref{sec:schw}).
\begin{itemize}\itemsep=0pt
\item $(s_0,s_1) = (0,0)$:
 $\dim \ker_{-+} \DD(s_0,s_1)^* = 5$, while
 \begin{equation}
 (\Delta_k) = \left(
 \begin{bmatrix} \tfrac{1}{r} \\ 0 \end{bmatrix}\!,
 \begin{bmatrix} 0 \\ 1 \end{bmatrix}\!,
 \begin{bmatrix} 0 \\ \tfrac{1}{r} \end{bmatrix}\!,
 \begin{bmatrix} 0 \\ \tfrac{1}{r^2} \end{bmatrix}\!,
 \begin{bmatrix} 0 \\ \tfrac{1}{r^3} \end{bmatrix}
 \right)\!.
 \end{equation}
\item $(s_0,s_1) = (1,1)$:
 $\dim \ker_{-+} \DD(s_0,s_1)^* = 5$, while
 \begin{equation}
 (\Delta_k) = \left(
 \begin{bmatrix} \tfrac{1}{r} \\ 0 \end{bmatrix}\!,
 \begin{bmatrix} 0 \\ 1 \end{bmatrix}\!,
 \begin{bmatrix} 0 \\ \tfrac{1}{r} \end{bmatrix}\!,
 \begin{bmatrix} 0 \\ \tfrac{1}{r^2} \end{bmatrix}\!,
 \begin{bmatrix} 0 \\ \tfrac{1}{r^4} \end{bmatrix}
 \right) \!.
 \end{equation}
\item $(s_0,s_1) = (2,2)$:
 $\dim \ker_{-+} \DD(s_0,s_1)^* = 5$, while
 \begin{equation}
 (\Delta_k) = \left(
 \begin{bmatrix} \tfrac{1}{r} \\ 0 \end{bmatrix}\!,
 \begin{bmatrix} 0 \\ 1 \end{bmatrix}\!,
 \begin{bmatrix} 0 \\ \tfrac{1}{r} \end{bmatrix}\!,
 \begin{bmatrix} 0 \\ \tfrac{1}{r^2} \end{bmatrix}\!,
 \begin{bmatrix} 0 \\ \tfrac{1}{r^6} \end{bmatrix}
 \right)\!.
 \end{equation}
\item $(s_0,s_1) = (0,1)$ or $(1,0)$:
 $\dim \ker_{-+} \DD(s_0,s_1)^* = 4$, while
 \begin{equation}
 (\Delta_k) = \left(
 \begin{bmatrix} \tfrac{1}{r} \\ 0 \end{bmatrix}\!,
 \begin{bmatrix} 0 \\ 1 \end{bmatrix}\!,
 \begin{bmatrix} 0 \\ \tfrac{1}{r} \end{bmatrix}\!,
 \begin{bmatrix} 0 \\ \tfrac{1}{r^3} \end{bmatrix}
 \right)\!.
 \end{equation}
\item $(s_0,s_1) = (0,2)$ or $(2,0)$:
 $\dim \ker_{-+} \DD(s_0,s_1)^* = 4$, while
 \begin{equation}
 (\Delta_k) = \left(
 \begin{bmatrix} \tfrac{1}{r} \\ 0 \end{bmatrix}\!,
 \begin{bmatrix} 0 \\ 1 \end{bmatrix}\!,
 \begin{bmatrix} 0 \\ \tfrac{1}{r} \end{bmatrix}\!,
 \begin{bmatrix} 0 \\ \tfrac{1}{r^4} \end{bmatrix}
 \right)\!.
 \end{equation}
\item $(s_0,s_1) = (1,2)$ or $(2,1)$:
 $\dim \ker_{-+} \DD(s_0,s_1)^* = 4$, while
 \begin{equation}
 (\Delta_k) = \left(
 \begin{bmatrix} \tfrac{1}{r} \\ 0 \end{bmatrix}\!,
 \begin{bmatrix} 0 \\ 1 \end{bmatrix}\!,
 \begin{bmatrix} 0 \\ \tfrac{1}{r} \end{bmatrix}\!,
 \begin{bmatrix} 0 \\ \tfrac{1}{r^5} \end{bmatrix}
 \right)\!.
 \end{equation}
\end{itemize}

\begin{rmk} \label{rmk:relbound}
Obviously, the choice of cokernel representatives given above is not
unique. Our choice was mainly dictated by simplicity and practicality.
However, it satisfies another important requirement. When each cokernel
representative is rewritten as a differential operator $\Delta =
\frac{\Delta_0 + \Delta_1 f\del_r}{f}$ via the
parametrization~\eqref{eq:Delta-param}, the resulting coefficients
$\Delta_0$ and $\Delta_1$ are bounded functions of $r$ on the interval
$(2M,\oo)$. This requirement is sufficient to ensure that the $\Delta$
is \emph{relatively bounded} by $\D_s$ as a (possibly unbounded)
operator $L^2\big(2M,\oo; f^{-1}\,\d{r}\big) \to L^2(2M,\oo; f\,\d{r})$, which
means that $\|\Delta\phi\| \le C_2 \| f\D_s \phi \| + C_0 \|\phi\|$ for each
$\phi \in D(\D_s)$, with some constants $C_0, C_2 > 0$. In turn,
relative boundedness ensures that the operator matrix inverse
\begin{equation}
 \begin{bmatrix}
 \D_{s_0} & \Delta \\
 0 & \D_{s_1}
 \end{bmatrix}^{-1}
 =
 \begin{bmatrix}
 \D_{s_0}^{-1} & -\D_{s_0}^{-1} \Delta \D_{s_1}^{-1} \\[.5ex]
 0 & \D_{s_1}^{-1}
 \end{bmatrix}
\end{equation}
is itself bounded whenever the inverses $\D_{s_0}^{-1}$ and
$\D_{s_1}^{-1}$ are bounded.

To see the desired relative boundedness, it is obviously equivalent to
establish the relative boundedness of $f\Delta$ with respect to $f\D_s$.
If we introduce the Regge--Wheeler \emph{tortoise coordinate} $r_* =
r_*(r)$, defined by $\d{r_*} = \frac{\d r}{f}$, then $L^2\big(2M,\oo;
f^{-1}\d{r}\big) \cong L^2(-\oo,\oo; \d{r_*}) =: \mathcal{H}$ and $f\D_s$ is
an unbounded operator on $\mathcal{H}$ (in fact~\cite{dimock-kay-schw1},
it is also essentially self-adjoint starting from the core of compactly
supported test functions, with its $\omega$-resolvent set equal to
$\mathbb{C}\setminus \mathbb{R}$). By the boundedness conditions on
$\Delta_0$ and $\Delta_1$, $f\Delta$ can be relatively bounded on
$\mathcal{H}$ by a constant coefficient differential operator in
$\del_{r_*} = f\del_r$ of order one or higher. But since the
$\del_{r_*}$-coefficients of $f\D_s$ are bounded and the second order
coefficients even tend to positive constants as $|r_*|\to \oo$, $f\D_s$
can relatively bound any constant coefficient operator of order two or
lower, which hence includes $f\Delta$.
\end{rmk}

Finally, we would like to record here another result that was
actually already implicit in the examples presented
in~\cite[Section~4.1]{kh-rwtriang}, but not made explicit there. If we
consider the Regge--Wheeler decoupling equation~\eqref{eq:rw-decoupling}
with zero right-hand side, then solutions constitute morphisms from
$\D_{s_1} \phi_1 = 0$ to $\D_{s_0} \phi_0 = 0$. Such a morphism, if it
existed between different $s_0$ and $s_1$ it could be called a
\emph{spin raising/lowering} operator for radial modes. Such spin
raising and lowering operators are known to exist for angular
modes~\cite{whiting-shah-spheroidal} and their existence for radial
modes would have important applications. Unfortunately, explicit
computation shows that the result is negative, at least when we restrict
ourselves to operators with rational coefficients.

\begin{thm}\label{thm:rw-maps}
For $\omega \ne 0$ and for $s_0,s_1 \in \{0,1,2\}$ there exist no
non-vanishing morphisms between the equations $\D_{s_0}\phi_0 = 0$ and
$\D_{s_1} \phi_1 = 0$, with the exception of the identity morphism
$\phi_0 \mapsto \phi_1 = \phi_0$ when $s_0 = s_1$.
\end{thm}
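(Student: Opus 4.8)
The plan is to recognise a morphism from $\D_{s_1}\phi_1=0$ to $\D_{s_0}\phi_0=0$ as exactly a rational differential operator $\delta$ admitting an $\eps$ with $\D_{s_0}\circ\delta=\eps\circ\D_{s_1}$, i.e.\ a solution of the homogeneous decoupling equation \eqref{eq:rw-offdiag} with $\Delta=\bar\Delta=0$. Using the order reduction of \cite[Theorem~3.1]{kh-rwtriang}, any such $\delta$ differs from a first-order operator only by a term of the form $\delta'\circ\D_{s_1}$, which vanishes identically on solutions of $\D_{s_1}\phi_1=0$ (it is a homotopy-trivial morphism, and being first-order it would force $\delta'=0$ anyway); hence it is enough to classify the first-order morphisms and record which are non-vanishing on solutions. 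First I would substitute the first-order parametrisations \eqref{eq:delta-param} and \eqref{eq:eps-param} for $\delta$ and $\eps$ in terms of rational functions $(\delta_+,\delta_-)$, which, exactly as in the derivation of \eqref{eq:rw-decoupling}, recasts the morphism condition as the homogeneous linear ODE system $\DD(s_0,s_1)[\delta_+,\delta_-]^{\mathsf T}=0$.

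The second step is to determine the space of rational solutions of $\DD(s_0,s_1)[\delta_+,\delta_-]^{\mathsf T}=0$. Since the coefficients of $\DD(s_0,s_1)$ are Laurent polynomials whose only singular points are $r=0$, $2M$, $\oo$, any rational solution can have poles only at $r=0$ and $r=2M$ and is of bounded degree at infinity. By Proposition~\ref{prp:findim} (equivalently \cite[Theorem~2.4]{kh-rwtriang}), the leading/trailing multiplier and integer-characteristic-exponent data tabulated above for $\DD(s_0,s_1)$ --- namely $\check\sigma_0(\DD)=\{\pm s_0\pm s_1\}$ at $r=0$, $\check\sigma_{2M}(\DD)=\{0\}$ at $r=2M$, and $\hat\sigma_\oo(\DD)=\{0\}$ at $r=\oo$ --- turn this into a concrete finite-dimensional linear problem: they bound the pole order of $\delta_\pm$ at $r=0$ in terms of $s_0+s_1$, forbid poles at $r=2M$, and bound the behaviour at $r=\oo$, so that $(\delta_+,\delta_-)$ ranges over an explicit finite-dimensional space of Laurent polynomials in $r$ whose coefficients are then pinned down by substitution into the two-component system. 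I would carry out this computer-algebra computation for the pairs $(s_0,s_1)$ with $0\le s_0\le s_1\le 2$; the remaining ordered pairs follow from the $s_0\leftrightarrow s_1$ symmetry noted after \eqref{eq:rw-decoupling}, which for vanishing right-hand side sends a solution $(\delta_+,\delta_-)$ to $(\delta_+^*,-\delta_-^*)$.

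The expected outcome is that for $s_0\ne s_1$ the only rational solution is $(\delta_+,\delta_-)=(0,0)$, while for $s_0=s_1=s$ the solution space is one-dimensional and spanned by $(\delta_+,\delta_-)=(1,0)$, which under \eqref{eq:delta-param} is precisely the operator $\delta=\id$. Translating back, the only non-vanishing morphism between $\D_{s_0}\phi_0=0$ and $\D_{s_1}\phi_1=0$ is then a scalar multiple of the identity, and only in the case $s_0=s_1$, as claimed.

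I expect the main obstacle to be the finite-dimensional computation itself rather than the conceptual set-up: one must extract sharp a priori pole-order and degree bounds from the multiplier data (this is where the hypothesis $\omega\ne 0$ is essential, since $\DD(s_0,s_1)$ degenerates at $\omega=0$), make sure the Laurent-polynomial ansatz is genuinely exhaustive so that no rational solution is overlooked, and then verify that only the expected solutions survive across all six parameter pairs. A secondary point to watch is the interface with the order reduction: one should confirm that passing from a general $\delta$ to its first-order part neither creates nor destroys morphisms that are non-vanishing on solutions.
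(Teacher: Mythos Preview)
Your proposal is correct and follows essentially the same approach as the paper: recognise a morphism as a rational solution of the homogeneous decoupling equation $\DD(s_0,s_1)[\delta_+,\delta_-]^{\mathsf T}=0$ (after the order reduction of \cite[Theorem~3.1]{kh-rwtriang}), then use the multiplier and characteristic-exponent data to reduce the search for rational solutions to a finite-dimensional linear-algebra problem handled by computer algebra for each pair $(s_0,s_1)$ with $s_0,s_1\in\{0,1,2\}$. The paper's proof says exactly this, only more tersely, referring to \cite[Theorem~2.4]{kh-rwtriang} for the finite-dimensional reduction.
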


The proof follows from direct computation, by using the methods
of~\cite[Theorem~2.4]{kh-rwtriang} to find all rational solutions of the
Regge--Wheeler decoupling equation~\eqref{eq:rw-decoupling}. We have only
done the computation for the above values of $s_0$ and $s_1$ because the
location of the characteristic exponents of the equation depends on
these spins in such a way that the dimension of the corresponding linear
algebra problem grows as $O(|s_0|+|s_1|)$. But it is likely that the
above result actually holds for all $s_0$ and $s_1$.

\subsection{Equivalence of Regge--Wheeler and Zerilli equations}\label{sec:rwz}

We have defined the spin-$s$ Regge--Wheeler operator $\D_s$
in~\eqref{eq:rw}. When $s=2$, $\D_2$ is closely related to the so-called
\emph{Zerilli} operator
\begin{equation}
 \D_2^+ \phi = \del_r f \del_r \phi
 - \frac{\A_l + 3(\B_l-2)f_1 (1+3f_1) + 9 f_1^3}
 {r^2(\B_l-2+3f_1)^2}
 + \frac{\omega^2}{f} \phi.
\end{equation}
In fact, the equations $\D_2\phi = 0$ and $\D_2^+\phi = 0$ are
equivalent (in the sense of Section~\ref{sec:formal}) for all values of
$\omega$ except the \emph{algebraically special}
frequencies~\cite{couch-newman, maassen-brink} satisfying
\begin{equation} \label{eq:alpha-def}
 \alpha := (6\omega r f_1)^2 + \A_l^2 = (12 M \omega)^2 + \A_l^2 = 0.
\end{equation}
This can be seen from the following equivalence diagram:
\begin{equation} \label{eq:rwz-equiv}
\scalebox{1.2}{\begin{tikzcd}[column sep=3cm,row sep=2cm]
 \bullet
 \ar{d}[description]{\D_2^+}
 \ar[shift left]{r}{\frac{(12M)^2}{\alpha} f\C_-}
 \&
 \bullet
 \ar[shift left]{l}{f \C_+}
 \ar{d}[description]{\frac{\alpha}{(12M)^2} \D_2}
 \\
 \bullet
 \ar[shift right,swap]{r}{\C_- f}
 \ar[dashed, bend left=50]{u}{\frac{(12M)^2}{\alpha} f}
 \&
 \bullet
 \ar[shift right,swap]{l}{\frac{(12M)^2}{\alpha} \C_+ f}
 \ar[dashed, bend right=50,swap]{u}{\frac{(12M)^4}{\alpha^2} f,}
\end{tikzcd}}
\end{equation}
where
\begin{equation}
 \C_\pm = \pm \del_r + \frac{\A_l}{12M f} + \frac{3f_1}{r(\B_l-2+3f_1)}
\end{equation}
are the \emph{Chandrasekhar transformation
operators}~\cite[Section~4.26]{chandrasekhar}, \cite{dotti-prl,
gjk-darboux}. The validity of the above equivalence diagrams may be
checked by direct calculation, or can be seen more abstractly from the
identities
\begin{gather}
 \D_2 = -\C_- f \C_+ + \frac{\alpha}{(12M)^2 f}, \qquad
 \D_2^+ = -\C_+ f \C_- + \frac{\alpha}{(12M)^2 f}.
\end{gather}

The necessity of the failure of the equivalence identities
in~\eqref{eq:rwz-equiv} when $\alpha = 0$ (due to division by $\alpha$)
can be explained as follows. When $\alpha=0$, the two operators simplify
to $\D_2 = -\C_- f \C_+$ and $\D_2^+ = -\C_+ f \C_-$. Reducing each of
them to a first order system gives the equivalences
\begin{gather}
\D_2\sim \begin{bmatrix} \C_+ & \tfrac{1}{f} \\ 0 & \C_- \end{bmatrix}\!, \qquad
\begin{bmatrix} \C_- & \tfrac{1}{f} \\ 0 & \C_+ \end{bmatrix} \sim \D_2^+\!.
\end{gather}
Then, using the same machinery that we applied earlier to upper
triangular Regge--Wheeler systems (Theorem~\ref{thm:rw-maps} and other
results), one can conclude that $(a)$ there are no non-vanishing morphisms
between the equations $\C_+\phi_+ = 0$ and $\C_-\phi_- = 0$ and $(b)$
neither of the upper triangular systems above can be decoupled to
diagonal form. On the other hand, given $(a)$, it is not hard to see that
the existence of an equivalence $\D_2 \sim \D_2^+$ would imply that both of
the above upper triangular systems decouple to the diagonal form
$\left[\begin{smallmatrix} \C_+ & 0 \\ 0 & \C_-
\end{smallmatrix}\right]$, which contradicts $(b)$.

Thus, no manipulation will be able to get rid of the poles at $\alpha=0$
from the equivalence in~\eqref{eq:rwz-equiv}.

\begin{rmk} \label{rmk:rwz-selfadj}
Since both $\D_2$ and $\D_2^+$ are formally self-adjoint, while $\alpha$
and $M$ are real, taking formal adjoints of all the operators in the
equivalence diagram~\eqref{eq:rwz-equiv} gives another equivalence
diagram. The way that we have distributed various factors of $\alpha$
and the adjoint identity $\C_+ = \C_-^*$ makes the new equivalence
diagram identical to the original one, and hence the whole diagram
formally self-adjoint. While we could have chosen to distribute various
constant factors among the operators in~\eqref{eq:rwz-equiv} in
different ways, the choice of no prefactor for $\D_2^+$ and the
self-adjointness of the whole diagram dictates the coefficient
$\frac{\alpha}{(12M)^2}$ in front of $\D_2$.
\end{rmk}

\section{Tensor wave equations on Schwarzschild}\label{sec:schw}

Consider the exterior Schwarzschild spacetime $(\M,\gf)$ of
mass $M>0$, where $\M \cong \mathbb{R}^2\times S^2$. When
$\big(S^2,\Omega\big)$ is the unit round sphere and $(t,r)$, with $-\oo<t<\oo$
and $2M<r<\oo$, are the Schwarzschild coordinates on the $(\mathbb{R}^2,
g)$ factor, the exterior Schwarzschild metric has the following
$(2+2)$-warped product form:
\begin{equation}
 \gf := g + r^2 \Omega, \qquad
 g := -f(r) \, {\d t}^2 + \frac{{\d r}^2}{f(r)}, \qquad
 f := 1 - \frac{2M}{r}.
\end{equation}
For convenience, we also define (consistently with Section~\ref{sec:rw})
\begin{equation}
 f_1 := r \del_r f = \frac{2M}{r}.
\end{equation}
This is a vacuum spacetime \big(the Einstein tensor ${}^4G_{\mu\nu} =
{}^4R_{\mu\nu} - \frac{1}{2}\, {}^rR \, \gf_{\mu\nu} = 0$ vanishes\big)
descri\-bing the asymptotically flat region outside the horizon of a
static spherically symmetric black hole of mass $M$.

The goal of this section is to write down the vector and Lichnerowicz
wave equations
\begin{equation} \label{eq:vw-lich-4def}
 \sqone v_\mu := \sqf v_\mu = 0
 \qquad \text{and} \qquad
 \sqtwo p_{\mu\nu} := \dalf p_{\mu\nu}
 - 2 \, {}^4 R_{(\mu}{}^{\la\ka}{}_{\nu)} p_{\la\ka} = 0,
\end{equation}
on this spacetime and to reduced them to their radial mode equations,
with respect to spherical and time harmonic modes. Eventually we will
also show their equivalence to a simplified triangular system of
Regge--Wheeler equations~\eqref{eq:rw}.

For background, let us recall that the vector wave operator $\sqone$ can
be obtained from the Maxwell equation by adding a term factoring through
the harmonic (Lorenz) gauge fixing condition $\grf^\nu v_\nu = 0$,
namely
\begin{equation}
 \sqone v_\mu = \grf^\nu \left(\grf_\nu v_\mu - \grf_\mu v_\nu\right)
 + \grf_\mu \grf^\nu v_\nu \,.
\end{equation}
Similarly, the Lichnerowicz wave operator $\sqtwo$ can be obtained from
the linearized Einstein equation by adding a term factoring through the
harmonic (de~Donder) gauge fixing condition $\grf^\la
\overline{p}_{\nu\la} = 0$, namely
\begin{equation}
 \sqtwo p_{\mu\nu}
 = -2\, \overline{{}^4\dot{G}_{\mu\nu}[p]}
 + 2\, \grf_{(\mu} \grf^\la \overline{p}_{\nu)\la},
\end{equation}
where $\overline{p}_{\mu\nu} = p_{\mu\nu} - \frac{1}{2} p^\la_\la \,
\gf_{\mu\nu}$ is the \emph{trace reversal} operation and
${}^4\dot{G}_{\mu\nu}[p]$ is the linearized Einstein tensor, defined by
${}^4G_{\mu\nu}\big[\gf + p\big] = {}^4\dot{G}_{\mu\nu}[p] + O\big(p^2\big)$, keeping in
mind that ${}^4G_{\mu\nu}\big[\gf\big] = 0$ for the background Schwarzschild
metric.

We will proceed by first using the covariant formalism
of~\cite{martel-poisson} for $(2+2)$-warped products and spherical
harmonics, then by using Schwarzschild coordinates, then by presenting
the ingredients needed to implement the general decoupling strategy from
Section~\ref{sec:triang-strategy}, and finally by giving the explicit
equivalence. The intermediate calculations in the decoupling strategy
are rather long, unenlightening and best performed by computer algebra.
Thus, we present only the final result, but along the way giving some
guiding comments on how it could be reproduced. The reader is referred
to~\cite{kh-vwtriang}, where the intermediate steps were worked out in
more detail for the vector wave operator $\sqone$.

To start, let us briefly introduce the covariant $2+2$ formalism
from~\cite{martel-poisson}. We use Greek indices, $(\mu,\nu,\ldots)$,
for \emph{spacetime} or $\M$-tensors, lower or upper case Latin indices,
$(a,b,\ldots)$ or $(A,B,\ldots)$, respectively, for tensors on the
$\mathbb{R}^2$ or $S^2$ factors. The tangent and cotangent subspaces of
$\mathbb{R}^2$ and $S^2$ are naturally identified with the appropriate
subspaces of $T\M$ and $T^*\M$, allowing us to map tensors from its
factors to $\M$. Thus, any spacetime tensor can be decomposed into
sectors with possibly mixed lower-upper case Latin indices, e.g.,
\begin{gather}
v_\mu \to \begin{bmatrix} v_a \\ r v_A \end{bmatrix}\!,\qquad
w^\nu \to \begin{bmatrix} w^b & r^{-1} v^B \end{bmatrix} \!.
\end{gather}
Note the extra factors of $r$, which make it convenient to raise and
lower Greek indices with $\gf_{\mu\nu}$, while their sector components
have their indices raised and lowered respectively by $g_{ab}$ and
$\Omega_{AB}$. This convention extends to higher rank tensors by
respecting tensor products. We define the following spacetime tensors by
their decomposition into sectors:
\begin{gather}
g_{\mu\nu}\rightarrow
\begin{bmatrix} g_{ab} & 0 \\ 0 & 0 \end{bmatrix}\!, \qquad
\Omega_{\mu\nu}\rightarrow
\begin{bmatrix} 0 & 0 \\ 0 & \Omega_{AB} \end{bmatrix}\!.
\end{gather}
We denote respectively by $\nabla_a$ and $D_A$ the Levi-Civita
connections on $\big(\mathbb{R}^2, g\big)$ and $\big(S^2, \Omega\big)$. In~terms of them
and the decomposition into sectors, the spacetime Levi-Civita connection
is determined by
\begin{equation}
\grf_\mu v_\nu \to
\begin{bmatrix}
\nabla_a v_b & r \nabla_a v_B \\ r D_A v_b & r^2 D_A \tfrac{v_B}{r}
\end{bmatrix}
- \begin{bmatrix}
0 & 0 \\ r \frac{r_b}{r} v_A & -r^2 \Omega_{AB} \tfrac{r^c v_c}{r}
\end{bmatrix}\!,\qquad \text{where} \quad
 r_a = \nabla_a r = (\d r)_a.
\end{equation}
Now, working in Schwarzschild $(t,r)$ coordinates, we define $t^a =
(\del_t)^a$, which is a timelike Killing vector. We also introduce the
notation $t_a = g_{ab} t^b = -f (\d t)_a$ and $\eps_{ab} = 2 (\d t)_{[a}
(\d r)_{b]} = -2 f^{-1} t_{[a} r_{b]}$. Then $r_a r^a = f$, $t_a t^a =
-f$, and $t^a = -\eps^{ab} r_b$. The Levi-Civita connection $\nabla_a$
is determined by
\begin{equation} \label{eq:schw-cov-frame}
\nabla_a t_b = \frac{f_1}{2r} \eps_{ab} \qquad \text{and} \qquad
\nabla_a r_b = \frac{f_1}{2r} g_{ab}.
\end{equation}
The respective Riemann tensors of the $\big(\mathbb{R}^2, g\big)$ and $\big(S^2,
\Omega\big)$ factors are
\begin{equation}
R_{abcd} = \frac{f_1}{r^2} (g_{ac}g_{bd} - g_{ad}g_{bc})\qquad \text{and} \qquad
R_{ABCD} = \Omega_{AC}\Omega_{BD} - \Omega_{AD}\Omega_{BC}.
\end{equation}
Finally, the spacetime Riemann tensor (computed
in~\cite[equation~(23)]{kh-tangherlini}, for instance) is conveniently
expressed as
\begin{align*}
{}^4 R_{\mu\nu\la\ka}
 = {}&\frac{f_1}{r^2} (g_{\mu\la}g_{\nu\ka} - g_{\mu\ka}g_{\nu\la}) 
 + \frac{f_1}{r^2} (\Omega_{\mu\la}\Omega_{\nu\ka} - \Omega_{\mu\ka}\Omega_{\nu\la})
\\
 &- \frac{f_1}{2 r^2} (g_{\mu\la} \Omega_{\nu\ka}
 - g_{\nu\la} \Omega_{\mu\ka}
 - g_{\mu\ka} \Omega_{\nu\la}
 + g_{\nu\ka} \Omega_{\mu\la}).
\end{align*}

Scalar spherical harmonics $Y^{lm}$, where $l=0,1,2,\ldots$ and $-l \le
m \le l$ are respectively the \emph{angular momentum} and
\emph{magnetic} quantum numbers, are normalized eigenfunctions of the
Laplacian on the unit sphere $\big(S^2, \Omega\big)$, defined by the conditions
\begin{equation}
 D_A D^A Y^{lm} = - l(l+1) Y^{lm}, \qquad 
 (Y^{lm})^* = Y^{l(-m)}, \qquad
 \langle (Y^{l'm'})^* \; Y^{lm} \rangle = \delta_{l'l} \delta_{m'm},
\end{equation}
where we have used the short-hand $\langle - \rangle =
\int_{\mathbb{S}^2}\epsilon \, (-)$, with $\epsilon_{AB}$ the volume
form on $\big(S^2, \Omega\big)$. Where no confusion would arise, we shall
regularly omit the $l$ and $m$ indices from $Y^{lm}$ and their
coefficients.
We use the convenient notations (already mentioned at the top of
Section~\ref{sec:appl})
\begin{equation}
 \B_l := l(l+1)
 \qquad \text{and} \qquad
 \A_l := (l-1)l(l+1)(l+2) = \B_l (\B_l-2).
\end{equation}
By taking covariant derivatives of the $Y$ scalar harmonics and
combining them with~$\Omega_{AB}$ and~$\epsilon_{AB}$, we can define
also vector and tensor spherical harmonics:
\begin{alignat*}{3}
 &Y_A:= D_A Y, \qquad&&
 Y_{AB}:= D_A Y_B + \frac{\B_l}{2} \Omega_{AB} Y,&
\\
 & X_A := \epsilon_{BA} D^B Y, \qquad&&
 X_{AB}:= D_A X_B + \frac{\B_l}{2} \epsilon_{AB} Y.&
\end{alignat*}
The above definitions are chosen so that the following convenient
orthogonality and normalization properties hold:
\begin{alignat*}{3}
 &\big\langle \big(Y^A_{l'm'}\big)^* \; Y_A^{lm} \big\rangle = \B_l \delta_{l'l} \delta_{m'm},&&&
 \\
 &\big\langle \big(X^A_{l'm'}\big)^* \; X_A^{lm} \big\rangle = \B_l \delta_{l'l} \delta_{m'm},\qquad&&
 \big\langle \big(X^A_{l'm'}\big)^* \; Y_A^{lm} \big\rangle = 0,&
 \\
& \big\langle \big(Y^{AB}_{l'm'}\big)^* \; Y_{AB}^{lm} \big\rangle = \frac{\A_l}{2} \delta_{l'l} \delta_{m'm},\qquad&&
 \big\langle \big(Y^{l'm'} \Omega^{AB}\big)^* \; Y_{AB}^{lm} \big\rangle = 0,&
 \\
 &\big\langle \big(X^{AB}_{l'm'}\big)^* \; X_{AB}^{lm} \big\rangle = \frac{\A_l}{2} \delta_{l'l} \delta_{m'm},\qquad&&
 \big\langle \big(X^{AB}_{l'm'}\big)^* \; Y_{AB}^{lm} \big\rangle = 0.&
\end{alignat*}
Note the consequent vanishing of $Y_A^{0m} = X_A^{0m} = Y_{AB}^{0m} =
X_{AB}^{0m} = 0$ and $Y_{AB}^{1m} = X_{AB}^{1m} = 0$. The tensorial
spherical harmonics also satisfy the following eigenvalue and divergence
identities:
\begin{alignat*}{3}
 &D_C D^C Y_{A} = -(\B_l-1) Y_{A},\qquad&&
 D^A Y_{A} = - \B_l Y,&
 \\
& D_C D^C X_{A} = -(\B_l-1) X_{A},\qquad&&
 D^A X_{A} = 0,&
 \\
 &D_C D^C Y_{AB} = -(\B_l-4) Y_{AB},\qquad&&
 D^A Y_{AB} = -\frac{\A_l}{2\B_l} Y_{B},&
 \\
& D_C D^C X_{AB} = -(\B_l-4) X_{AB},\qquad&&
 D^A X_{AB} = -\frac{\A_l}{2\B_l} X_B.&
\end{alignat*}
Under the parity inversion of the sphere, all the $Y$-harmonics
transform as $Y \mapsto (-)^l Y$ and so are called \emph{even}, while
all the $X$-harmonics transform as $X \mapsto (-)^{l+1} X$ and so are
called \emph{odd}~\cite{regge-wheeler}. Our conventions for spherical
harmonics follow~\cite{martel-poisson}, whose Appendix~A may be
consulted for comparison with other conventions in the literature.

When needed, we will also decompose sectorial tensors on the
$\big(\mathbb{R}^2, g\big)$ factor into coordinate components with respect to
the Schwarzschild coordinates $(t,r)$ with the conventions
\begin{alignat*}{3}
&v_b = v_t (\d t)_b + v_r (\d r)_b,\qquad&&
h_{cd} = h_{tt} (\d t)_c (\d t)_d + 2 h_{tr} (\d t)_{(c} (\d r)_{d)} + h_{rr} (\d r)_c (\d r)_d,&
\\
&v^a = v^t (\del_t)^a + v^r (\del_r)^a,\qquad&&
{h}^{ab} = {h}^{tt} (\del_t)^a (\del_t)^b + 2 {h}^{tr} (\del_t)^{(a} (\del_r)^{b)} + {h}^{rr} (\del_r)^a (\del_r)^b.&
\end{alignat*}
If we raise and lower indices with $g_{ab}$, this convention implies the
relations $v_t = -f v^t$, $v_r = f^{-1} v^r$, $h_{tt} = (-f)^2 h^{tt}$, etc.

\begin{rmk} \label{rmk:laurent-poles}
In the sequel, while working at the mode level, we allow ourselves the
freedom of dividing by the expressions $\omega$, $\B_l$ or $\A_l$ in the
mode parameters. The expressions for various radial mode operators will
be valid for any value of the mode parameters where they are
well-defined, which might exclude $\omega=0$, $l=0$ or $l=1$, when
dividing by $\omega$ or $\B_l$. These singular cases would need to be
investigated separately. However, since all expressions will be
rational, it would be sufficient to consider Laurent expansions in the
mode parameters and just compare corresponding coefficients.
\end{rmk}

\emph{Harmonic dependence} on time with frequency $\omega$ will refer to
being proportional to ${\rm e}^{-{\rm i}\omega t}$ with time independent
coefficients.

\begin{rmk} \label{rmk:dimensions}
We will use the convention that $r$ and $\omega^{-1}$ will carry the
dimensions of \emph{length}, $r^{-1}$, $\omega$ and $\del_r$ will carry
the dimensions of \emph{inverse length}, while $f$, $f_1$, $\B_l$ and
$\A_l$ will be dimensionless. Clearly, by this convention, length
dimension is additive under multiplication and composition of
differential operators. For practical convenience, upon radial mode
decomposition, we will try to make use of only dimensionless
differential operators, by for instance inserting extra factors of
$\omega$, where needed.
\end{rmk}

\begin{rmk}
Whenever we need to take a formal adjoint of a differential operator, we
will follow the conventions from Section~\ref{sec:adjoint}.
\end{rmk}

Before proceeding to mode decomposing concrete equations, let us make
some general remarks about the notation. If $s_{\mu\nu\ldots}$ is a
tensor field, then schematically we will denote its mode coefficients
(omitting indices) as
\begin{equation}
 s = \bm{s}^\text{even}\cdot Y + \bm{s}^\text{odd} \cdot X
 = \big(\mathfrak{s}^\text{even}\cdot
 Y + \mathfrak{s}^\text{odd}\cdot X\big) {\rm e}^{-{\rm i}\omega t}.
\end{equation}
The $\bm{s}$ coefficients represent the even and odd multiplets obtained
after the decomposition of a~spacetime tensor into sectors, absorbing
all angular dependence and spherical tensor indices into the spherical
harmonics $Y$ and $X$. The $\mathfrak{s}$ coefficients represent the
even and odd multiplets after decomposing the remaining sectorial
tensors in $\bm{v}$ into coordinate components with respect to the
Schwarzschild coordinates $(t,r)$ on $(\mathbb{R}^2,g)$, absorbing all
time dependence into the harmonic factor ${\rm e}^{-{\rm i}\omega t}$. When
convenient, the $\bm{s}$ and $\mathfrak{s}$ coefficients may be defined
with extra $r$-, $l$- and $\omega$-dependent normalization factors.

Given a spacetime differential operator $s' = O[s]$ (in general $s$ and
$s'$ need not be of the same tensor rank), which is even under inversion
of the $\big(S^2,\Omega\big)$ factor, we will schematically denote its action on
the mode coefficients as
\begin{alignat*}{3}
&\bm{s}'^{\text{even}} = O_e[\bm{s}^{\text{even}}],\qquad&&
 \mathfrak{s}'^{\text{even}} = O_e[\mathfrak{s}^{\text{even}}],&
\\
&\bm{s}'^{\text{odd}} = O_o[\bm{s}^{\text{odd}}],\qquad&&
 \mathfrak{s}'^{\text{odd}} = O_o[\mathfrak{s}^{\text{odd}}].&
\end{alignat*}
The $O_e$ and $O_o$ operators acting on the $\mathfrak{s}$ coefficients
are precisely the \emph{radial mode operator} versions of the original
spacetime operator $O$ that we will eventually be working with.

In principle, the intermediate stage of the decomposition involving
$\bm{s}$ and $\bm{s}'$ can be omitted from the presentation. But we
believe that it is worth keeping them, because these expressions would
be useful for obtaining the explicit radial mode equations for other
coordinate systems on the $\big(\mathbb{R}^2, g\big)$ factor, for instance the
Eddington--Finkelstein coordinates instead of the Schwarzschild one. In
addition, these intermediate expressions would be useful checkpoints for
anyone trying to reproduce our calculations.

\subsection{Scalar wave equation} \label{sec:sw}

Before proceeding to tensor wave equations, for illustration purposes,
let us first handle the simplest case of the scalar wave equation
\begin{subequations}
\begin{equation} \label{eq:sw-def}
 z' = \sqzer z := \dalf z = 0.
\end{equation}
Following earlier conventions, we use the notation $z = \bm{z} Y =
\mathfrak{z} Y {\rm e}^{-{\rm i}\omega t}$ and $z' = \frac{\bm{z}'}{r^2} Y =
\frac{\mathfrak{z}'}{r^2} Y {\rm e}^{-{\rm i}\omega t}$. First, we separate out $S^2$
dependence via spherical harmonics:
\begin{equation} \label{eq:sw-covar} \noeqref{eq:sw-covar}
 \frac{1}{r^2} \, \bm{z}' \, Y
 = \sqzer \big(\bm{z}(t,r) Y\big)
 = \big(\sqzer \bm{z}(t,r)\big) Y,
 \qquad \text{with} \quad
 \bm{z}' = \sqzer \bm{z}(t,r) := \big(\nabla^a r^2\nabla_a - \B_l\big) \bm{z}.
\end{equation}
Then, we separate out the $t$ coordinate via harmonic time dependence:
\begin{gather} \label{eq:sw-coord}
 \mathfrak{z}' {\rm e}^{-{\rm i}\omega t}
 = \sqzer \big(\mathfrak{z}(r) {\rm e}^{-{\rm i}\omega t}\big)
 = \big(\sqzer \mathfrak{z}(r)\big) {\rm e}^{-{\rm i}\omega t},
 \qquad 
 \mathfrak{z}' = \sqzer \mathfrak{z}(r)
 := \bigg(\del_r f r^2 \del_r + \frac{\omega^2 r^2}{f} - \B_l\bigg) \mathfrak{z}.
\end{gather}
\end{subequations}
The last formula defines the \emph{radial mode ODE} for the scalar wave
equation~\eqref{eq:sw-def}. Note that, to economize notation, we have
reused the symbol $\sqzer$, though it changes meaning when acting on
different function spaces, which should be clear from context.

Note that the final expression for $\sqzer \mathfrak{z}(r)$ is formally
self-adjoint, $\sqzer^* = \sqzer$. That is no accident, since we have
chosen our definitions such that the sesquilinear forms (with compactly
supported $\mathfrak{y}(r)$ and $\mathfrak{z}(r)$)
\begin{align}
 \int_{\mathbb{R}^2} \eps \, r^2 \int_{S^2} \epsilon \,
 \mathfrak{y}(r)^* (Y^{l'm'})^* {\rm e}^{{\rm i}\omega' t} \sqzer \mathfrak{z}(r) Y^{lm} {\rm e}^{-{\rm i}\omega t}
 &= \delta_{l'l} \delta_{m'm}
 \int_{\mathbb{R}^2} \eps \, \mathfrak{y}(r)^* {\rm e}^{{\rm i}\omega' t} \sqzer \mathfrak{z}(r) {\rm e}^{-{\rm i}\omega t}
\\
 &= 2\pi \delta(\omega'-\omega) \delta_{l'l} \delta_{m'm}
 \int_{2M}^\oo \d{r} \, \mathfrak{y}(r)^* \sqzer \mathfrak{z}(r)
\end{align}
all agree and are symmetric up to complex conjugation, because the
original $\sqzer$ operator is formally self-adjoint on spacetime. We
will follow the same convention in the rest of the section. This means
that some of the components of the radial mode equations in the tensor
case might have spurious looking factors of $\B_l$ or $\A_l$. Such
factors arise from absorbing the norms of the corresponding tensor
harmonics and are needed to maintain manifest formal self-adjointness of
the radial mode ODEs.

The transformation
\begin{gather}
 \phi_0 = -{\rm i}\omega r \, \mathfrak{z},\qquad
 \mathfrak{z} = -\frac{1}{{\rm i}\omega r} \phi_0,
\end{gather}
puts $\sqzer \mathfrak{z} = 0$ in direct equivalence with the $s=0$
Regge--Wheeler equation $\D_0 \phi_0 = 0$:
\begin{gather} \label{eq:sw-triang}
\sqzer \bigg({-}\frac{1}{{\rm i}\omega r}\bigg) \phi_0
= ({\rm i}\omega r) \frac{1}{\omega^2} \D_0 \phi_0,\qquad
\frac{1}{\omega^2} \D_0 (-{\rm i}\omega r) \mathfrak{z}
= \bigg(\frac{1}{{\rm i}\omega r}\bigg) \sqzer \mathfrak{z}.
\end{gather}
In the sense of Section~\ref{sec:formal} we have an equivalence of
$\sqzer$ with $\D_0$, exhibited by the diagram
\begin{equation} \label{eq:sw-equiv}
\scalebox{1.2}{\begin{tikzcd}[column sep=huge,row sep=huge]
 \bullet
 \ar[swap]{d}[description]{\sqzer}
 \ar[shift left]{r}{k_{0}=-{\rm i}\omega r}
 \&
 \bullet
 \ar[shift left]{l}{\bar{k}_{0}=-\frac{1}{{\rm i}\omega r}}
 \ar{d}[description]{\tilde{\sqzer}}
 \\
 \bullet
 \ar[shift left]{r}{k'_{0}=\frac{1}{{\rm i}\omega r}}
 \ar[dashed,bend left=45]{u}{h_0 = 0}
 \&
 \bullet
 \ar[shift left]{l}{\bar{k}'_{0}={\rm i}\omega r}
 \ar[dashed,bend right=45]{u}[swap]{\tilde{h}_0 = 0,}
\end{tikzcd}}
 \qquad \text{where} \quad
 \tilde{\sqzer} := \frac{1}{\omega^2} \D_0 \,.
\end{equation}

Notice that the original operator $\sqzer$ is dimensionless. We have
chosen to include various factors of $\omega$ to keep all the other
operators in the diagram dimensionless as well. We will follow the same
convention throughout all further calculations.

\begin{rmk}
Note also that the above equivalence diagram is \emph{self-adjoint} in
the following sense: $\sqzer^* = \sqzer$, $\tilde{\sqzer}^* =
\tilde{\sqzer}$, $h_0^* = h_0$ and $\tilde{h}_0^* = \tilde{h}_0$, while
$\bar{k}_0 = (k'_0)^*$ and $\bar{k}'_0 = (k_0)^*$. That is, taking
formal adjoints of all operators in the diagram, we obtain the same
diagram. The various factors of ${\rm i}$ and $\omega$ were chosen also to
exhibit this property. In particular, the powers of $\omega$ in $k_0$,
$k'_0$ and $\tilde{\sqzer}$ are not all independent. There is an obvious
economy to self-adjoint equivalence diagrams: the morphism need only be
specified in one direction, with the inverse morphism recovered by
taking adjoints. In subsequent cases, we have succeeded in identifying a
self-adjoint equivalences, but with slightly modified notion of
self-adjointness for the decoupled triangular form. It is not obvious to
us at the moment why we were successful in that sense and whether that
is due to a~fortuitous coincidence or a general pattern.
\end{rmk}

\subsection{Vector wave equation} \label{sec:vw}

To present the radial mode equation for $\sqone v_\mu = 0$, we follow
the same pattern as in the case of the scalar wave equation
(Section~\ref{sec:sw}).

The tensor spherical harmonic decomposition of a covariant vector field
is
\begin{gather}
v_\mu = v_\mu^{\text{even}} + v_\mu^{\text{odd}}, \qquad
v_\mu^{\text{even}}\to \sum_{lm}
\begin{bmatrix}\bm{v}_a^{lm}\, Y^{lm} \\\bm{u}^{lm}\, r Y_A^{lm}\end{bmatrix}\!,\qquad
v_\mu^{\text{odd}}\to \sum_{lm}
\begin{bmatrix} 0 \\ \bm{w}^{lm}\, r X_A^{lm} \end{bmatrix} \!.
\end{gather}
For conciseness, as before, we will omit the $lm$ indices. Following the
notation conventions given at the top of Section~\ref{sec:schw}, we
first separate out the spherical harmonics (bold coefficients) and then
harmonic time dependence (fraktur coefficients). We will need to use two
conventions to normalize the coefficients, which we distinguish by
primes (recall that primes are decorations and not derivatives):
\begin{equation}\label{eq:v-modsep}
\begin{aligned}
 v_\nu
 &\to \begin{bmatrix} \bm{v}_b Y \\ \bm{u}\, r Y_B \end{bmatrix}
 + \begin{bmatrix} 0 \\ \bm{w}\, r X_B \end{bmatrix}
 = \left(\begin{bmatrix} \mathfrak{v}_b Y \\ \mathfrak{u}\, r Y_B \end{bmatrix}
 + \begin{bmatrix} 0 \\ \mathfrak{w}\, r X_B \end{bmatrix} \right) {\rm e}^{-{\rm i}\omega t},
 \\
 v^{\prime\mu}
 &\to \frac{1}{r^2} \left(\begin{bmatrix} \bm{v}^{\prime a} Y \\ \tfrac{\bm{u}'}{\B_l}\, r^{-1} Y^A \end{bmatrix}
 + \begin{bmatrix} 0 \\ \tfrac{\bm{w}'}{\B_l}\, r^{-1} X^A \end{bmatrix} \right)
 = \frac{1}{r^2} \left(\begin{bmatrix} \mathfrak{v}^{\prime a} Y \\ \tfrac{\mathfrak{u}'}{\B_l}\, r^{-1} Y^A \end{bmatrix}
 + \begin{bmatrix} 0 \\ \tfrac{\mathfrak{w}'}{\B_l}\, r^{-1} X^A \end{bmatrix} \right) {\rm e}^{-{\rm i}\omega t}.
\end{aligned}
\end{equation}
\begin{subequations}
With these conventions, again reusing the symbol $\sqone$ at each stage
of mode separation of $v'_\mu = \sqone v_\mu$, we have
\begin{gather}
\begin{bmatrix} \bm{v}^{\prime a} \\ \bm{u}' \end{bmatrix}
= \sqone_e \begin{bmatrix} \bm{v}_b \\ \bm{u} \end{bmatrix}\notag
:= \begin{bmatrix} \nabla_m r^2 \nabla^m \bm{v}^a \\ \nabla_m \B_l r^2 \nabla^m \bm{u} \end{bmatrix}
- \B_l \begin{bmatrix} \bm{v}^a \\ \B_l \bm{u}\end{bmatrix}
\\ \hphantom{\begin{bmatrix} \bm{v}^{\prime a} \\ \bm{u}' \end{bmatrix}=}
{}+ \begin{bmatrix} -2 r^a r^b & 2\B_l r^a \\ 2\B_l r^b & 0 \end{bmatrix}
\begin{bmatrix} \bm{v}_b \\ \bm{u} \end{bmatrix}
+ (1-f)\begin{bmatrix} 0 \\ \B_l \bm{u} \end{bmatrix}= 0,
\label{eq:vwe-covar}\noeqref{eq:vwe-covar}
\\[1ex]
\label{eq:vwo-covar} \noeqref{eq:vwo-covar}
\bm{w}' =\sqone_o w :=
\nabla_m \B_l r^2 \nabla^m \bm{w} - \B_l (\B_l \bm{w}) + \B_l (1-f) \bm{w} = 0,
\end{gather}
after separating out the $S^2$ dependence, and
\begin{gather}
\begin{bmatrix} \mathfrak{v}^{\prime t} \\ \mathfrak{v}^{\prime r} \\ \mathfrak{u}'\end{bmatrix}
= \sqone_e
\begin{bmatrix} \mathfrak{v}_t \\ \mathfrak{v}_r \\ \mathfrak{u} \end{bmatrix}\notag
:= \begin{bmatrix}[r@{\,}l] -\del_r f^{-1} r^2 f\del_r &\mathfrak{v}_t \\
\del_r f r^2 f \del_r &\mathfrak{v}_r \\ \del_r \B_l r^2 f\del_r &\mathfrak{u} \end{bmatrix}
+ \bigg( \frac{\omega^2}{f} - \frac{\B_l}{r^2} \bigg)
\begin{bmatrix}
[r@{\,}l] -f^{-1} r^2 &\mathfrak{v}_t \\ f r^2 &\mathfrak{v}_r \\ \B_l r^2 &\mathfrak{u} \end{bmatrix}
\\ \hphantom{\begin{bmatrix} \mathfrak{v}^{\prime t} \\ \mathfrak{v}^{\prime r} \\ \mathfrak{u}'\end{bmatrix}=}
{}+ {\rm i}\omega r \frac{f_1}{f}
\begin{bmatrix} 0 & 1 & 0 \\ -1 & 0 & 0 \\ 0 & 0 & 0\end{bmatrix}
\begin{bmatrix} \mathfrak{v}_t \\ \mathfrak{v}_r \\ \mathfrak{u} \end{bmatrix}
+\begin{bmatrix} 0 & 0 & 0 \\ 0 & -2 f^2 & 2\B_l f \\ 0 & 2\B_l f & \B_l f_1 \end{bmatrix}
\begin{bmatrix} \mathfrak{v}_t \\ \mathfrak{v}_r \\ \mathfrak{u} \end{bmatrix}= 0, \label{eq:vwe-coord}
\\[1ex]
\label{eq:vwo-coord}
\mathfrak{w}' = \sqone_o \mathfrak{w}:= \del_r \B_l r^2 f \del_r \mathfrak{w}
+ \bigg(\frac{\omega^2}{f} - \frac{\B_l}{r^2}\bigg) \B_l r^2 \mathfrak{w}
 + \B_l (1-f) \mathfrak{w}= 0,
\end{gather}
after separating out the time dependence.
Our normalization conventions for the mode coefficients were chosen
precisely to maintain manifest formal self-adjointness, $\sqone_{\rm e}^* =
\sqone_e$ and $\sqone_o^* = \sqone_o$, at each stage of the mode
separation.
\end{subequations}

\begin{rmk}
Note also that these expressions are valid for all values of $\omega$
and $l$. When $l=0$, $Y_A^{l=0} = X_A^{l=0} = 0$, meaning that their
coefficients are not well-defined. By consistency, the corresponding
components of $\sqone_e$ and $\sqone_o$ simply vanish.
\end{rmk}

Next, we will introduce the operators needed to decouple radial modes
into \emph{pure gauge} and \emph{constraint violating} modes, to set up
the decoupling strategy (cf.~Section~\ref{sec:triang-strategy}). The key
identities, analogous to $E_0 T = T' E_1$ and $E_1 D = D' E_0$, hold
already at the spacetime level:
\begin{equation}
 \sqzer \, \grf^\mu v_\mu = \grf^\mu \, \sqone v_\mu
 \qquad \text{and} \qquad
 \sqone \, \grf_\mu z = \grf_\mu \, \sqzer z \,.
\end{equation}
In addition, analogous to $T D = H_T E_0$, we have the composition identity
\begin{equation}
\grf^\mu \grf_\mu z = \sqzer z.
\end{equation}

Now we define the operators $T_1$, $T'_1$, $D_1$, $D'_1$ and their
successive mode decompositions:

\noindent
for $z = \grf^\mu v_\mu =: T_1[v_\nu ]$,
\begin{gather*}
 \bm{z} = T_{1e}\begin{bmatrix}[c@{~}c] \bm{v}_b \\ \bm{u} \end{bmatrix}\!
 := \begin{bmatrix}
 \frac{1}{r^2} \nabla^b r^2 &
 -\frac{\B_l}{r}
 \end{bmatrix} \ \begin{bmatrix} \bm{v}_b \\ \bm{u} \end{bmatrix}\!,
 \quad\,
 \frac{\mathfrak{z}}{{\rm i}\omega} = T_{1e}\begin{bmatrix} \mathfrak{v}_t \\ \mathfrak{v}_r \\ \mathfrak{u} \end{bmatrix}\!
 := \begin{bmatrix} \frac{1}{f} &
 \frac{1}{{\rm i}\omega r^2} \del_r f r^2 &
 -\frac{\B_l}{{\rm i}\omega r} \end{bmatrix} \!
\begin{bmatrix} \mathfrak{v}_t \\ \mathfrak{v}_r \\ \mathfrak{u} \end{bmatrix}\!,\\
 T_{1o} := 0;
\end{gather*}
for $z' = \grf_\mu v^{\prime\mu} =: T'_1[v']$,
\begin{gather*}
\bm{z}' = T'_{1e}\begin{bmatrix} \bm{v}^{\prime b} \\ \bm{u}' \end{bmatrix}
:= \frac{1}{r}\begin{bmatrix}r\nabla_b &-1\end{bmatrix}\!
\begin{bmatrix} \bm{v}^{\prime b} \\ \bm{u}' \end{bmatrix}\!,
\qquad
\frac{\mathfrak{z}'}{{\rm i}\omega} = T'_{1e}\begin{bmatrix} \mathfrak{v}^{\prime t} \\ \mathfrak{v}^{\prime r} \\ \mathfrak{u}' \end{bmatrix}
:= \begin{bmatrix} -1 & \frac{1}{{\rm i}\omega}\del_r & -\frac{1}{{\rm i}\omega r} \end{bmatrix}\!
\begin{bmatrix} \mathfrak{v}^{\prime t} \\ \mathfrak{v}^{\prime r} \\ \mathfrak{u}' \end{bmatrix}\!,
\\
T'_{1o} := 0 ;
\end{gather*}
for $v_\mu = \grf_\mu z =: D_1[z]$,
\begin{gather*}
\begin{bmatrix} \bm{v}_a \\ \bm{u} \end{bmatrix}
= D_{1e}[\bm{z}] := \frac{1}{r} \begin{bmatrix} r\, \nabla_a \\ 1 \end{bmatrix} \bm{z},
\qquad
\frac{1}{{\rm i}\omega} \begin{bmatrix} \mathfrak{v}_t \\ \mathfrak{v}_r \\ \mathfrak{u} \end{bmatrix}
= D_{1e}[\mathfrak{z}] := \begin{bmatrix} -1 \\ \tfrac{1}{{\rm i}\omega} \del_r \\[.5ex]
\tfrac{1}{{\rm i}\omega r} \end{bmatrix} \mathfrak{z},\qquad
D_{1o}:= 0 ;
\end{gather*}
for $v'_\mu = \grf_\mu z' =: D'_1[z']$,
\begin{gather*}
\begin{bmatrix} \bm{v}^{\prime a} \\ \bm{u}' \end{bmatrix}
= D'_{1e}[\bm{z}'] :
= \begin{bmatrix} r^2 \nabla^a \tfrac{1}{r^2} \\[.5ex] \tfrac{\B_l}{r} \end{bmatrix} \bm{z}',\qquad
\frac{1}{{\rm i}\omega} \begin{bmatrix} \mathfrak{v}^{\prime t} \\ \mathfrak{v}^{\prime r} \\ \mathfrak{u}'\end{bmatrix}
= D'_{1e}[\mathfrak{z}']
:= \begin{bmatrix}\tfrac{1}{f} \\[1ex] f r^2 \del_r \tfrac{1}{{\rm i}\omega r^2} \\[1ex]
\tfrac{\B_l}{{\rm i}\omega r} \end{bmatrix} \mathfrak{z}',\qquad
D'_{1o} := 0.
\end{gather*}

The extra ${\rm i}\omega$ factors in the normalizations have been
chosen so that all operators acting on radial modes are dimensionless
and the first two of the following commutative diagrams remain valid at
each stage of mode separation:
\begin{equation} \label{eq:vw-input}
\begin{tikzcd}[column sep=huge,row sep=huge]
 \bullet \ar[swap]{d}{\D_{D_1} = \sqzer} \ar{r}{D_1} \&
 \bullet \ar{d}{\sqone}
 \\
 \bullet \ar[swap]{r}{D'_1} \&
 \bullet
\end{tikzcd} \!,
 \qquad
\begin{tikzcd}[column sep=huge,row sep=huge]
 \bullet \ar[swap]{d}{\sqone} \ar{r}{T_1} \&
 \bullet \ar{d}{\D_{T_1} = \sqzer}
 \\
 \bullet \ar[swap]{r}{T'_1} \&
 \bullet
\end{tikzcd} \!,
 \qquad
\begin{tikzcd}[column sep=huge,row sep=huge]
 \bullet \ar[swap]{d}{\begin{bmatrix}\sqone \\ T_1\end{bmatrix}} \ar{r}{\Phi_1} \&
 \bullet \ar{d}{\D_{\Phi_1}}
 \\
 \bullet \ar[swap]{r}{\begin{bmatrix}\Phi'_1 & -\Delta_{\Phi_1 T_1}\end{bmatrix}} \&
 \bullet
\end{tikzcd}\! \!.
\end{equation}
The last diagram holds at the radial mode level and shows the decoupling
of \emph{gauge invariant} modes (cf.~Section~\ref{sec:triang-strategy}).
It decomposes into independent even and odd sectors as follows:
\begin{gather}
\Phi_{1e} := \begin{bmatrix} 0 -f f\del_r r \end{bmatrix} \!,\qquad
\Phi_{1o} := -{\rm i}\omega r,
\\
\Phi'_{1e} := \frac{1}{r^2} \begin{bmatrix} 0 -\B_l r^2\del_r \frac{f}{r} \end{bmatrix} \!,\qquad
\Phi'_{1o} := \frac{1}{{\rm i}\omega r},
\\
\D_{\Phi_{1e}} := \frac{\B_l}{\omega^2} \D_1,\qquad
\D_{\Phi_{1o}} := \frac{\B_l}{\omega^2} \D_1,
\\
\Delta_{\Phi_{1e} T_{1e}} := -\frac{\B_l f_1}{{\rm i}\omega r},\qquad
\Delta_{\Phi_{1o} T_{1o}} := 0.
\end{gather}

We are now ready to give the full triangular decoupling of the even and
odd sectors of the vector wave equation $\sqone v_\mu = 0$.

\subsubsection{Odd sector} \label{sec:vwod}
This sector is structurally similar to the scalar wave case
(Section~\ref{sec:sw}). It is particularly simple; since it does not
contain any gauge modes, it is not constrained by the harmonic gauge
condition and the operators $\Phi_{1o}$, $\Phi'_{1o}$ are directly
invertible. Thus, the transformation
\begin{gather}
\phi_1 = -{\rm i}\omega r w,\qquad
w = -\frac{1}{{\rm i}\omega r} \phi_1,
\end{gather}
puts $\sqone_o w = 0$ in direct equivalence with the $s=1$ Regge--Wheeler
equation $\D_1 \phi_1 = 0$:
\begin{gather} \label{eq:vwo-final-triang}
\sqone_o \bigg({-}\frac{1}{{\rm i}\omega r}\bigg) \phi_1
= ({\rm i}\omega r) \frac{\B_l}{\omega^2} \D_1 \phi_1,\qquad
\frac{\B_l}{\omega^2} \D_1 (-{\rm i}\omega r) w
= \bigg(\frac{1}{{\rm i}\omega r}\bigg) \sqone_o w.
\end{gather}
In diagrammatic form, we have
\begin{equation} \label{eq:vw1o-equiv}
\scalebox{1.2}{
\begin{tikzcd}[column sep=huge,row sep=huge]
 \bullet
 \ar[swap]{d}[description]{\sqone_o}
 \ar[shift left]{r}{k_{1o}=-{\rm i}\omega r}
 \&
 \bullet
 \ar[shift left]{l}{\bar{k}_{1o}=-\frac{1}{{\rm i}\omega r}}
 \ar{d}[description]{\tilde{\sqone}_o}
 \\
 \bullet
 \ar[shift left]{r}{k'_{1o}=\frac{1}{{\rm i}\omega r}}
 \ar[dashed,bend left=45]{u}{0}
 \&
 \bullet
 \ar[shift left]{l}{\bar{k}'_{1o}={\rm i}\omega r}
 \ar[dashed,bend right=45]{u}[swap]{0,}
\end{tikzcd}}
 \qquad \text{where} \quad
 \tilde{\sqone}_o := \frac{\B_l}{\omega^2} \D_1 .
\end{equation}
This equivalence diagram is \emph{self-adjoint} in the sense explained
in Section~\ref{sec:sw} for the scalar wave equation.

\subsubsection{Even sector} \label{sec:vwev}
This sector is more complicated and is the first prototype for the
abstract approach to triangular decoupling that we outlined earlier in
Section~\ref{sec:formal}. All that we need to feed into it are the even
sectors of the diagrams~\eqref{eq:vw-input} and the additional
composition identities
\begin{equation}
 T_{1e} D_{1e} = \underbrace{-\frac{1}{\omega^2 r^2}}_{H_{T_{1e}}} \sqzer
 \qquad \text{and} \qquad
 \Phi_{1e} D_{1e} = \underbrace{~ 0 ~}_{H_{\Phi_{1e}}} \sqzer.
\end{equation}
Since the implementation of our approach in this particular case was
explained in detail in~\cite{kh-vwtriang} (in particular, see the
step-by-step discussion there in Section~3.2), we only state the final
result, which has only been slightly adjusted for the purposes of this
work (the end of the section shows how). The final decoupled triangular
form is
\begin{equation}
\sqone_e \begin{bmatrix}\mathfrak{v}_t \\ \mathfrak{v}_r \\ \mathfrak{u} \end{bmatrix} = 0
 \iff
\underbrace{{\frac{1}{\omega^2}} \begin{bmatrix}
\D_0 & 0 & -\tfrac{f_1}{r^2} \big(\B_l+\tfrac{1}{2}f_1\big) \\
0 & \B_l \D_1 & 0 \\ 0 & 0 & \D_0 \end{bmatrix}}_{\tilde{\sqone}_e}
\begin{bmatrix} \phi_0 \\ \phi_1 \\ \phi'_0 \end{bmatrix} = 0.
\end{equation}

While $\tilde{\sqone}_e$ is obviously not formally self-adjoint, its
equivalence with the formally self-adjoint $\sqone_e$ survives in the
existence of an operator $\Sigma_{1e}$ effecting the equivalence between
$\tilde{\sqone}_e$ and $\tilde{\sqone}_{\rm e}^*$, namely $\tilde{\sqone}_e
\Sigma_{1e} = \Sigma_{1e} \tilde{\sqone}_{\rm e}^*$, where
\begin{gather}
 \Sigma_{1e} = \begin{bmatrix}
 0 & 0 & 1 \\
 0 & 1 & 0 \\
 1 & 0 & 0
 \end{bmatrix}\!.
\end{gather}

The precise equivalence identities take the form
\begin{gather*} \label{eq:vw1e-equiv}
\scalebox{1.2}{\begin{tikzcd}[column sep=huge,row sep=huge]
 \bullet
 \ar[swap]{d}[description]{\sqone_e}
 \ar[shift left]{r}{k_{1e}}
 \&
 \bullet
 \ar[shift left]{l}{\bar{k}_{1e}}
 \ar{d}[description]{\tilde{\sqone}_e}
 \\
 \bullet
 \ar[shift left]{r}{k'_{1e}}
 \ar[dashed,bend left=40]{u}{h_{1e}}
 \&
 \bullet
 \ar[shift left]{l}{\bar{k}'_{1e}}
 \ar[dashed,bend right=40]{u}[swap]{\tilde{h}_{1e}}
\end{tikzcd}}\!,
\end{gather*}
where
\begin{gather*}
 \bar{k}_{1e} =
 \begin{bmatrix}
-{\rm i}\omega r & {\rm i}\omega r \B_l & \tfrac{1}{2} {\rm i}\omega r (\B_l+f)
\\[.5ex]
\del_r r & -\tfrac{\B_l}{r} \del_r r^2 & -\tfrac{1}{2}\big((\B_l-f) \tfrac{1}{r^2} \del_r r^3 + 2f+f_1\big)
\\[1ex]
1 & -\tfrac{f}{r}\del_r r^2 - \B_l & -\tfrac{1}{2} (2f\del_r r + \B_l + f)
\end{bmatrix} \frac{1}{\omega^2 r^2},
\\[1ex]
\bar{k}'_{1e} =
\begin{bmatrix}
\tfrac{{\rm i}\omega r}{f} & -\tfrac{{\rm i}\omega r}{f} & -\tfrac{{\rm i}\omega r (\B_l+f)}{2f}
\\[1ex]
f r^2 \del_r \tfrac{1}{r} & -f r \del_r & -\tfrac{1}{2} f ((\B_l-f) \del_r r + 2f - f_1)
\\[.5ex]
\B_l & -r\del_r f - \B_l & -\tfrac{1}{2} \B_l (2 r \del_r f + \B_l - f)
 \end{bmatrix}\!,
 \\[1ex]
 h_{1e} = \frac{1}{\omega^2 r^2}
 \begin{bmatrix}
0 & 0 & 0 \\ 0 & 1 & 0 \\ 0 & 0 & -\tfrac{f}{\B_l}
 \end{bmatrix}\!,
 \qquad
 \tilde{h}_{1e} =
 \begin{bmatrix}
-\tfrac{1}{2}(\B_l-f) & \tfrac{1}{2}(\B_l+f) & \tfrac{1}{4}(\B_l+f)^2
\\[1ex]
-1 & \tfrac{\B_l+f}{\B_l} & \tfrac{1}{2}(\B_l+f)
\\ [1ex]
1 & -1 & -\tfrac{1}{2}(\B_l-f)
 \end{bmatrix}\!,
\end{gather*}
while the remaining operators can be recovered from the identities
\begin{equation}
k_{1e} = \Sigma_{1e} (\bar{k}'_{1e})^*, \qquad
k'_{1e} = \Sigma_{1e} (\bar{k}_{1e})^*,
\end{equation}
where also
\begin{equation}
h_{1e}^* = h_{1e}, \qquad
\tilde{h}_{1e}^* = \Sigma_{1e} \tilde{h}_{1e} \Sigma_{1e}.
\end{equation}

\begin{rmk} \label{rmk:twist-selfadj}
These last identities embody the modified sense in which the equivalence
diagram~\eqref{eq:vw1e-equiv} is self-adjoint, in the sense discussed
earlier in Section~\ref{sec:sw}, but with a twist provided by the matrix
$\Sigma_{1e}$. In fact, if we replace $\tilde{\sqone}_e$ by $\Sigma_{1e}
\tilde{\sqone}_e$, it is no longer upper triangular, but it is formally
self-adjoint
\begin{equation}
 \big(\Sigma_{1e} \tilde{\sqone}_e\big)^* = \Sigma_{1e} \tilde{\sqone}_e,
\end{equation}
and the twist by $\Sigma_{1e}$ is no longer necessary. Since we place
importance on the upper triangular form, we will not take advantage of
this replacement.
\end{rmk}

The above upper triangular form of $\tilde{\sqone}_e$ allows us to
classify all of its symmetries (or automorphisms in the sense of
Section~\ref{sec:formal}). The key result is the absence of
non-vanishing morphisms between the Regge--Wheeler equations $\D_{s_0}$
and $\D_{s_1}$, except the identity morphism when $s_0 = s_1$
(Theorem~\ref{thm:rw-maps}). This prevents the coupling of the $\D_0$
and $\D_1$ blocks by an automorphism. Also, the single non-vanishing
(and non-removable) off-diagonal element in $\tilde{\sqone}_e$ prevents
the exchange of the order of the $\D_0$ blocks. Hence, any automorphism
of $\tilde{\sqone}_e$ must also be upper triangular, with every
non-vanishing matrix element proportional to the identity.

{\samepage
With the above logic in mind, the most general automorphism takes the
form $\tilde{\sqone}_e A = A \tilde{\sqone}_e$, with
\begin{equation}
 A = \begin{bmatrix}
 a_1 & 0 & b_1 \\
 0 & a_2 & 0 \\
 0 & 0 & a_1
 \end{bmatrix}\!,
\end{equation}
parametrized by the $3$ constants $a_1$, $b_1$, $a_2$. It is invertible when
$a_1,a_2\ne 0$.

}

With the above choice of $k_{1e}$ and $k'_{1e}$, up to homotopy, we have
the following equivalences of operators:
\begin{gather}
 k_{1e} \circ D_{1e} \circ \bar{k}_{0}
 \sim \begin{bmatrix} 1 \\ 0 \\ 0 \end{bmatrix}\!,
\qquad
 k_{0} \circ T_{1e} \circ \bar{k}_{1e}
 \sim \begin{bmatrix} 0 & 0 & 1 \end{bmatrix}\!.
\end{gather}

Finally, for the record, denoting by $\big(\tilde{\sqone}_e\big)^\text{old}$,
$\big(\bar{k}_{1e}\big)^\text{old}$ and $\big(\bar{k}'_{1e}\big)^\text{old}$ by
corresponding operators from~\cite{kh-vwtriang}, let us note the
explicit relations
\begin{gather}
 \tilde{\sqone}_e = \begin{bmatrix}
 1 & 0 & 0 \\
 0 & \B_l & 0 \\
 0 & 0 & 1
 \end{bmatrix} (\tilde{\sqone}_e)^\text{old},
 \quad\
 \bar{k}_{1e} = (\bar{k}_{1e})^\text{old} \begin{bmatrix}
 1 & 0 & \tfrac{1}{2}\B_l \\
 0 & 1 & 0 \\
 0 & 0 & 1
 \end{bmatrix}\!,
 \quad\
 \bar{k}'_{1e} = (\bar{k}'_{1e})^\text{old} \begin{bmatrix}
 1 & 0 & \tfrac{1}{2}\B_l \\
 0 & \tfrac{1}{\B_l} & 0 \\
 0 & 0 & 1
 \end{bmatrix}\!.
\end{gather}

\subsection{Lichnerowicz wave equation} \label{sec:lich}
To present the radial mode equation for $\sqtwo p_{\mu\nu} = 0$, we
follow the same pattern as in the case of the scalar and vector wave
equations (Sections~\ref{sec:sw} and~\ref{sec:vw}).

The tensor spherical harmonic decomposition of a symmetric covariant
$2$-tensor is
\begin{gather}\label{eq:pmodes}
p_{\mu\nu}
= p_{\mu\nu}^{\text{even}} + p_{\mu\nu}^{\text{odd}},
\\
p_{\mu\nu}^{\text{even}}\to \sum_{lm}
\begin{bmatrix}
 \bm{h}_{ab}^{lm} Y^{lm} & r \, \bm{j}_a^{lm} Y_B^{lm} \\
 r \, \bm{j}_b^{lm} Y_A^{lm} & r^2 \,
 (\bm{K}^{lm} \Omega_{AB} Y + \bm{G}^{lm} Y_{AB}^{lm})
\end{bmatrix}\!,
\\
p_{\mu\nu}^{\text{odd}}\to \sum_{lm}
\begin{bmatrix}
 0 & r \, \bm{h}_a^{lm} X_B^{lm} \\
 r \, \bm{h}_b^{lm} X_A^{lm} & r^2 \, \bm{h}_2^{lm} X_{AB}^{lm}
\end{bmatrix}\!.
\end{gather}
For conciseness, as before, we will omit the $lm$ indices. Following the
notation conventions given at the top of Section~\ref{sec:schw}, we
first separate out the spherical harmonics (bold coefficients) and then
harmonic time dependence (fraktur coefficients). We will need to use two
conventions to normalize the coefficients, which we distinguish by
primes (recall that primes are decorations and not derivatives):
\begin{gather}\label{eq:p-modsep}
p_{\la\ka}\to
\begin{bmatrix}
\bm{h}_{cd}\, Y & \bm{j}_c\, r Y_D \\
\bm{j}_d\, r Y_C & r^2 (\bm{K}\, \Omega_{CD} Y + \bm{G}\, Y_{CD})
\end{bmatrix}
+ \begin{bmatrix}
0 & \bm{h}_c\, r X_D \\ \bm{h}_d\, r X_C & \bm{h}_2\, r^2 X_{CD}
\end{bmatrix}
\\ \hphantom{p_{\la\ka}}
{}= \left(\begin{bmatrix}
\mathfrak{h}_{cd}\, Y & \mathfrak{j}_c\, r Y_D \\
\mathfrak{j}_d\, r Y_C & r^2 (\mathfrak{K}\, \Omega_{CD} Y + \mathfrak{G}\, Y_{CD})
\end{bmatrix}
+ \begin{bmatrix}
0 & \mathfrak{h}_c\, r X_D \\ \mathfrak{h}_d\, r X_C & \mathfrak{h}_2\, r^2 X_{CD}
\end{bmatrix} \right) {\rm e}^{-{\rm i}\omega t},
\\[1ex]
p^{\prime\mu\nu}\to \frac{1}{r^2} \left(
\begin{bmatrix}
\bm{h}^{\prime ab}\, Y & \tfrac{\bm{j}^{\prime a}}{\B_l}\, \tfrac{1}{r} Y^B \\[1ex]
\tfrac{\bm{j}^{\prime b}}{\B_l}\, \tfrac{1}{r} Y^A & \tfrac{1}{r^2} (\bm{K}'\, \Omega^{AB} Y + 2 \tfrac{\bm{G}'}{\A_l}\, Y^{AB})
\end{bmatrix}
+
\begin{bmatrix}
0 & \tfrac{\bm{h}^{\prime a}}{\B_l}\, \tfrac{1}{r} X^B \\[1ex]
\tfrac{\bm{h}^{\prime b}}{\B_l}\, \tfrac{1}{r} X^A & 2\tfrac{\bm{h}'_2}{\A_l}\, \tfrac{1}{r^2} X^{AB}
\end{bmatrix} \right)
\\ \hphantom{p^{\prime\mu\nu}}
{}= \frac{1}{r^2} \left(
\begin{bmatrix}
\mathfrak{h}^{\prime ab}\, Y & \tfrac{\mathfrak{j}^{\prime a}}{\B_l}\, \tfrac{1}{r} Y^B \\[1ex]
\tfrac{\mathfrak{j}^{\prime b}}{\B_l}\, \tfrac{1}{r} Y^A & \tfrac{1}{r^2} (\mathfrak{K}'\, \Omega^{AB} Y + 2 \tfrac{\mathfrak{G}'}{\A_l}\, Y^{AB})
\end{bmatrix}
+
\begin{bmatrix}
0 & \tfrac{\mathfrak{h}^{\prime a}}{\B_l}\, \tfrac{1}{r} X^B \\[1ex]
\tfrac{\mathfrak{h}^{\prime b}}{\B_l}\, \tfrac{1}{r} X^A & 2\tfrac{\mathfrak{h}'_2}{\A_l}\, \tfrac{1}{r^2} X^{AB}
\end{bmatrix} \right) {\rm e}^{-{\rm i}\omega t}.
\end{gather}
\begin{subequations}
With these conventions, again reusing the symbol $\sqtwo$ at each stage
of mode separation of \mbox{$p'_{\mu\nu} = \sqtwo p_{\mu\nu}$}, we have
\begin{gather}
\begin{bmatrix} \bm{h}^{\prime ab} \\ \bm{j}^{\prime a} \\ \bm{K}' \\ \bm{G}' \end{bmatrix} =
\sqtwo_e \begin{bmatrix} \bm{h}_{cd} \\ \bm{j}_c \\ \bm{K} \\ \bm{G} \end{bmatrix}
:=
\begin{bmatrix}
\nabla_m r^2 \nabla^m \bm{h}^{ab} \\ \nabla_m 2\B_l r^2 \nabla^m \bm{j}^a \\
\nabla_m 2 r^2 \nabla^m \bm{K} \\ \nabla_m \tfrac{\A_l}{2} r^2 \nabla^m \bm{G}
\end{bmatrix}
- \B_l \begin{bmatrix}
\bm{h}^{ab} \\ 2\B_l \bm{j}^a \\ 2 \bm{K} \\ \tfrac{\A_l}{2} \bm{G}
\end{bmatrix}\nonumber
\\ \hphantom{\begin{bmatrix}\bm{h}^{\prime ab}\\\bm{j}^{\prime a}\\\bm{K}'\\\bm{G}' \end{bmatrix} =}
{}+ \begin{bmatrix}
-4 r^{(a} g^{b)(c} r^{d)} & 4\B_l r^{(a} g^{b)c} & 4 r^a r^b & 0 \\
4\B_l g^{a(c} r^{d)} & -8\B_l r^a r^c & -4\B_l r^a & 2\A_l r^a \\
 4 r^c r^d & -4\B_l r^c & -4 & 0 \\
 0 & 2\A_l r^c & 0 & \A_l
\end{bmatrix}
 \begin{bmatrix}
 \bm{h}_{cd} \\ \bm{j}_c \\ \bm{K} \\ \bm{G} \end{bmatrix}\nonumber
\\ \hphantom{\begin{bmatrix}\bm{h}^{\prime ab}\\\bm{j}^{\prime a}\\\bm{K}'\\\bm{G}' \end{bmatrix} =}
{}+ f_1 \begin{bmatrix}
 0 \\ 2\B_l \bm{j}^a \\ 4 \bm{K} \\ \A_l \bm{G} \end{bmatrix}
- 2f_1 \begin{bmatrix}
 g^{c(a} g^{b)d} - g^{ab}g^{cd} & 0 & g^{ab} & 0 \\ 0 & -\B_l g^{ac} & 0 & 0 \\
 g^{cd} & 0 & -2 & 0 \\ 0 & 0 & 0 & \tfrac{\A_l}{2}
\end{bmatrix}
\begin{bmatrix}
 \bm{h}_{cd} \\ \bm{j}_c \\ \bm{K} \\ \bm{G}\end{bmatrix},
 \label{eq:radial-ev-covar} \noeqref{eq:radial-ev-covar}
\\[1ex]
\begin{bmatrix} \bm{h}^{\prime a} \\ \bm{h}'_2 \end{bmatrix} =
\sqtwo_o \begin{bmatrix} \bm{h}_c \\ \bm{h}_2 \end{bmatrix}
:=\begin{bmatrix}
 \nabla_m 2\B_l r^2 \nabla^m \bm{h}^a \\
 \nabla_m \frac{\A_l}{2} r^2 \nabla^m \bm{h}_2
\end{bmatrix}
- \B_l \begin{bmatrix}
 2\B_l \bm{h}^a \\ \tfrac{\A_l}{2} \bm{h}_2
\end{bmatrix}
+ \begin{bmatrix}
 - 8\B_l r^a r^c & 2\A_l r^a \\ 2\A_l r^c & \A_l
\end{bmatrix}
\begin{bmatrix}
 \bm{h}_c \\ \bm{h}_2
\end{bmatrix} \nonumber
\\ \hphantom{\begin{bmatrix} \bm{h}^{\prime a} \\ \bm{h}'_2 \end{bmatrix} =}
{}+ f_1\begin{bmatrix}
 2\B_l \bm{h}^a \\ \A_l \bm{h}_2
\end{bmatrix}
- 2f_1\begin{bmatrix}
 -\B_l g^{ac} & 0 \\ 0 & \tfrac{\A_l}{2}
\end{bmatrix}
\begin{bmatrix}
 \bm{h}_c \\ \bm{h}_2
\end{bmatrix}\!,
\label{eq:radial-od-covar} \noeqref{eq:radial-od-covar}
\end{gather}
after separating out the $S^2$ dependence, and
after separating out the time dependence the final result is
\begin{gather}
\begin{bmatrix}
\mathfrak{h}^{\prime tr} \\ \mathfrak{j}^{\prime t} \\ \mathfrak{h}^{\prime tt} \\ \mathfrak{h}^{\prime rr} \\ \mathfrak{K}' \\ \mathfrak{j}^{\prime r} \\ \mathfrak{G}'
\end{bmatrix}
=\sqtwo_e
\begin{bmatrix}
 \mathfrak{h}_{tr} \\ \mathfrak{j}_t \\ \mathfrak{h}_{tt} \\ \mathfrak{h}_{rr} \\ \mathfrak{K} \\ \mathfrak{j}_r \\ \mathfrak{G}
\end{bmatrix}
:=
\begin{bmatrix}[r@{\,}l]
 -\partial_r 2r^2 f\partial_r &\mathfrak{h}_{tr}\\[.5ex]
 -\partial_r 2\B_l\tfrac{r^2}{f}f\del_r &\mathfrak{j}_t \\[1ex]
 \partial_r \tfrac{r^2}{f^2} f\partial_r &\mathfrak{h}_{tt} \\[1ex]
 \partial_r f^2r^2 f\partial_r &\mathfrak{h}_{rr}\\[.5ex]
 \partial_r 2r^2 f\partial_r &\mathfrak{K}\\[.5ex]
 \partial_r 2\B_lf r^2 f\partial_r &\mathfrak{j}_r \\[.5ex]
 \partial_r \tfrac{\A_l}{2}r^2 f\partial_r &\mathfrak{G}
\end{bmatrix}
+ \bigg(\frac{\omega^2}{f} - \frac{\B_l}{r^2}\bigg)
\begin{bmatrix}[r@{\,}l]
 -2r^2 &\mathfrak{h}_{tr} \\[.5ex]
 -2\B_l\tfrac{r^2}{f} &\mathfrak{j}_{t} \\[1ex]
 \tfrac{r^2}{f^2} &\mathfrak{h}_{tt} \\[1ex]
 f^2r^2 &\mathfrak{h}_{rr} \\[.5ex]
 2r^2 &\mathfrak{K} \\[.5ex]
 2\B_lfr^2 &\mathfrak{j}_{r} \\[.5ex]
 \tfrac{\A_l}{2}r^2 &\mathfrak{G}
\end{bmatrix}
\\ \quad
{}- 2{\rm i}\omega r \frac{f_1}{f}
\begin{bmatrix}
 0 & 0 &-\tfrac{1}{f} & -f & 0 & 0 & 0 \\
 0 & 0 & 0 & 0 & 0 &-\B_l & 0 \\
 \tfrac{1}{f} & 0 & 0 & 0 & 0 & 0 & 0 \\
 f & 0 & 0 & 0 & 0 & 0 & 0 \\
 0 & 0 & 0 & 0 & 0 & 0 & 0 \\
 0 & \B_l & 0 & 0 & 0 & 0 & 0 \\
 0 & 0 & 0 & 0 & 0 & 0 & 0
\end{bmatrix}
\begin{bmatrix}
 \mathfrak{h}_{tr} \\ \mathfrak{j}_t \\ \mathfrak{h}_{tt} \\ \mathfrak{h}_{rr} \\ \mathfrak{K} \\ \mathfrak{j}_r \\ \mathfrak{G}
\end{bmatrix}\label{eq:radial-ev}
\\[1ex] \quad
{}+
\begin{bmatrix}[c@{\,}c@{\!}c@{~}c@{~}c@{\!\!}c@{\,}c]
 \tfrac{2}{f}(f^2\!+\!1) & -4\B_l & 0 & 0 & 0 & 0 & 0\\[.5ex]
 -4\B_l & -\tfrac{4\B_l f_1}{f} & 0 & 0 & 0 & 0 & 0\\[.5ex]
 0 & 0 & \tfrac{f_1^{2}}{2f^{3}} & - \tfrac{f_1}{2f}(f_1\!+\!4f) & \tfrac{2 f_1}{f} & 0 & 0\\[.5ex]
 0 & 0 & -\tfrac{f_1}{2f}(f_1\!+\!4f) & \tfrac{1}{2} f(f_1^{2}\!-\!8f^{2}) & -2f(3f_1\!-\!2) & 4\B_l f^{2} & 0\\[.5ex]
 0 & 0 & \tfrac{2 f_1}{f} & -2f(3f_1\!-\!2) & 4 (2f_1\!-\!1) & -4\B_l f & 0\\[.5ex]
 0 & 0 & 0 & 4\B_l f^{2} & -4\B_l f & 4\B_l f (3f_1\!-\!\!2) & 2\A_l f \\[.5ex][1ex]
 0 & 0 & 0 & 0 & 0 & 2\A_l f & \A_l
\end{bmatrix}\!
\begin{bmatrix}
 \mathfrak{h}_{tr} \\ \mathfrak{j}_t \\ \mathfrak{h}_{tt} \\ \mathfrak{h}_{rr} \\ \mathfrak{K} \\ \mathfrak{j}_r \\ \mathfrak{G}
\end{bmatrix}\!,
\\
\begin{bmatrix}
 \mathfrak{h}^{\prime t} \\ \mathfrak{h}^{\prime r} \\ \mathfrak{h}'_2
\end{bmatrix}
= \sqtwo_o \begin{bmatrix}
 \mathfrak{h}_t \\ \mathfrak{h}_r \\ \mathfrak{h}_2
\end{bmatrix}
:=
 \begin{bmatrix}[r@{\,}l]
 -\partial_r 2\B_l\tfrac{r^2}{f}f\del_r &\mathfrak{h}_t \\
 \partial_r 2\B_lf r^2 f \partial_r &\mathfrak{h}_r \\
 \partial_r \tfrac{\A_l}{2}r^2 f \partial_r &\mathfrak{h}_2
 \end{bmatrix}
 + \left(\frac{\omega^2}{f} - \frac{\B_l}{r^2}\right)
 \begin{bmatrix}[r@{\,}l]
 -2\B_l\tfrac{r^2}{f} &\mathfrak{h}_t \\
 2\B_l fr^2 &\mathfrak{h}_r \\
 \tfrac{\A_l}{2}r^2 &\mathfrak{h}_2
 \end{bmatrix}
 \\ \hphantom{\begin{bmatrix} \mathfrak{h}^{\prime t} \\ \mathfrak{h}^{\prime r} \\ \mathfrak{h}'_2\end{bmatrix}= }
{} -2{\rm i}\omega r \frac{f_1}{f}
 \begin{bmatrix}[c@{~}c@{~}c]
 0 &-\B_l & 0 \\
 \B_l & 0 & 0 \\
 0 & 0 & 0
\end{bmatrix}
\begin{bmatrix} \mathfrak{h}_t \\ \mathfrak{h}_r \\ \mathfrak{h}_2 \end{bmatrix}
+ \begin{bmatrix}[c@{~}c@{~}c]
 -4\B_l \tfrac{f_1}{f} & 0 & 0 \\
 0 & 4\B_l f (3f_1 - 2) & 2\A_l f \\
 0 & 2\A_l f & \A_l
\end{bmatrix}
\begin{bmatrix}
 \mathfrak{h}_t \\ \mathfrak{h}_r \\ \mathfrak{h}_2
\end{bmatrix}\!.\label{eq:radial-od}
\end{gather}
The operators $\sqtwo_e$ and $\sqtwo_o$
defined in \eqref{eq:radial-ev} and~\eqref{eq:radial-od} are our \emph{radial mode equations}.
\end{subequations}

\begin{rmk}
Note that the final radial mode equations are manifestly self-adjoint,
$\sqtwo_{\rm e}^* = \sqtwo_e$ and $\sqtwo_o^* = \sqtwo_o$. These expressions
are also valid for all values of $\omega$ and $l$. When $l=0$,
$Y_A^{l=0} = X_A^{l=0} = Y_{AB}^{l=0} = X_{AB}^{l=0} = 0$, and when
$l=1$, $Y_{AB}^{l=1} = X_{AB}^{l=1} = 0$, meaning that their
coefficients are not well-defined. By consistency, the corresponding
components of $\sqtwo_e$ and $\sqtwo_o$ simply vanish.
\end{rmk}

Next, we will introduce the operators needed to decouple radial modes
into \emph{pure gauge} and \emph{constraint violating} modes, to set up
the decoupling strategy (cf.~Section~\ref{sec:triang-strategy}). The key
identities, analogous to $E_1 T = T' E_2$, $E_2 D = D' E_1$ and so on,
hold already at the spacetime level:
\begin{equation}
\grf^\nu \overline{\sqtwo p_{\mu\nu}} = \sqone \, \grf^\nu \overline{p}_{\mu\nu},\qquad
 \sqtwo \, \grf_{(\mu} v_{\nu)} = \grf_{(\mu} \sqone v_{\nu)},\qquad
\gf^{\mu\nu} \sqtwo p_{\mu\nu} = \sqzer \, \gf^{\mu\nu} p_{\mu\nu}.
\end{equation}
In addition, analogous to $TD = H_T E_1$ and so on, we have the
composition identities
\begin{equation}
\grf^\nu \overline{\grf_{(\mu} v_{\nu)}} = \frac{1}{2} \sqone v_\mu,\qquad
 \gf^{\mu\nu} \grf_{(\mu} v_{\nu)} = \grf^\mu v_\mu.
\end{equation}

Now we define the operators $T_2$, $T'_2$, $D_2$, $D'_2$, $\tr$, $\tr'$
and their successive mode decompositions (following the same notational
conventions as in Section~\ref{sec:vw} for mode decomposing vectors):

\noindent
for $v_\mu = \grf^\nu \overline{p}_{\mu\nu} =: T_2[p]$,
\begin{gather}
\begin{bmatrix} \bm{v}_a \\ \bm{u} \end{bmatrix}
= T_{2e} \begin{bmatrix} \bm{h}_{cd} \\ \bm{j}_c \\ \bm{K} \\ \bm{G} \end{bmatrix}
:= \frac{1}{r} \begin{bmatrix}
 \frac{1}{r} \nabla^b r^2 \big(\bm{h}_{ab} - \tfrac{1}{2} g_{ab} g^{cd} \bm{h}_{cd} + g_{ab} \bm{K}\big)
 \\
 \tfrac{1}{r^2} \nabla^c r^3 \bm{j}_c
\end{bmatrix}
 \\ \hphantom{\begin{bmatrix} \bm{v}_a \\ \bm{u} \end{bmatrix}=}
{}+ \frac{1}{r} \begin{bmatrix}
 r_a g^{cd} & -\B_l \delta_a^c & 0 & 0 \\
 -\tfrac{1}{2} g^{cd} & 0 & 0 & -\tfrac{\A_l}{2\B_l}
\end{bmatrix}
\begin{bmatrix} \bm{h}_{cd} \\ \bm{j}_c \\ \bm{K} \\ \bm{G} \end{bmatrix}\!,
\\
\bm{w} = T_{2o} \begin{bmatrix} \bm{h}_c \\ \bm{h}_2 \end{bmatrix}
:= \frac{1}{r} \begin{bmatrix}
 \frac{1}{r^2} \nabla^c r^3 & -\frac{\A_l}{2\B_l}
\end{bmatrix}
\begin{bmatrix} \bm{h}_c \\ \bm{h}_2 \end{bmatrix}\!,
\\[2ex]
\frac{1}{{\rm i}\omega} \begin{bmatrix} \mathfrak{v}_t \\ \mathfrak{v}_r \\ \mathfrak{u} \end{bmatrix}
=T_{2e} \begin{bmatrix} \mathfrak{h}_{tr} \\ \mathfrak{j}_t \\ \mathfrak{h}_{tt} \\ \mathfrak{h}_{rr} \\ \mathfrak{K} \\ \mathfrak{j}_r \\ \mathfrak{G} \end{bmatrix}
:= \frac{1}{{\rm i}\omega r}
 \begin{bmatrix}
\tfrac{1}{r} \del_r f r^2 & -\B_l & \tfrac{{\rm i}\omega r}{2} \tfrac{1}{f} & \tfrac{{\rm i}\omega r}{2} f \\[1ex]
\tfrac{{\rm i}\omega r}{f} & 0 & \tfrac{r}{2} \tfrac{1}{f} \del_r & \tfrac{1}{2r^3 f} \del_r r^4 f^2 \\[1ex]
0 & \tfrac{{\rm i}\omega r}{f} & \tfrac{1}{2} \tfrac{1}{f} & -\tfrac{1}{2} f \\[1ex]
 \end{bmatrix}
\begin{bmatrix} \mathfrak{h}_{tr} \\ \mathfrak{j}_t \\ \mathfrak{h}_{tt} \\ \mathfrak{h}_{rr} \end{bmatrix}\!
 \\ \hphantom{\frac{1}{{\rm i}\omega} \begin{bmatrix} \mathfrak{v}_t \\ \mathfrak{v}_r \\ \mathfrak{u} \end{bmatrix}=}
{} + \frac{1}{{\rm i}\omega r}
 \begin{bmatrix} {\rm i}\omega r & 0 & 0 \\ -\tfrac{1}{r} \del_r r^2 & -\B_l & 0 \\
 0 & \tfrac{1}{r^2} \del_r f r^3 & -\tfrac{\A_l}{2\B_l}
 \end{bmatrix}
\begin{bmatrix} \mathfrak{K} \\ \mathfrak{j}_r \\ \mathfrak{G} \end{bmatrix}\!,
\\[2ex]
\frac{1}{{\rm i}\omega} \mathfrak{w}
= T_{2o} \begin{bmatrix} \mathfrak{h}_t \\ \mathfrak{h}_r \\ \mathfrak{h}_2 \end{bmatrix}
:= \frac{1}{{\rm i}\omega r} \begin{bmatrix}
 \frac{{\rm i}\omega r}{f} & \frac{1}{r^2} \del_r f r^3 & - \frac{\A_l}{2\B_l}
\end{bmatrix}
\begin{bmatrix} \mathfrak{h}_t \\ \mathfrak{h}_r \\ \mathfrak{h}_2 \end{bmatrix}\! ;
\end{gather}

\noindent
for $v^{\prime\mu} = \grf_\nu \overline{p}^{\prime\mu\nu} =: T'_2[p']$,
\begin{gather}
\begin{bmatrix} \bm{v}^{\prime a} \\ \bm{u}' \end{bmatrix}
= T'_{2e} \begin{bmatrix} \bm{h}^{\prime cd} \\ \bm{j}^{\prime c} \\ \bm{K}' \\ \bm{G}' \end{bmatrix}
:= \frac{1}{2r} \begin{bmatrix}
2r \delta^a_{(c} \nabla_{d)} & -\delta^a_c & 2r^a & 0 \\
0 & \nabla_c r & \B_l & -2
\end{bmatrix}
\begin{bmatrix} \bm{h}^{\prime cd} \\ \bm{j}^{\prime c} \\ \bm{K}' \\ \bm{G}' \end{bmatrix}\!,
\\
\bm{w}' = T'_{2o} \begin{bmatrix} \bm{h}^{\prime c} \\ \bm{h}'_2 \end{bmatrix}
:= \frac{1}{2r} \begin{bmatrix} \nabla_c r & -2 \end{bmatrix}
\begin{bmatrix} \bm{h}^{\prime c} \\ \bm{h}'_2 \end{bmatrix}\!,
\\
\frac{1}{{\rm i}\omega}
\begin{bmatrix} \mathfrak{v}^{\prime t} \\ \mathfrak{v}^{\prime r} \\ \mathfrak{u}' \end{bmatrix}
= T'_{2e}
\begin{bmatrix}
\mathfrak{h}^{\prime tr} \\ \mathfrak{j}^{\prime t} \\ \mathfrak{h}^{\prime tt} \\ \mathfrak{h}^{\prime rr} \\ \mathfrak{K}' \\ \mathfrak{j}^{\prime r} \\ \mathfrak{G}' \end{bmatrix}
:= \frac{1}{2{\rm i}\omega r}
\begin{bmatrix}[c@{~\,}c@{~\,}c@{~\,}c@{~\,}c@{~\,}c@{~\,}c]
 \tfrac{r}{f} \del_r f & -1 & -2{\rm i}\omega r & 0 & 0 & 0 & 0 \\
 -{\rm i}\omega r & 0 & f_1 f & 2r\del_r - \tfrac{f_1}{f} & -2f & -1 & 0 \\
 0 & -{\rm i}\omega r &
 0 & 0 & \B_l & \del_r r & -2
\end{bmatrix}
\begin{bmatrix}
\mathfrak{h}^{\prime tr} \\ \mathfrak{j}^{\prime t} \\
\mathfrak{h}^{\prime tt} \\ \mathfrak{h}^{\prime rr} \\ \mathfrak{K}' \\ \mathfrak{j}^{\prime r} \\ \mathfrak{G}' \end{bmatrix}\!,
\\
\frac{1}{{\rm i}\omega} \mathfrak{w}=T'_{2o}
\begin{bmatrix} \mathfrak{h}^{\prime t} \\ \mathfrak{h}^{\prime r} \\ \mathfrak{h}'_2 \end{bmatrix}
:= \frac{1}{2{\rm i}\omega r}
\begin{bmatrix} -{\rm i}\omega r & \del_r r & -2 \end{bmatrix}
\begin{bmatrix} \mathfrak{h}^{\prime t} \\ \mathfrak{h}^{\prime r} \\ \mathfrak{h}'_2 \end{bmatrix}\!;
\end{gather}

\noindent
for $p_{\mu\nu} = -(\grf_\mu v_\nu + \grf_\nu v_\mu) =: D_2[v]$,
\begin{gather}
\begin{bmatrix} \bm{h}_{ab} \\ \bm{j}_a \\ \bm{K} \\ \bm{G} \end{bmatrix}
=D_{2e} \begin{bmatrix} \bm{v}_c \\ \bm{u} \end{bmatrix}
:= -\frac{1}{r} \begin{bmatrix}
 r(\nabla_a \bm{v}_b + \nabla_b \bm{v}_a) \\
 r^2 \nabla_a \tfrac{1}{r} \bm{u} \\ 0 \\ 0
\end{bmatrix}
- \frac{1}{r} \begin{bmatrix}
 0 & 0 \\ \delta_a^c & 0 \\ 2 r^c & -\B_l \\ 0 & 2
\end{bmatrix} \begin{bmatrix} \bm{v}_c \\ \bm{u} \end{bmatrix}\!,
\\[1ex]
\begin{bmatrix} \bm{h}_a \\ \bm{h}_2 \end{bmatrix}
= D_{2o}[\bm{w}]
:= -\frac{1}{r} \begin{bmatrix} r^2 \nabla_a \tfrac{1}{r} \\ 2 \end{bmatrix} \bm{w},
\\[1ex]
\frac{1}{{\rm i}\omega}
\begin{bmatrix} \mathfrak{h}_{tr} \\ \mathfrak{j}_t \\ \mathfrak{h}_{tt} \\ \mathfrak{h}_{rr} \\ \mathfrak{K} \\ \mathfrak{j}_r \\ \mathfrak{G} \end{bmatrix}
= D_{2e} \begin{bmatrix} \mathfrak{v}_t \\ \mathfrak{v}_r \\ \mathfrak{u} \end{bmatrix}
:= -\frac{1}{{\rm i}\omega r} \begin{bmatrix}
 f r \del_r \tfrac{1}{f} & -{\rm i}\omega r & 0 \\
 1 & 0 & -{\rm i}\omega r \\
 -2{\rm i}\omega r & - f_1 f & 0 \\[.5ex]
 0 & 2 r \del_r + \tfrac{f_1}{f} & 0 \\[.5ex]
 0 & 2f & -\B_l \\
 0 & 1 & r^2 \del_r \tfrac{1}{r} \\
 0 & 0 & 2
\end{bmatrix}
\begin{bmatrix} \mathfrak{v}_t \\ \mathfrak{v}_r \\ \mathfrak{u} \end{bmatrix}\!,
\\[1ex]
\frac{1}{{\rm i}\omega}
\begin{bmatrix} \mathfrak{h}_t \\ \mathfrak{h}_r \\ \mathfrak{h}_2 \end{bmatrix}
= D_{2o}[\mathfrak{w}]
:= -\frac{1}{{\rm i}\omega r} \begin{bmatrix}
 -{\rm i}\omega r \\[.5ex] r^2 \del_r \tfrac{1}{r} \\[.5ex] 2
 \end{bmatrix} \mathfrak{w};
\end{gather}

\noindent
for $p^{\prime\mu\nu} = -(\grf^\mu v^{\prime\nu} + \grf^\nu v^{\prime\mu}) =: D'_2[v']$,
\begin{gather}
\begin{bmatrix} \bm{h}^{\prime ab} \\ \bm{j}^{\prime a} \\ \bm{K}' \\ \bm{G}' \end{bmatrix}
=D'_{2e}\begin{bmatrix} \bm{v}^{\prime c} \\ \bm{u}' \end{bmatrix}
:= -\frac{2}{r} \begin{bmatrix}
 r^3 \nabla^{(a} \tfrac{1}{r^2} \bm{v}^{\prime b)}
-\tfrac{1}{2} g^{ab} r^3 \nabla_c \tfrac{1}{r^2} \bm{v}^{\prime c} \\[.5ex]
 r^4 \nabla^a \tfrac{1}{r^3} \bm{u}' \\[.5ex]
 r^3 \nabla_c \tfrac{1}{r^2} \bm{v}^{\prime c} \\ 0
\end{bmatrix}
- \frac{2}{r} \begin{bmatrix}
 -r_c g^{ab} & \tfrac{1}{2} g^{ab} \\ \B_l \delta^a_c & 0 \\ 0 & 0 \\ 0 & \tfrac{\A_l}{2\B_l}
\end{bmatrix}
\begin{bmatrix} \bm{v}^{\prime c} \\ \bm{u}' \end{bmatrix}\!,
\\
\begin{bmatrix} \bm{h}^{\prime a} \\ \bm{h}'_2 \end{bmatrix}
= D'_{2o}[\bm{w}']
:= -\frac{2}{r} \begin{bmatrix}
 r^4 \nabla^a \tfrac{1}{r^3} \\ \tfrac{\A_l}{2\B_l}
\end{bmatrix} \bm{w}',
\\
\frac{1}{{\rm i}\omega}
\begin{bmatrix}
\mathfrak{h}^{\prime tr} \\ \mathfrak{j}^{\prime t} \\ \mathfrak{h}^{\prime tt}
\\ \mathfrak{h}^{\prime rr} \\ \mathfrak{K}' \\ \mathfrak{j}^{\prime r} \\ \mathfrak{G}'
\end{bmatrix}
= D_{2e}'
\begin{bmatrix} \mathfrak{v}^{\prime t} \\ \mathfrak{v}^{\prime r} \\ \mathfrak{u}' \end{bmatrix}
:= -\frac{2}{{\rm i}\omega r} \begin{bmatrix}
 f r^3 \del_r \tfrac{1}{r^2} & \tfrac{{\rm i}\omega r}{f} & 0 \\
 \B_l & 0 & \tfrac{{\rm i}\omega r}{f} \\
 \tfrac{{\rm i}\omega r}{2f} & \tfrac{1}{2} r\del_r \tfrac{1}{f} & -\tfrac{1}{2f} \\[.5ex]
 \tfrac{{\rm i}\omega r f}{2} & \tfrac{1}{2} f^2 r^5 \del_r \tfrac{1}{r^4 f} & \tfrac{f}{2} \\[.5ex]
 {\rm i}\omega r & -r^3 \del_r \tfrac{1}{r^2} & 0 \\[.5ex]
 0 & \B_l & f r^4 \del_r \tfrac{1}{r^3} \\
 0 & 0 & \tfrac{A_l}{2\B_l}
\end{bmatrix}
\begin{bmatrix} \mathfrak{v}^{\prime t} \\ \mathfrak{v}^{\prime r} \\ \mathfrak{u}' \end{bmatrix}\!,
\\
\frac{1}{{\rm i}\omega}
\begin{bmatrix} \mathfrak{h}^{\prime t} \\ \mathfrak{h}^{\prime r} \\ \mathfrak{h}'_2 \end{bmatrix}
=D_{2o}'[\mathfrak{w}']
:= -\frac{2}{{\rm i}\omega r} \begin{bmatrix}
 \tfrac{{\rm i}\omega r}{f} \\ f r^4 \del_r \frac{1}{r^3} \\ \tfrac{\A_l}{2\B_l}
\end{bmatrix} \mathfrak{w}'.
\end{gather}
Some of the above radial mode operators are related by taking adjoints,
namely
\begin{gather}
T'_{2e} = -\frac{1}{2} D_{2e}^*, \qquad
D'_{2e} = -2 T_{2e}^*,
 \\
T'_{2o} = -\frac{1}{2} D_{2o}^*,\qquad
D'_{2o} = -2 T_{2o}^*.
\end{gather}

For the trace operators again we return to the convention for
scalars from Section~\ref{sec:sw}:
using $\mathfrak{z}\, Y {\rm e}^{-{\rm i}\omega t} = \bm{z}\, Y= \gf^{\la\ka} p_{\la\ka} =: \tr[p]$ we get
\begin{gather}
\bm{z} = \tr_e \begin{bmatrix}
 \bm{h}_{cd} \\ \bm{j}_c \\ \bm{K} \\ \bm{G}
\end{bmatrix}
:= \begin{bmatrix} g^{ab} & 0 & 2 & 0 \end{bmatrix}
\begin{bmatrix} \bm{h}_{cd} \\ \bm{j}_c \\ \bm{K} \\ \bm{G} \end{bmatrix}\!,
\\
\mathfrak{z}= \tr_e \begin{bmatrix}
 \mathfrak{h}_{tr} \\ \mathfrak{j}_t \\ \mathfrak{h}_{tt} \\
 \mathfrak{h}_{rr} \\ \mathfrak{K} \\ \mathfrak{j}_r \\ \mathfrak{G}
\end{bmatrix}
:= \begin{bmatrix} 0 & 0 & -\frac{1}{f} & f & 2 & 0 & 0 \end{bmatrix}
\begin{bmatrix}
 \mathfrak{h}_{tr} \\ \mathfrak{j}_t \\ \mathfrak{h}_{tt} \\
 \mathfrak{h}_{rr} \\ \mathfrak{K} \\ \mathfrak{j}_r \\ \mathfrak{G}
\end{bmatrix}\!,\qquad
\tr_o := 0 ;
\end{gather}
and using $\frac{1}{r^2} \mathfrak{z}'\, Y {\rm e}^{-{\rm i}\omega t}
= \frac{1}{r^2} \bm{z}'\, Y = \gf_{\la\ka} p^{\prime \la\ka} =: \tr' [p']$, we get
\begin{gather}
\bm{z}' = \tr'_e \begin{bmatrix}
 \bm{h}^{\prime cd} \\ \bm{j}^{\prime c} \\ \bm{K}' \\ \bm{G}'
\end{bmatrix}
:= \begin{bmatrix} g_{ab} & 0 & 1 & 0 \end{bmatrix}
\begin{bmatrix}
 \bm{h}^{\prime cd} \\ \bm{j}^{\prime c} \\
 \bm{K}' \\ \bm{G}' \end{bmatrix}\!,
\\
\mathfrak{z} = \tr'_e \begin{bmatrix}
 \mathfrak{h}^{\prime tr} \\ \mathfrak{j}^{\prime t} \\ \mathfrak{h}^{\prime tt} \\
 \mathfrak{h}^{\prime rr} \\ \mathfrak{K}' \\ \mathfrak{j}^{\prime r} \\ \mathfrak{G}'
\end{bmatrix}
:= \begin{bmatrix} 0 & 0 & -f & \frac{1}{f} & 1 & 0 & 0 \end{bmatrix}
\begin{bmatrix}
 \mathfrak{h}^{\prime tr} \\ \mathfrak{j}^{\prime t} \\ \mathfrak{h}^{\prime tt} \\
 \mathfrak{h}^{\prime rr} \\ \mathfrak{K}' \\ \mathfrak{j}^{\prime r} \\ \mathfrak{G}'
\end{bmatrix}\!,\qquad
\tr'_o := 0.
\end{gather}

The extra ${\rm i}\omega$ factors in the normalizations have been
chosen so that all operators acting on radial modes are dimensionless
and the first three of the following commutative diagrams remain valid
at each stage of mode separation:
\begin{gather}
\begin{tikzcd}[column sep=huge,row sep=huge]
 \bullet \ar[swap]{d}[description]{\D_{D_2} = \sqone} \ar{r}{D_2} \&
 \bullet \ar{d}[description]{\sqtwo}
 \\
 \bullet \ar[swap]{r}{D'_2} \&
 \bullet
\end{tikzcd},
 \qquad
\begin{tikzcd}[column sep=huge,row sep=huge]
 \bullet \ar[swap]{d}[description]{\sqtwo} \ar{r}{T_2} \&
 \bullet \ar{d}[description]{\D_{T_2} = \sqone}
 \\
 \bullet \ar[swap]{r}{T'_2} \&
 \bullet
\end{tikzcd},
\\
\begin{tikzcd}[column sep=huge,row sep=huge]
 \bullet \ar[swap]{d}[description]{\sqtwo} \ar{r}{\tr} \&
 \bullet \ar{d}[description]{\D_{\tr} = \sqzer}
 \\
 \bullet \ar[swap]{r}{\tr'} \&
 \bullet
\end{tikzcd},
 \qquad
\begin{tikzcd}[column sep=huge,row sep=huge]
 \bullet \ar[swap]{d}{\begin{bmatrix}\sqtwo \\ T_2\end{bmatrix}} \ar{r}{\Phi_2} \&
 \bullet \ar{d}{\D_{\Phi_2}}
 \\
 \bullet \ar[swap]{r}{\begin{bmatrix}\Phi'_2 & -\Delta_{\Phi_2 T_2}\end{bmatrix}} \&
 \bullet
\end{tikzcd}.\label{eq:lich-input}
\end{gather}
The last diagram holds at the radial mode level and shows the decoupling
of \emph{gauge invariant} modes (cf.~Section~\ref{sec:triang-strategy}).
It decomposes into independent even and odd sectors as follows. The odd
sector is described by the following radial mode operators:
\begin{gather}
\Phi_{2o} = \begin{bmatrix} 0 & 2f & -fr\del_r \end{bmatrix}\!,\qquad
\Phi'_{2o} = \frac{\A_l}{\omega^2 r^2}
\begin{bmatrix} 0 & \frac{1}{\B_l} & -\frac{2r^3}{\A_l} \del_r \frac{f}{r^2}\end{bmatrix}\!,
\\
\D_{\Phi_{2o}} = \frac{\A_l}{\omega^2} \D_2,\qquad
\Delta_{\Phi_{2o} T_{2o}} = 2f_1 \frac{\A_l}{{\rm i}\omega r}.
\end{gather}

\begin{rmk}
The operator $\Phi_{2o}$ is indeed gauge invariant, $\Phi_{2o} \circ
D_{2o} = 0$. It is essentially the well-known odd sector
\emph{Regge--Wheeler} scalar~\cite[equation~(2.15)]{berndtson}, whose
decoupling was first identified in~\cite{regge-wheeler}.
\end{rmk}

In the even radial mode sector, we could choose $\D_{\Phi_{2e}}$ to be
proportional to $\D_2$ as well, but the corresponding $\Phi_{2e}$ and
$\Phi'_{2e}$ would be rather large and, at the same time, their
complexity would obscure some important properties that we would like to
highlight. Instead, we choose~$\D_{\Phi_{2e}}$ to be a $2\times 2$ first
order system, with corresponding $\Phi_{2e}$ and $\Phi'_{2e}$ that are
simple%
 \footnote{Such a choice is also practically helpful for reducing the
 size and complexity of certain expressions in the computer algebra
 implementation of intermediate steps of our decoupling strategy.} %
and satisfy the gauge invariance condition $\Phi_{2e} \circ D_{2e} = 0$.
The corresponding formulas are 
\begin{gather*}
\Phi_{2e} = \begin{bmatrix}
 1 & -f\del_r\tfrac{r}{f} & 0 & 0 & \tfrac{{\rm i}\omega r}{f} & -{\rm i}\omega r &
\tfrac{{\rm i}\omega r (\B_l-2+3f_1)}{2f} \\[1.5ex]
 0 & 4{\rm i}\omega r & 2 & 0 & f_1 & 0 & -\tfrac{4\omega^2 r^2 -\B_l f_1}{2}
\end{bmatrix}\!,
\\[1ex]
\Phi'_{2e} = \frac{1}{2{\rm i}\omega r}\begin{bmatrix}[c@{~\,}c@{~\,}c@{~\,}c@{~\,}c@{~\,}c@{~\,}c]
 \tfrac{1}{{\rm i}\omega r} & 0 & f & \tfrac{1}{f} & -1 & 0 & 0 \\[.5ex]
 \tfrac{1}{2f} & \tfrac{1}{\B_l} & 0 & 0 & 0 & 0 & 0\end{bmatrix}
\!+\! \frac{\B_l}{4\A_l} \begin{bmatrix}
 \tfrac{4\omega^2 r^2-\B_l f_1}{{\rm i}\omega r f} \\[.5ex]
 -\tfrac{\B_l-2+3f_1}{f^2}
 \end{bmatrix}
 \begin{bmatrix}[c@{~}c@{~}c@{~}c@{~}c@{~}c@{~}c]
 -\frac{f_1}{2{\rm i}\omega r} & 0 & -f & -\frac{1}{f} & 1 & -\frac{2}{\B_l} & 0
 \end{bmatrix}\!,
 \\[1ex]
\D_{\Phi_{2e}} = \begin{bmatrix}
 0 & \tfrac{f}{{\rm i}\omega} \del_r \tfrac{1}{f} \\ \tfrac{1}{{\rm i}\omega f} \del_r f
\end{bmatrix}
- \frac{\B_l}{4\A_l \omega^2 r^2 f}
\\ \hphantom{\D_{\Phi_{2e}}=}
{}\times\!\begin{bmatrix}[c@{~}c]
 (4\omega^2 r^2 \!-\! \B_l (2f\!+\!f_1))^2 \!+\! 8f (4\omega^2 r^2\!-\!\B_l)
 	& \tfrac{{\rm i}\omega r (4\omega^2 r^2 (\B_l-2+3f_1) - 3\B_l f_1^2 - \A_l (2f+f_1))}{f} \\[1.5ex]
 -\tfrac{{\rm i}\omega r (4\omega^2 r^2 (\B_l-2+3f_1) - 3\B_l f_1^2 - \A_l (2f+f_1))}{f}
 	& \tfrac{\omega^2 r^2 (\B_l-2+3f_1)^2}{f^2}
	\end{bmatrix}\!	,
\\[1ex]
\Delta_{\Phi_{2e} T_{2e}} =\frac{\B_l}{\A_l}
\\
{}\times
\left[\begin{smallmatrix}
\frac{(4\omega^2 r^2 - \B_l f_1) f_1 - 4(\B_l-2)f}{4{\rm i}\omega r f^2}\left(f^2 r \del_r\frac{1}{f} - \omega^2 r^2\right)
& \frac{\B_l f_1 - 2(2\omega^2 r^2-(\B_l-2)f)}{2} \left(r\del_r + \frac{f_1}{2f}\right)
& (\B_l f_1 - 4\omega^2 r^2) \left(r^2\del_r \frac{1}{r} + \frac{\B_l}{2f}\right)
\\[1.5ex]
-\frac{(\B_l-2+3f_1)f_1 + 2(\B_l-2)f}{4 f^2} \left(\frac{f}{r} \del_r \frac{r^2}{f}\right)
 & \frac{{\rm i}\omega r (\B_l-2+3f_1)}{2f} \left(r\del_r + \frac{f_1}{2f}\right)
 & \frac{{\rm i}\omega r (\B_l-2+3f_1)}{f}	\left(r^2\del_r \frac{1}{r} + \frac{\B_l}{2f}\right)
 \end{smallmatrix}\right]
\\ \quad
{} + \frac{\B_l}{\A_l} \begin{bmatrix}
 \tfrac{{\rm i}\omega r (4\omega^2 r^2 -\B_l f_1) (2f+f_1)}{4 f^2}& \tfrac{(\B_l-2)(f_1-2f)}{2}& \A_l
\\[.5ex]
\tfrac{(\omega^2 r^2 + f_1 f) (\B_l-2+3f_1)}{2 f^3} 	& \tfrac{{\rm i}\omega r (\B_l-2)}{2f}
& \tfrac{{\rm i}\omega r (\B_l-2)}{f}
 \end{bmatrix}\!.
\end{gather*}

We will now see that this choice of $\D_{\Phi_{2e}}$ is ultimately
equivalent, with minor caveats, to both the Regge--Wheeler operator
$\D_2$ as well as the corresponding \emph{Zerilli} operator
$\D_2^+$~\cite[equations~(3.30) and~(3.31)]{berndtson}, which is commonly used to
decouple gauge invariant perturbations in the even sector. More
precisely, the equivalence is established by concatenating the following
equivalence diagrams, where the middle vertical arrow corresponds to a
self-adjoint first order reduction of $\D_2$:
\begin{equation} \label{eq:DPhi2e-equiv}
\begin{tikzcd}[column sep=5cm,row sep=4cm]
 \bullet
 \ar[swap]{d}{\D_{\Phi_{2e}}}
 \ar[shift left]{r}{q}
 \&
 \bullet
 \ar[shift left]{l}{\bar{q}}
 \ar[swap]{d}[description]{\alpha \begin{bmatrix}
 \frac{\omega^2 r^2}{f} & -\del_r r \\
 r\del_r & \frac{1}{f} - \frac{\B_l - 2}{\omega^2 r^2}
 \end{bmatrix}}
 \ar[shift left]{r}{\begin{bmatrix} 0 & 1 \end{bmatrix}}
 \&
 \bullet
 \ar[shift left]{l}{\begin{bmatrix}
 \frac{f}{\omega^2 r^2} \del_r r \\
 1
 \end{bmatrix}}
 \ar{d}{\alpha \D_2}
 \\
 \bullet
 \ar[shift left]{r}{q'}
 \&
 \bullet
 \ar[shift left]{l}{\bar{q}'}
 \ar[shift left]{r}{\begin{bmatrix} -r \del_r \frac{f}{\omega^2 r^2} & 1\end{bmatrix}}
 \&
 \bullet
 \ar[shift left]{l}{\begin{bmatrix}
 0 \\ 1
 \end{bmatrix}}
\end{tikzcd}
\end{equation}
where the operators in the first square are given by
\begin{gather}
q:= -\frac{\B_l}{\alpha \A_l} \begin{bmatrix}[c@{~}c]
\tfrac{\A_l (4\omega^2 r^2 - 2f(\B_l-2-f_1) - \B_l f_1)}{2\omega^2 r^2}
 & -\tfrac{\A_l (\B_l-2+3f_1) - 6(\B_l-2) f f_1}{2{\rm i}\omega r f}
\\
2\A_l f - 3f_1 (4\omega^2 r^2 - \B_l f_1)
 & -\tfrac{3{\rm i}\omega r f_1 (\B_l-2+3f_1)}{f} \end{bmatrix}\!,
 \\[.5ex]
q':= \frac{\B_l {\rm i}\omega r}{\A_l} \begin{bmatrix}
 -\tfrac{3{\rm i}\omega r f_1 (\B_l-2+3f_1)}{f} & 2\A_l f - 3f_1 (4\omega^2 r^2 - \B_l f_1)
 \\
\tfrac{\A_l (\B_l-2+3f_1) - 6(\B_l-2) f f_1}{2{\rm i}\omega r f}
 & -\tfrac{\A_l (4\omega^2 r^2 - 2f(\B_l-2-f_1) - \B_l f_1)}{2\omega^2 r^2}\end{bmatrix}
\!= \alpha \, {\rm i}\omega r
\begin{bmatrix}[c@{~}c] 0 & -1 \\ 1 & 0 \end{bmatrix}
 q \begin{bmatrix}[c@{~}c] 0 & 1 \\ 1 & 0 \end{bmatrix}\!,
 \\
\bar{q} := -\frac{\A_l}{\B_l} (q')^*,
\qquad
\bar{q}' := -\frac{\A_l}{\B_l}  q^*,
\end{gather}
and $\alpha$ (equation~\eqref{eq:alpha-def}) is the frequency dependent
constant
\begin{equation}
 \alpha := (12 M \omega)^2 + \A_l^2,
\end{equation}
which vanishes at so-called \emph{algebraically special
frequencies}~\cite{couch-newman}. Notice that $\alpha$ appears in the
denominator of $q$ and $\bar{q}'$, thus creating poles at the
algebraically special frequencies, the only poles in frequency other
than $\omega=0$ that enter the equivalence morphisms with eventual
triangular decoupling of $\sqtwo_e$ (Section~\ref{sec:lichev}). The same
constant $\alpha$ and the poles created by it appears in the
\emph{Chandrasekhar transformations} that relate the Regge--Wheeler
$\D_2$ and Zerilli $\D_2^+$ operators. We give a complete discussion of
this relation in Section~\ref{sec:rwz} and the necessity of these poles.

\begin{rmk} \label{rmk:rw-vs-z}
Using the Zerilli operator $\D_2^+$ (and its first order reduction)
instead of the Regge--Wheeler operator $\D_2$ in~\eqref{eq:DPhi2e-equiv}
results in equivalence maps that are free of poles at $\alpha=0$, the
algebraically special frequencies (in fact, free of all poles in
frequency except $\omega=0$). We have checked that this
change also eliminates the $\alpha=0$ from the equivalence morphisms
with the triangular decoupled form in Section~\ref{sec:lichev}. However,
a different undesirable property appears. As is well-known and can be
explicitly seen from our discussion in Section~\ref{sec:rwz}, both the
Zerilli operator $\D_2^+$ and the corresponding decoupling morphisms
from $\sqtwo_e$ contain $r$-dependent poles in the angular momentum
quantum number $l$. This means that the corresponding radial mode
operators cannot be easily interpreted as the mode separated form of
differential operators on spacetime. On the contrary poles at $\alpha=0$
(since it is a constant independent of $r$ and depending polynomially on
$\omega$ and $\B_l$) do not prevent us from finding such an
interpretation in terms of differential operators on spacetime
(Corollary~\ref{cor:rw-equiv-spacetime}). Also, besides being simpler,
we already have a good understanding of $\D_2$ with respect to
Section~\ref{sec:rw}, while the analogous treatment of $\D_2^+$ would
have to be done separately. For these reasons, we refrain from using the
Zerilli operator in this work.
\end{rmk}

We are now ready to give the full triangular decoupling of the even and
odd sectors of the Lichnerowicz wave equation $\sqtwo p_{\mu\nu} = 0$.

\subsubsection{Odd sector} \label{sec:lichod}

This sector is structurally very similar to the even sector of
the vector wave equation (Section~\ref{sec:vwev}). What
we need to feed into the decoupling strategy of Section~\ref{sec:triang-strategy}
are the odd sectors of the diagrams~\eqref{eq:lich-input} and the
additional composition identities
\begin{equation}
 T_{2o} D_{2o} = \underbrace{\frac{1}{\omega^2 r^2} \frac{1}{\B_l}}_{H_{T_{2o}}} \sqone_o
 \qquad \text{and} \qquad
 \Phi_{2o} D_{2o} = \underbrace{~ 0 ~}_{H_{\Phi_{2o}}} \sqone_o \,.
\end{equation}
The key equivalence diagram~\eqref{eq:res-gauge-equiv} for the
$E_{2o}$-$\Phi_{2o}$-$T_{2o}$ system is then obtained by following
exactly the same steps as for the vector even sector (see~\cite[Section~3.2]{kh-vwtriang}). To save space, we do not give these intermediate
details here. As explained in Section~\ref{sec:triang-strategy}, a
version of these intermediate steps can be recovered from the full
triangular simplification diagram~\eqref{eq:lich2o-equiv}. The final
decoupled triangular form is
\begin{equation} \label{eq:lichod-equiv}
\sqtwo_o \begin{bmatrix}
 \mathfrak{h}_t \\ \mathfrak{h}_r \\ \mathfrak{h}_2
 \end{bmatrix} = 0
 \iff
 \underbrace{{-\frac{2}{\omega^2}} \begin{bmatrix}
 \B_l \D_1 & 0 & \tfrac{1}{3} \B_l^2 \tfrac{f_1}{r^2} \\
 0 & \A_l \D_2 & 0 \\
 0 & 0 & \B_l \D_1
 \end{bmatrix}}_{\tilde{\sqtwo}_o}
 \begin{bmatrix}
 \psi_1 \\ \psi_2 \\ \psi'_1
 \end{bmatrix} = 0.
\end{equation}

While $\tilde{\sqtwo}_o$ is obviously not formally self-adjoint, its
equivalence with the formally self-adjoint $\sqtwo_o$ survives in the
existence of an operator $\Sigma_{2o}$ effecting the equivalence between
$\tilde{\sqtwo}_o$ and $\tilde{\sqtwo}_o^*$, namely $\tilde{\sqtwo}_o
\Sigma_{2o} = \Sigma_{2o} \tilde{\sqtwo}_o^*$, where
\begin{equation}
 \Sigma_{2o} = \begin{bmatrix}
 0 & 0 & 1 \\
 0 & -1 & 0 \\
 1 & 0 & 0
 \end{bmatrix}\!.
\end{equation}

The precise equivalence identities take the form
\begin{equation} \label{eq:lich2o-equiv}
\scalebox{1.2}{\begin{tikzcd}[column sep=huge,row sep=huge]
 \bullet
 \ar[swap]{d}[description]{\sqtwo_o}
 \ar[shift left]{r}{k_{2o}}
 \&
 \bullet
 \ar[shift left]{l}{\bar{k}_{2o}}
 \ar{d}[description]{\tilde{\sqtwo}_o}
 \\
 \bullet
 \ar[shift left]{r}{k'_{2o}}
 \ar[dashed,bend left=40]{u}{h_{2o}}
 \&
 \bullet
 \ar[shift left]{l}{\bar{k}'_{2o}}
 \ar[dashed,bend right=40]{u}[swap]{\tilde{h}_{2o}\,,}
\end{tikzcd}}
\end{equation}
where the operators $\bar{k}_{2o}$, $\bar{k}'_{2o}$, $h_{2o}$,
$\tilde{h}_{2o}$ are 
\begin{gather*}
 \bar{k}_{2o} =\!
\begin{bmatrix}[c@{~}c@{\!}c]
 {\rm i}\omega r & {\rm i}\omega r \tfrac{\B_l}{3}
 & \tfrac{{\rm i}\omega r (3\tfrac{1}{r^2}\del_r f_1 f r^3 + \A_l + 3(3\B_l f - f_1^2))}{18\B_l}
 \\[.5ex]
 -r \del_r & -\tfrac{(\B_l-2)}{3} \tfrac{1}{r^2} \del_r r^3
 & \tfrac{3 f_1 \omega^2 r^2 - r \del_r f (\A_l - 3 f (3\B_l+2f_1))
 - 6\A_l f - 3 f f_1 (7\B_l - 2(3f-f_1))}{18\B_l f}
 \\[.5ex]
 -2 & -2 \tfrac{f}{r^2} \del_r r^3 \!-\! \tfrac{2(\B_l-2)}{3}
 & -\tfrac{3 \del_r f (2\B_l+f_1) r + \A_l + 3 (\B_l (f-2f_1) + f_1 (2f-f_1))}{9\B_l}
 \end{bmatrix}\frac{1}{\omega^2 r^2},
 \\[1ex]
 \bar{k}'_{2o} =\!
 \begin{bmatrix}[c@{~}c@{\!\!}c]
 \tfrac{{\rm i}\omega r}{f} & \tfrac{{\rm i}\omega r}{3 f}
 & \tfrac{{\rm i}\omega r (3\del_r f f_1 + 9 \B_l f + \A_l ) }{18 \B_l f}
 \\[.5ex]
 r^3 f \del_r \tfrac{1}{r^2} & \tfrac{f \del_r r}{3}
 & -\tfrac{3 f_1 r^2 w^2 + \tfrac{1}{r} \del_r f ( 3 f (3 \B_l + 2 f_1) - \A_l ) r^2
 - \A_l ( 2 f - f_1 ) - 6 f ( 2 \B_l (3f+f_1) + 3 f f_1 )}{18 \B_l}
 \\[.5ex]
 \tfrac{\B_l - 2}{2} & \tfrac{3\del_r f r + (\B_l - 2)}{6}
 & \tfrac{\A_l ( 3 f + \B_l - 2 ) + 3 r^2 \del_r f (2\A_l + (\B_l - 2) f_1) \tfrac{1}{r}}{36 \B_l}
 \end{bmatrix}\!,
 \\[1ex]
 h_{2o} = \frac{1}{12 \B_l^2 \omega^2 r^2}
 \begin{bmatrix}
 0 & {\rm i}\omega r f_1 & 0 \\[.5ex]
 -{\rm i}\omega r f_1 & 2 (3\B_l+2f_1) & -2 f_1 \\[.5ex]
 0 & -2 f_1 & \tfrac{24 \B_l^2 f}{\A_l}
 \end{bmatrix}\!,
 \\[1ex]
 \tilde{h}_{2o} =
 \begin{bmatrix}
 \tfrac{1}{36} & -\tfrac{1}{108} & -\tfrac{2(2-9f)-\B_l}{648} \\[.5ex]
 0 & -\tfrac{f}{2 \A_l} & \tfrac{1}{108} \\[.5ex]
 0 & 0 & \tfrac{1}{36}
 \end{bmatrix}
 + \frac{1}{\B_l} \begin{bmatrix}
 -\tfrac{2+9f}{36} & \tfrac{2-9f}{108} & \tfrac{49-102f_1-57f_1^2}{648} \\[.5ex]
 \tfrac{1}{23} & -\tfrac{1}{18} & -\tfrac{2-9f}{108} \\[.5ex]
 \tfrac{1}{2} & -\tfrac{1}{23} & -\tfrac{2+9f}{36}
 \end{bmatrix}
 \\ \hphantom{ \tilde{h}_{2o} =}
 {} + \frac{f_1 f}{\B_l^2} \begin{bmatrix}
 -\tfrac{1}{6} & -\tfrac{1}{36} & -\tfrac{33 f_1 - 20}{216} \\[.5ex]
 0 & 0 & \tfrac{1}{36} \\[.5ex]
 0 & 0 & -\tfrac{1}{6}
 \end{bmatrix}
 + \frac{f_1^2}{\B_l^3} \begin{bmatrix}
 0 & 0 & -\tfrac{\omega^2 r^2 - 2 + 2f_1 (3f + f_1)}{72} \\[.5ex]
 0 & 0 & 0 \\[.5ex]
 0 & 0 & 0
 \end{bmatrix}\!,
\end{gather*}
while the remaining operators can be recovered from the identities
\begin{gather*}
 k_{2o} = \Sigma_{2o} (\bar{k}'_{2o})^*,\qquad
 k'_{2o} = \Sigma_{2o} (\bar{k}_{2o})^*,
\end{gather*}
where also
\begin{gather*}
 h_{2o}^* = h_{2o},\qquad
 \tilde{h}_{2o}^*= \Sigma_{2o} \tilde{h}_{2o} \Sigma_{2o}.
\end{gather*}
When it comes to the properties of diagram~\eqref{eq:lich2o-equiv} with
respect to taking formal adjoints, Remark~\ref{rmk:twist-selfadj} applies here
equally well, with appropriate transposition of notation.

The above upper triangular form of $\tilde{\sqtwo}_o$ allows us to
classify all of its symmetries (or automorphisms in the sense of
Section~\ref{sec:formal}). The key result is the absence of
non-vanishing morphisms between the Regge--Wheeler equations $\D_{s_0}$
and $\D_{s_1}$, except the identity morphism when $s_0 = s_1$
(Theorem~\ref{thm:rw-maps}). This prevents the coupling of the $\D_1$
and $\D_2$ blocks by an automorphism. Also, the single non-vanishing
(and non-removable) off-diagonal element in $\tilde{\sqtwo}_o$ prevents
the exchange of the order of the $\D_1$ blocks. Hence, any automorphism
of $\tilde{\sqtwo}_o$ must also be upper triangular, with every
non-vanishing matrix element proportional to the identity.

With the above logic in mind, the most general automorphism takes the
form $\tilde{\sqtwo}_o A = A \tilde{\sqtwo}_o$, with
\begin{equation}
 A = \begin{bmatrix}
 a_1 & 0 & b_1 \\
 0 & a_2 & 0 \\
 0 & 0 & a_1
 \end{bmatrix}\!,
\end{equation}
parametrized by the $3$ constants $a_1$, $b_1$, $a_2$. It is invertible when
$a_1, a_2 \ne 0$.

With the above choice of $k_{2o}$ and $k'_{2o}$, up to homotopy, we have
the following equivalences of operators:
\begin{gather*}
 k_{2o} \circ D_{2o} \circ \bar{k}_{1o}
\sim \begin{bmatrix} 1 \\ 0 \\ 0 \end{bmatrix}\!,\qquad
 k_{1o} \circ T_{2o} \circ \bar{k}_{2o}
\sim \begin{bmatrix} 0 & 0 & 1 \end{bmatrix}\!.
\end{gather*}

\subsubsection{Even sector} \label{sec:lichev}

The even sector of the Lichnerowicz wave equation is by far the most
complicated case of those we presented in this work. It requires the
full arsenal of tools that we have outlined in Sections~\ref{sec:formal}
and~\ref{sec:rw}. If one blindly attacks it, even with the systematic
strategy of Section~\ref{sec:triang-strategy}, the intermediate formulas can
easily become very large, significantly slowing down even computer
algebra calculations. So, before giving the final result below, we first
make some comments about how we implemented the decoupling strategy.
In particular, points 2--6 below describe how to obtain the key
equivalence diagram~\eqref{eq:res-gauge-equiv} for the
$E_{2e}$-$\Phi_{2e}$-$(tr_e,T_{2e})$ system.
\begin{enumerate}\itemsep=0pt
\item
 We start by taking full advantage of the simultaneous decoupling of
 both $T_{2e}$ and $\tr$ from $\sqtwo_e$ and the relation $\tr_e \circ
 D_{2e}[v] \propto T_{1e}[v]$. Namely, we extend the gauge fixing
 condition $T_2[p]=0$ by $\tr[p]=0$. So the choice of \emph{pure
 gauge}, \emph{constraint violating} and \emph{gauge invariant} modes
 that we feed into the strategy of Section~\ref{sec:triang-strategy}
 corresponds to the diagrams
 \begin{gather}
 \begin{tikzcd}[column sep=2cm,row sep=2cm]
 \bullet \ar[swap]{d}{\begin{bmatrix}\sqone_e \\ T_{1e} \end{bmatrix}} \ar{r}{D_{2e}} \&
 \bullet \ar{d}[description]{\sqtwo_e}
 \\
 \bullet \ar[swap]{r}{\begin{bmatrix} D'_{2e} & 0 \end{bmatrix}} \&
 \bullet
 \end{tikzcd}\!,
 \qquad
 \begin{tikzcd}[column sep=2cm,row sep=2cm]
 \bullet \ar[swap]{d}[description]{\sqtwo_e}
 \ar{r}{\begin{bmatrix} \tr_e \\ T_{2e} \end{bmatrix}} \&
 \bullet \ar{d}[description]{\begin{bmatrix}\sqzer & 0 \\ 0 & \sqone_e\end{bmatrix}}
 \\
 \bullet \ar[swap]{r}{\begin{bmatrix} \tr'_e \\ T'_{2e} \end{bmatrix}} \&
 \bullet
 \end{tikzcd}\!,
 \qquad
 \begin{tikzcd}[column sep=2cm,row sep=2cm]
 \bullet \ar[swap]{d}{\begin{bmatrix}\sqtwo_e \\ \tr_e \\ T_{2e}\end{bmatrix}} \ar{r}{\Phi_{2e}} \&
 \bullet \ar{d}{\D_{\Phi_{2e}}}
 \\
 \bullet \ar[swap]{r}{\begin{bmatrix}\Phi'_{2e} & 0 & -\Delta_{\Phi_{2e} T_{2e}}\end{bmatrix}} \&
 \bullet
 \end{tikzcd}\!,
 \end{gather}
 and the composition identities
 \begin{gather}
 T_{2e} D_{2e}= \underbrace{\frac{1}{\omega^2 r^2} \begin{bmatrix}[ccc|c]
 -f & 0 & 0 & 0 \\
 0 & \tfrac{1}{f} & 0 & 0 \\
 0 & 0 & \tfrac{1}{\B_l} & 0
 \end{bmatrix}}_{H_{T_{2e}}}
 \begin{bmatrix} \sqone_e \\ \cmidrule(lr){1-1} T_{1e} \end{bmatrix} \!,
 \\
 \tr_e D_{2e} = \underbrace{\begin{bmatrix} 0 & -2 \end{bmatrix}}_{H_{tr_e}}
 \begin{bmatrix} \sqone_e \\ T_{1e} \end{bmatrix} \!,
 \\
 \Phi_{2e} D_{2e}= \underbrace{\begin{bmatrix} 0 & 0 \end{bmatrix}}_{H_{\Phi_{2e}}}
 \begin{bmatrix} \sqone_e \\ T_{1e} \end{bmatrix} \!.
 \end{gather}
\item
 Note that by the results of Section~\ref{sec:vwev},
 $\left[\begin{smallmatrix} \sqone_e \\ T_{1e}
 \end{smallmatrix}\right]$ is clearly equivalent to the diagonal
 operator $\left[\begin{smallmatrix} \D_0 & 0 \\ 0 & \B_l \D_1
 \end{smallmatrix}\right]$. We use this equivalence to simplify the
 first diagram above before proceeding.
\item
 Next, we need to simplify the joint system $\sqtwo_e p = 0$,
 $\Phi_{2e}[p] = 0$, $\tr_e[p] = 0$ and $T_{2e}[p] = 0$. It is of
 mixed second, first and zeroth orders. We reduce it to mixed first and
 zeroth orders only, as follows. We use $\sqtwo_e$ to eliminate second
 order derivatives from $\del_r T_{2e}$, giving us six independent
 first order equations. Adding to the list the first component of
 $\Phi_{2e}$, we get seven independent first order equations, from
 which we can solve for the first derivatives of the seven metric
 components $\mathfrak{h}_{tr}$, $\mathfrak{j}_t$, $\mathfrak{h}_{tt}$,
 $\mathfrak{h}_{rr}$, $\mathfrak{K}$, $\mathfrak{j}_r$ and
 $\mathfrak{G}$. Using these first order equations to eliminate all
 derivatives from $\del_r \tr$, we get a zeroth order equation.
 Including also $\tr$ itself and the second component of $\Phi_{2e}$,
 we get three independent zeroth order equations.
\item
 Next, we use these three zeroth order equations to eliminate the
 $\mathfrak{h}_{tr}$, $\mathfrak{h}_{tt}$ and $\mathfrak{h}_{rr}$
 components completely. The choice of these three components is
 dictated by the simplicity of the denominators needed for the
 inversion. What remains is the following first order system for the
 remaining four metric components:
 \begin{equation}
 \begin{bmatrix}
 \del_r r
 & \tfrac{\B_l}{{\rm i}\omega r}
 & -\tfrac{\omega^2 r^2 + \B_l f}{{\rm i}\omega r}
 & -\tfrac{2\omega^2 r^2 - \B_l (\B_l-2f)}{2{\rm i}\omega r}
 \\[.5ex]
 \tfrac{2{\rm i}\omega r}{f}
 & \tfrac{1}{r^2}\del_r r^3 - \tfrac{\B_l}{2f}
 & \B_l
 & -\tfrac{4\omega^2 r^2 + \B_l (\B_l-2f)}{4f}
 \\[.5ex]
 \tfrac{{\rm i}\omega r}{f^2}
 & \tfrac{1}{f}
 & \tfrac{1}{f r^2} \del_r r^3 f
 & -\tfrac{\B_l-2}{2f}
 \\[.5ex]
 0
 & \tfrac{1}{f}
 & -2
 & r\del_r + \tfrac{\B_l}{2f}
 \end{bmatrix}
 \begin{bmatrix}
 \mathfrak{j}_t \\ \mathfrak{K} \\ \mathfrak{j}_r \\ \mathfrak{G}
 \end{bmatrix} = 0.
 \end{equation}
\item
 The above $4\times 4$ \emph{first} order system is then equivalent to
 the $2\times 2$ \emph{second} order $\left[\begin{smallmatrix} \D_0 &
 0 \\ 0 & \B_l \D_1 \end{smallmatrix}\right]$. In one direction, the
 equivalence is given by the appropriately modified $D_{2e}$.
 Obviously, the $2\times 2$ second order system can also be reduced to
 a $4\times 4$ first order system, while homotopy equivalence reduces
 the morphism induced by $\D_{2e}$ to a zeroth order operator (a
 $4\times 4$ matrix, in fact) between these two first order systems.
 This $4\times 4$ matrix is invertible and its inverse, after undoing
 the order reductions, gives the equivalence in the other direction.
\item
 In the above steps, the various eliminations of components introduce
 divisions only by $\omega$, $r$ and $f$, except the final $4\times 4$
 matrix inversion, which introduces division by $\B_l$. Whenever
 encountering choices, e.g.,\ which components to eliminate, we are
 guided by the desire not to introduce other factors into the
 denominators, in order to keep the intermediate expressions in the
 calculations more manageable.
\item
 From this point on, it is enough to continue blindly following the
 strategy from Section~\ref{sec:triang-strategy} to end up with a block upper
 triangular form of $\sqtwo_e$. At this stage, the diagonal blocks are
 not all proportional to Regge--Wheeler operators $\D_s$, but include
 also $\sqzer$, $\sqone_e$ and $\D_{\Phi_{2e}}$. Now we can use the
 results of Sections~\ref{sec:sw}, \ref{sec:vwev} and the
 equivalence~\eqref{eq:DPhi2e-equiv} to transform each of these blocks
 themselves to upper triangular form (using the transformation rules
 from Lemma~\ref{lem:diag-simp}), now with only Regge--Wheeler operators
 on the diagonals.
\item
 It remains now only to apply the methods of Section~\ref{sec:rw} to
 systematically simplify the off-diagonal components of the above upper
 triangular system. Practically speaking, it turns out to be convenient
 to first decouple the~$\D_2$ equations from the rest, then the $\D_1$
 equations from the remaining $\D_0$ equations. Within each still
 coupled block, it is convenient to complete the simplification by
 going through successive super-diagonals.
\end{enumerate}

The final decoupled triangular form (recalling the definition of
$\alpha$ in~\eqref{eq:alpha-def}) of $\sqtwo_e
\mathfrak{p}^{\text{even}} = 0$ is
\begin{gather}
 \underbrace{{-\frac{2}{\omega^2}} 
 \begin{bmatrix}[c@{~}c@{~}c@{~}c@{~}c@{~}c@{~}c]
 \D_0 & 0 & -\frac{f_1}{r^2}\big(\B_l + \tfrac{1}{2}f_1\big) &
 0 & \frac{f_1}{r^2}\big(\B_l + \tfrac{1}{2}f_1\big) & 0 &
 \frac{f_1^2}{8r^2} (7\B_l + 2) \\
 0 & \B_l\D_1 & 0 & 0 & 0 & -\tfrac{5}{3} \B_l^2 \frac{f_1}{r^2} & 0 \\
 0 & 0 & \D_0 & 0 & 0 & 0 & \frac{f_1}{r^2}\big(\B_l + \tfrac{1}{2}f_1\big) \\
 0 & 0 & 0 & \alpha\A_l \D_2 & 0 & 0 & 0 \\
 0 & 0 & 0 & 0 & \D_0 & 0 & -\frac{f_1}{r^2}\big(\B_l + \tfrac{1}{2}f_1\big) \\
 0 & 0 & 0 & 0 & 0 & \B_l\D_1 & 0 \\
 0 & 0 & 0 & 0 & 0 & 0 & \D_0
 \end{bmatrix}
 }_{\tilde{\sqtwo}_e}
 \begin{bmatrix}
 \phi_0 \\ \phi_1 \\ \phi'_0 \\ \phi_2 \\ \chi_0 \\ \chi_1 \\ \chi'_0
 \end{bmatrix}
 \\ \qquad
 {}= 0.\label{eq:lichev-equiv}
\end{gather}

\begin{rmk} \label{rmk:full-triang}
At this point, it is worth noting that equation~\eqref{eq:lichev-equiv}
is the first appearance of the full triangular decoupled
form~$\tilde{\sqtwo}_e$ in the literature. Previously, the original
investigations in~\cite{berndtson} had only produced the $(\phi_0,
\phi_1, \phi_2)$ subsystem, without studying how it couples to the
remaining degrees of freedom, or considering the maximal simplification
of the off-diagonal couplings as we have done here. A similar remark can
be made about~$\tilde{\sqtwo}_o$ from equation~\eqref{eq:lichod-equiv}.
\end{rmk}

While $\tilde{\sqtwo}_e$ is obviously not formally self-adjoint, its
equivalence with the formally self-adjo\-int~$\sqtwo_e$ survives in the
existence of an operator $\Sigma_{2e}$ effecting the equivalence between
$\tilde{\sqtwo}_e$ and $\tilde{\sqtwo}_{\rm e}^*$, namely $\tilde{\sqtwo}_e
\Sigma_{2e} = \Sigma_{2e} \tilde{\sqtwo}_{\rm e}^*$, where
\begin{equation}
 \Sigma_{2e} = \begin{bmatrix}
 0 & 0 & 0 & 0 & 0 & 0 & 1 \\
 0 & 0 & 0 & 0 & 0 & 1 & 0 \\
 0 & 0 & -1 & 0 & 0 & 0 & 0 \\
 0 & 0 & 0 & -1 & 0 & 0 & 0 \\
 0 & 0 & 0 & 0 & -1 & 0 & 0 \\
 0 & 1 & 0 & 0 & 0 & 0 & 0 \\
 1 & 0 & 0 & 0 & 0 & 0 & 0
 \end{bmatrix}\!.
\end{equation}

The precise equivalence identities take the form
\begin{equation} \label{eq:lich2e-equiv}
\scalebox{1.2}{\begin{tikzcd}[column sep=huge,row sep=huge]
 \bullet
 \ar[swap]{d}[description]{\sqtwo_e}
 \ar[shift left]{r}{k_{2e}}
 \&
 \bullet
 \ar[shift left]{l}{\bar{k}_{2e}}
 \ar{d}[description]{\tilde{\sqtwo}_e}
 \\
 \bullet
 \ar[shift left]{r}{k'_{2e}}
 \ar[dashed,bend left=40]{u}{h_{2e}}
 \&
 \bullet
 \ar[shift left]{l}{\bar{k}'_{2e}}
 \ar[dashed,bend right=40]{u}[swap]{\tilde{h}_{2e},}
\end{tikzcd}}
\end{equation}
where the (rather lengthy) explicit formulas for $\bar{k}_{2e}$,
$\bar{k}'_{2e}$ and $h_{2e}$, $\tilde{h}_{2e}$ can be found in
Appendix~\ref{sec:lichev-formulas},%
 \footnote{While undeniably lengthy, the formulas in
 Appendix~\ref{sec:lichev-formulas} are relatively compact. Analogous
 formulas in~\cite[Appendix~A]{berndtson} take up 10 pages and correspond
 only to the $\phi_0$, $\phi_1$ and $\phi_2$ columns of our
 $\bar{k}_{2e}$ operator.} %
while the remaining operators can be recovered from the identities
\begin{equation}
 k_{2e} = \Sigma_{2e} \big(\bar{k}'_{2e}\big)^*,
 \qquad
 k'_{2e} = \Sigma_{2e} \big(\bar{k}_{2e}\big)^*,
\end{equation}
where also
\begin{equation} \label{eq:h2e-adj}
 h_{2e}^* = h_{2e},
 \qquad
 \tilde{h}_{2e}^* = \Sigma_{2e} \tilde{h}_{2e} \Sigma_{2e}.
\end{equation}
When it comes to the properties of diagram~\eqref{eq:lich2e-equiv} with
respect to taking formal adjoints, Remark~\ref{rmk:twist-selfadj}
applies here equally well, with appropriate transposition of notation.

Just in the case of $\tilde{\sqtwo}_o$ (Section~\ref{sec:lichod}), the
above upper triangular form of $\tilde{\sqtwo}_e$ allows us to classify
all of its symmetries (or automorphisms in the sense of
Section~\ref{sec:formal}). Following the same logic, the most general
automorphism takes the form $\tilde{\sqtwo}_e A = A \tilde{\sqtwo}_e$,
with
\begin{equation} \label{eq:lich2e-sym}
 A = \begin{bmatrix}
 a_0 & 0 & c_0+d_0 & 0 & c_0-e_0 & 0 & g_0 \\
 0 & a_1 & 0 & 0 & 0 & b_1 & 0 \\
 0 & 0 & \tfrac{1}{2}(b_0+a_0) & 0 & \tfrac{1}{2}(b_0-a_0) & 0 & c_0-d_0 \\
 0 & 0 & 0 & a_2 & 0 & 0 & 0 \\
 0 & 0 & \tfrac{1}{2}(b_0-a_0) & 0 & \tfrac{1}{2}(b_0+a_0) & 0 & c_0+e_0 \\
 0 & 0 & 0 & 0 & 0 & a_1 & 0 \\
 0 & 0 & 0 & 0 & 0 & 0 & a_0
 \end{bmatrix}\!,
\end{equation}
parametrized by the $9$ constants $a_0$, $b_0$, $c_0$, $d_0$, $e_0$, $g_0$, $a_1$,
$b_1$, $a_2$. It is invertible when $a_0, b_0, a_1, a_2\allowbreak \ne 0$. Note that
it is allowed to deviate from upper triangular form by mixing the two
inner $\D_0$ blocks.

With this choice of $k_{2e}$ and $k'_{2e}$, up to homotopy, we have the
following equivalences of operators:
\begin{align}
 \frac{{\rm i}\omega r}{2}\, \tr \circ \bar{k}_{2e} &\sim
 \begin{bmatrix} 0 & 0 & 1 & 0 & 1 & 0 & 0 \end{bmatrix}\!, \\
 k_{2e} \circ D_{2e} \circ \bar{k}_{1e} &\sim
 \begin{bmatrix}
 1 & 0 & \tfrac{1}{16} (7\B_l+2) \\
 0 & 1 & 0 \\
 0 & 0 & 1 \\
 0 & 0 & 0 \\
 0 & 0 & 0 \\
 0 & 0 & 0 \\
 0 & 0 & 0
 \end{bmatrix}\!, \\
 k_{1e} \circ T_{2e} \circ \bar{k}_{2e} &\sim
 \begin{bmatrix}
 0 & 0 & 0 & 0 & 1 & 0 & \tfrac{1}{16} (7\B_l+2) \\
 0 & 0 & 0 & 0 & 0 & 1 & 0 \\
 0 & 0 & 0 & 0 & 0 & 0 & 1
 \end{bmatrix}\!.
\end{align}

Note that the adjointness properties of the equivalence
diagram~\eqref{eq:lich2e-equiv} do not uniquely fix the operators
$k_{2e}$ and $k'_{2e}$. It turns out that a subfamily
of~\eqref{eq:lich2e-sym}, the symmetries of~$\tilde{\sqtwo}_e$, respects
the adjointness properties of the equivalence when $A^{-1} = \Sigma_{2e}
A^* \Sigma_{2e}$. This constraint is satisfied for instance by the
family $a_0 = a_1 = b_0 = 1$, $d_0 = e_0 = 0$, and $g_0 = c_0^2$. The
remaining $c_0$ parameter may be uniquely fixed by requiring that the
operator equivalent to $\frac{{\rm i}\omega r}{2} \tr$ involves only the two
inner spin-$0$ components $\phi'_0$ and $\chi_0$, and the operator
equivalent to $D_{2e}$ has the right sign. No other ambiguity remains
after that.

\section{Discussion}\label{sec:discussion}

Inspired by previous more ad-hoc work~\cite{berndtson, rosa-dolan} on
decoupling the radial mode equations of the vector wave and Lichnerowicz
wave equations on the Schwarzschild spacetime, in~\cite{kh-vwtriang,
kh-rwtriang} we have initiated a research program of finding a
systematic approach to the problem, as well as simplifying and extending
the results. This work completes part of our program, which consists of
successfully applying the developed methods to both the odd and even
sectors of the Lichnerowicz wave equation on Schwarzschild. The results
include explicit formulas for reducing the complicated mode separated
radial equations to a triangular system of sparsely coupled spin-$s$
Regge--Wheeler equations, by separating the degrees of freedom into
\emph{pure gauge}, \emph{gauge invariant}, and \emph{constraint
violating} modes. Our systematically derived results extend those
of~\cite{berndtson, rosa-dolan} because these earlier works did not
attempt to uncover the full triangular structure of the decoupled
equations or their maximal simplification (in the context of rational
ODEs).

Our main results are summarized in Theorem~\ref{thm:rw-equiv} in
Section~\ref{sec:appl}, where we also list a number of quick corollaries
that indicate important applications of our results to the spectral
analysis of electromagnetic and gravitational perturbations of the
Schwarzschild black hole in harmonic gauges. One application that we
intend to pursue in future work, by leveraging the well-understood
spectral theory of Regge--Wheeler operators~\cite{dss}, is an explicit
mode-level construction of the retarded/advanced Green functions for
these perturbations. Such a construction is a first step toward
classical stability analysis and the construction of quantum field
theoretic propagators. The absence decoupling results, such as ours,
have so far constituted a great obstacle in these applications.

An open question is whether our decoupling strategy could be applied to
Kerr black holes, where so far the harmonic gauges have received very
little attention. The main obstacle to a direct generalization seems to
be the lack of a complete separation of the vector wave and Lichnerowicz
wave equations on the Kerr background. Other than that, the basic
ingredients that would allow the separation of the modes into pure
gauge, gauge invariant, and constraint violating already exists. After
all, metric perturbations are related to pure gauge and constraint
violating modes by the same covariant operators on any spacetime, while
the role of the Regge--Wheeler equations satisfied by gauge invariant
field combinations can be played by the Teukolsky
equations~\cite{chandrasekhar}. This question should be addressed in
future work.

\appendix

\section{Tensor spherical harmonics} \label{sec:spherical}

Here we compare our conventions and notation for mode decomposition into
spherical harmonics, summarized in Section~\ref{sec:schw}, to other closely
related work in the literature. For convenience of notation, we omit the
angular momentum spectral parameters $lm$, where it causes no confusion.

Any spherical harmonic decomposition of vectors and symmetric
$2$-tensors takes the form
\begin{equation}
 v_\mu = \sum_{lm} \sum_{i=1}^4 v^{(i)lm}(r,t) Y^{(i)lm}_{\mu}
 \qquad \text{and} \qquad
 p_{\mu\nu} = \sum_{lm} \sum_{i=1}^{10} p^{(i)lm}(r,t) Y^{(i)lm}_{\mu\nu},
\end{equation}
where the precise choice of the spherical harmonic bases $Y_\mu^{(i)}$
and $Y_{\mu\nu}^{(i)}$ depend on the convention. In this work our
notation in the vector case corresponds to
\begin{gather*}
 v^{(1)} = v_t,\qquad Y^{(1)}_\mu \to \begin{bmatrix} (\d t)_a Y \\ 0 \end{bmatrix}\!,
\\
 v^{(2)} = v_r,\qquad Y^{(2)}_\mu \to \begin{bmatrix} (\d r)_a Y \\ 0 \end{bmatrix}\!,\qquad
\\
 v^{(3)} = u,\qquad Y^{(3)}_\mu \to \begin{bmatrix} 0 \\ r Y_A \end{bmatrix}\!,\qquad
\\
 v^{(4)} = w, \qquad Y^{(4)}_\mu \to \begin{bmatrix} 0 \\ r X_A \end{bmatrix}\!,
\end{gather*}
and in the tensor case to
\begin{gather}
p^{(1)} = h_{tr},\qquad
Y^{(1)}_{\mu\nu} \to \begin{bmatrix} 2(\d t)_{(a}(\d r)_{b)} Y & 0 \\ 0 & 0 \end{bmatrix}\!,
\\
p^{(2)} = j_{t}, \qquad
Y^{(2)}_{\mu\nu}\to \begin{bmatrix} 0 & (\d t)_a r Y_B \\ (\d t)_b r Y_A & 0 \end{bmatrix}\!,
\\
p^{(3)} = h_{tt},\qquad
Y^{(3)}_{\mu\nu} \to \begin{bmatrix} (\d t)_a(\d t)_b Y & 0 \\ 0 & 0 \end{bmatrix}\!,
\\
p^{(4)} = h_{rr}, \qquad
Y^{(4)}_{\mu\nu} \to \begin{bmatrix} (\d r)_a(\d r)_b Y & 0 \\ 0 & 0 \end{bmatrix}\!,
\\
p^{(5)} = K,\qquad
Y^{(5)}_{\mu\nu} \to \begin{bmatrix} 0 & 0 \\ 0 & r^2 \Omega_{AB} Y \end{bmatrix}\!,
\\
p^{(6)} = j_{r}, \qquad
Y^{(6)}_{\mu\nu} \to \begin{bmatrix} 0 & (\d r)_a r Y_B \\ (\d r)_b r Y_A & 0 \end{bmatrix}\!,
\\
p^{(7)} = G, \qquad
Y^{(7)}_{\mu\nu} \to \begin{bmatrix} 0 & 0 \\ 0 & r^2 Y_{AB} \end{bmatrix}\!,
\\
p^{(8)} = h_{t}, \qquad
Y^{(8)}_{\mu\nu} \to \begin{bmatrix} 0 & (\d t)_a r X_B \\ (\d t)_b r X_A & 0 \end{bmatrix}\!,
\\
p^{(9)} = h_{r},\qquad
Y^{(9)}_{\mu\nu} \to \begin{bmatrix} 0 & (\d r)_a r X_B \\ (\d r)_b r X_A & 0 \end{bmatrix}\!,
\\
p^{(10)} = h_{2}, \qquad
Y^{(10)}_{\mu\nu} \to \begin{bmatrix} 0 & 0 \\ 0 & r^2 X_{AB} \end{bmatrix}\!.
\end{gather}

Our conventions on the definition of the spherical harmonics $Y$, $Y_A$,
$X_A$, $Y_{AB}$ and $X_{AB}$ are taken directly
from~\cite{martel-poisson} (MP), where they are compared to
\emph{spin-weighted} and \emph{pure spin} harmonics
in~\cite[Appendix~A]{martel-poisson}. However, our normalizations for the
coefficients differ by some factors of
$r$~\cite[equations~(4.1)--(4.5), (5.1)--(5.4)]{martel-poisson}:
\begin{alignat*}{5}
 &\xi^\MP_{t} = v_{t},\qquad&&
 \xi^\MP_{r} = v_{r},\qquad&&
 \xi^\MP_{\text{even}} = r u,\qquad&&
 \xi^\MP_{\text{odd}} = r w,&
 \\
& h^\MP_{tt} = h_{tt},\qquad&&
 h^\MP_{rr} = h_{rr},\qquad&&
 h^\MP_{tr} = h_{tr},&&&
 \\
& j^\MP_{t} = r j_{t},\qquad&&
 j^\MP_{r} = r j_{r},\qquad&&
 K^\MP = K,\qquad&&
 G^\MP = G,&
 \\
 &h^\MP_{t} = r h_{t},\qquad&&
 h^\MP_{r} = r h_{r},\qquad&&
 h^\MP_{2} = r^2 h_2.&&&
\end{alignat*}
The conventions for the spherical harmonics used
in~\cite[Appendix~A]{barack-lousto} (BL) are different. In~their
notation~\cite[equation~(8)]{barack-lousto}, the trace-reversed metric
perturbation $\bar{p}_{\mu\nu} = p_{\mu\nu} - \frac{1}{2}
p^\lambda_\lambda \; \gf_{\mu\nu}$ is expanded as
\begin{equation}
 \bar{p}_{\mu\nu}
 = \sum_{lm} \sum_{i=1}^{10}
 \frac{(\B_l-2)}{r} \; \bar{p}^{(i)lm}(r,t) \;
 a^{(i)l} \; {}^\BL Y^{(i)lm}_{\mu\nu},
\end{equation}
where they factored out some normalizing constants
$a^{(i)l}$~\cite[equations~(9,10)]{barack-lousto} from their basis. The relation
of their basis to ours is the following, which can be read off by
comparing the corresponding coordinate formulas
in~\cite[Appendix~A]{martel-poisson}
and~\cite[equations~(9,10),Appendix~A]{barack-lousto}:
\begin{gather}
 a^{(1)} \; {}^\BL Y^{(1)}_{\mu\nu} = \frac{1}{2} [Y^{(3)}_{\mu\nu} + f^{-2} Y^{(4)}_{\mu\nu}],\qquad
 a^{(2)} \; {}^\BL Y^{(2)}_{\mu\nu} = \frac{f^{-2}}{2} Y^{(1)}_{\mu\nu},
 \\
 a^{(3)} \; {}^\BL Y^{(3)}_{\mu\nu} = \frac{1}{2} [Y^{(3)}_{\mu\nu} - f^{-2} Y^{(4)}_{\mu\nu}],\qquad
 a^{(4)} \; {}^\BL Y^{(4)}_{\mu\nu} = \frac{1}{2\B_l} Y^{(2)}_{\mu\nu},
 \\
 a^{(5)} \; {}^\BL Y^{(5)}_{\mu\nu} = \frac{f^{-1}}{2\B_l} Y^{(6)}_{\mu\nu},\qquad
 a^{(6)} \; {}^\BL Y^{(6)}_{\mu\nu} = \frac{1}{2} Y^{(5)}_{\mu\nu}, \qquad
 a^{(7)} \; {}^\BL Y^{(7)}_{\mu\nu} = \frac{1}{2\A_l} 2 Y^{(7)}_{\mu\nu},
 \\
 a^{(8)} \; {}^\BL Y^{(9)}_{\mu\nu} = -\frac{1}{2\B_l} Y^{(8)}_{\mu\nu}, \qquad
 a^{(9)} \; {}^\BL Y^{(9)}_{\mu\nu} = -\frac{f^{-1}}{2\B_l} Y^{(9)}_{\mu\nu},\qquad
 a^{(10)} \; {}^\BL Y^{(10)}_{\mu\nu} = -\frac{1}{2\A_l} 2 Y^{(10)}_{\mu\nu}.
\end{gather}

\section{Operators for the Lichnerowicz even sector} \label{sec:lichev-formulas}

Below, we give explicit formulas for the morphism operators
$\bar{k}_{2e}$, $\bar{k}'_{2e}$ and homotopy operators~$h_{2e}$,
$\tilde{h}_{2e}$ that appear in the equivalence
diagram~\eqref{eq:lichev-equiv} in Section~\ref{sec:lichev}.
The morphism operators are first order. Hence, they can be written as
\begin{equation}
 \bar{k}_{2e} = \bar{k}_{2e}^{(1)} \frac{1}{{\rm i}\omega} \del_r + \bar{k}_{2e}^{(0)}
 \qquad \text{and} \qquad
 \bar{k}'_{2e} = \bar{k}_{2e}^{\prime(1)} \frac{1}{{\rm i}\omega} \del_r + \bar{k}_{2e}^{\prime(0)}.
\end{equation}
Let us denote by $\C$ and $\tilde{\C}$ the coefficient matrices of
$\del_r^2$ in $\sqtwo_e$ and $\tilde{\sqtwo}_e$, respectively. They are
clearly diagonal and invertible. By comparing the highest order
coefficients in the identity $\sqtwo_e \bar{k}_{2e} = \bar{k}'_{2e}
\tilde{\sqtwo}_e$, we find the relation
\begin{equation}
 \C \bar{k}_{2e}^{(1)} = \bar{k}_{2e}^{\prime(1)} \tilde{\C},
\end{equation}
which means that there exists a matrix $\K$ such that
\begin{equation} \label{eq:def-calK}
 \bar{k}_{2e}^{\prime(1)} = \K
 \qquad \text{and} \qquad
 \bar{k}_{2e}^{(1)} = \C^{-1} \K \tilde{\C}.
\end{equation}
More explicitly
\begin{gather}
 \C = r^2 \begin{bmatrix}[c@{~~}c@{~~}c@{~~}c@{~~}c@{~~}c@{~~}c]
 -2f & 0 & 0 & 0 & 0 & 0 & 0 \\
 0 & -2\B_l & 0 & 0 & 0 & 0 & 0 \\
 0 & 0 & f^{-1} & 0 & 0 & 0 & 0 \\
 0 & 0 & 0 & f^3 & 0 & 0 & 0 \\
 0 & 0 & 0 & 0 & 2f & 0 & 0 \\
 0 & 0 & 0 & 0 & 0 & 2\B_l f^2 & 0 \\
 0 & 0 & 0 & 0 & 0 & 0 & \tfrac{\A_l}{2} f
 \end{bmatrix}\!,
 \qquad
 \tilde{\C} = -\frac{2f}{\omega^2} \begin{bmatrix}[c@{~~}c@{~~}c@{~~}c@{~~}c@{~~}c@{~~}c]
 1 & 0 & 0 & 0 & 0 & 0 & 0 \\
 0 & \B_l & 0 & 0 & 0 & 0 & 0 \\
 0 & 0 & 1 & 0 & 0 & 0 & 0 \\
 0 & 0 & 0 & \alpha \A_l & 0 & 0 & 0 \\
 0 & 0 & 0 & 0 & 1 & 0 & 0 \\
 0 & 0 & 0 & 0 & 0 & \B_l & 0 \\
 0 & 0 & 0 & 0 & 0 & 0 & 1
 \end{bmatrix}\!.
\end{gather}
The expression for $\K$ is somewhat large, so we display it
split into groups of columns:
\begin{gather}
 \K \begin{bmatrix}[c@{~}c@{~}c@{~}c]
 1 & 0 & 0 & 0 \\ 0 & 1 & 0 & 0 \\ 0 & 0 & 1 & 0 \\ 0 & 0 & 0 & \alpha \\
 0 & 0 & 0 & 0 \\ 0 & 0 & 0 & 0 \\ 0 & 0 & 0 & 0
 \end{bmatrix}
 = \begin{bmatrix}[c@{\!}c@{\!}c@{\!\!\!\!}c]
 2{\rm i} r\omega & - 2{\rm i} r\omega  & -  \tfrac{{\rm i} (15 \B_l + 2) r\omega}{8}
  & \tfrac{{\rm i} (3 f - \B_l - 1) (3 f + \B_l - 3) r\omega}{6}
 \\[0.3ex]
 0 & -{\rm i} r\omega & -{\rm i} \B_l r\omega  & \tfrac{{\rm i} \B_l (3 f - \B_l - 1) r\omega}{3}
 \\[0.3ex]
 - \tfrac{f_1}{2 f} & \tfrac{f_1}{2 f}
  & - \tfrac{(8 f - 15 \B_l - 2) f_1}{32 f}
  & - \tfrac{- 12 f_1 r^2 \omega^2 - \B_l (6 f + \B_l - 2) f_1}{24 f}
 \\[0.3ex]
 - \tfrac{f (3 f + 1)}{2} & \tfrac{f (f + 1)}{2}
  & - \tfrac{\left(\!\!\scriptsize\begin{array}{l}f (24 f^2 - 13 \B_l f \\+ 2 f - 15 \B_l - 2)\end{array}\!\!\right)}{32}
  & - \tfrac{\left(\!\!\scriptsize\begin{array}{l}- 12 ff_1 r^2 \omega^2 - \B_l f (6 f^2 \\- 5 \B_l f + 4 f + \B_l - 2)\end{array}\!\!\right)}{24}
 \\[0.3ex]
 2 f & - f & \tfrac{f (8 f - 7 \B_l - 2)}{8}  & - \tfrac{\B_l f (3 f - \B_l - 1)}{6}
 \\[0.3ex]
 2 \B_l f & 2 f (f - \B_l)  & \tfrac{\B_l f (24 f - 15 \B_l - 2)}{8}
  & - \tfrac{\left(\!\!\scriptsize\begin{array}{l}- 12 ff_1 r^2 \omega^2 - \B_l f (6 f^2 \\- 2 \B_l f - 2 f - \B_l^2 + 2\B_l)\end{array}\!\!\right)}{6}
 \\[0.3ex]
 0 & - \tfrac{(\B_l - 2) f}{2} & - \tfrac{(\B_l - 2) \B_l f}{2}
  & - \tfrac{(\B_l - 2) \B_l f (3 f + 2 \B_l - 1) - 36 ff_1 r^2 \omega^2}{12}
 \end{bmatrix}\! ,
 \\
 \K \begin{bmatrix}[c@{~}c]
 0 & 0 \\ 0 & 0 \\ 0 & 0 \\ 0 & 0 \\ 1 & 0 \\ 0 & 1 \\ 0 & 0
 \end{bmatrix}
= \begin{bmatrix}
 \tfrac{{\rm i} (15 \B_l + 2) r\omega}{8} &  \tfrac{{\rm i} (3 f - 98
 \B_l^2 - 20 \B_l - 3) r\omega}{18 \B_l}
 \\[0.3ex]
{\rm i} \B_l r\omega & -  \tfrac{{\rm i} (15 f + 19 \B_l - 8) r\omega}{18}
 \\[0.3ex]
 \tfrac{(8 f - 15 \B_l - 2) f_1}{32 f} & \tfrac{- 6 f_1 r^2 \omega^2 -
 \B_l (15 f - 49 \B_l - 10) f_1}{36 \B_l f}
 \\[0.3ex]
 \tfrac{f (24 f^2 - 13 \B_l f + 2 f - 15 \B_l - 2)}{32} & \tfrac{- 6
 ff_1 r^2 \omega^2 - \B_l f (45 f^2 - 109 \B_l f - 25 f - 49 \B_l - 10
)}{36 \B_l}
\\[0.3ex]
 - \tfrac{f (8 f - 7 \B_l - 2)}{8} & \tfrac{f (30 f - 79 \B_l - 25)}{18}
 \\[0.3ex]
 - \tfrac{\B_l f (24 f - 15 \B_l - 2)}{8} & \tfrac{2 \B_l f (34
 \B_l f - 5 f - 49 \B_l^2 - 10 \B_l) - 3 ff_1 r^2 \omega^2}{18 \B_l}
 \\[0.3ex]
 \tfrac{(\B_l - 2) \B_l f}{2} & - \tfrac{(\B_l - 2) (19 \B_l - 5) f}{36}
 \end{bmatrix}\! ,
 \\
 \K \begin{bmatrix} 0 \\ 0 \\ 0 \\ 0 \\ 0 \\ 0 \\ 1 \end{bmatrix}
= \begin{bmatrix}
  \tfrac{{\rm i} (16 f^2 - 471 \B_l^2 - 84 \B_l + 20) r\omega}{128}
  \\[0.3ex]
 - \tfrac{{\rm i} \B_l (2 f + 15 \B_l + 4) r\omega}{8}
 \\[0.3ex]
 -\tfrac{- 64 f_1 r^2 \omega^2 - (32 f^2 + 16 \B_l f + 32 f + 471 \B_l^2 + 84 \B_l - 4) f_1}{512 f}
 \\[0.3ex]
 -\tfrac{- 64 ff_1 r^2 \omega^2 - f (96 f^3 + 240 \B_l f^2 + 128 f^2 + 453 \B_l^2 f
 + 76 \B_l f + 20 f + 471 \B_l^2 + 84 \B_l - 4)}{512}
 \\[0.3ex]
 -\tfrac{f (32 f^2 + 64 \B_l f + 32 f + 231 \B_l^2 + 36 \B_l - 4)}{128}
 \\[0.3ex]
 -\tfrac{\B_l f (128 f^2 - 464 \B_l f - 64 f + 471 \B_l^2 + 84 \B_l - 4)}{128}
 \\[0.3ex]
 \tfrac{3 (\B_l - 2) \B_l f (f - 5 \B_l - 1)}{16}
 \end{bmatrix}\! .
\end{gather}
Note that the 4th
column of $\K$ is proportional to $1/\alpha$, which is reflected in the
notation in a~special way.

It is easy to see (comparing the
coefficients of highest order terms of the relevant identities) that the
homotopy corrections $h_{2e}$ and $\tilde{h}_{2e}$ depend only on
$k_{2e}^{(1)}$ and $\bar{k}_{2e}^{(1)}$ coefficients. More explicitly,
plugging in their dependence on $\K$, we obtain
\begin{equation}
 h_{2e} = \C^{-1} \K \tilde{\C} \Sigma_{2e} \K^* \C^{-1},
 \qquad \text{and} \qquad
 \tilde{h}_{2e} = \Sigma_{2e} \K^* \C^{-1} \K.
\end{equation}
Since $\tilde{\C} \Sigma_{2e} = \Sigma_{2e} \tilde{\C}$, while $\C^* =
\C$, $\tilde{\C}^* = \tilde{\C}$, $\Sigma_{2e}^* = \Sigma_{2e}$ and
$\Sigma_{2e}^2 = \id$, the operators $h_{2e}$ and $\tilde{h}_{2e}$
manifestly satisfy the expected self-adjointness
properties~\eqref{eq:h2e-adj}.

The lower order terms $\bar{k}_{2e}^{(0)}$ and
$\bar{k}_{2e}^{\prime(0)}$ are also quite lengthy, so we present them
column by column:%
 \footnote{These formulas are copied from the output of computer algebra.}
\begin{gather}
 \bar{k}_{2e}^{(0)}
\begin{bmatrix}[c@{~\,}c@{~\,}c]
 1 & 0 & 0 \\ 0 & 1 & 0 \\ 0 & 0 & 1 \\ 0 & 0 & 0 \\ 0 & 0 & 0 \\ 0 & 0 & 0 \\ 0 & 0 & 0
 \end{bmatrix} \!=\! \frac{1}{({\rm i}\omega r)^3}\!
\begin{bmatrix}[c@{\,}c@{\,}c]
 \tfrac{{\rm i} (f + 1) r\omega}{f} & - \tfrac{{\rm i} \B_l r\omega}{f}
 & - \tfrac{{\rm i} (8 f^2 - \B_l f - 6 f + 15 \B_l + 2) r\omega}{16 f}
 \\[0.5ex]
 - 2{\rm i} r\omega & 2{\rm i} \B_l r\omega & \tfrac{{\rm i} (15 \B_l + 2) r\omega}{8}
 \\[0.5ex]
 ff_1 - 2 r^2 \omega^2 & 2 \B_l r^2 \omega^2
 & \tfrac{2 (8 f + 15 \B_l + 2) r^2 \omega^2 + (\B_l + 6) ff_1}{16}
 \\[0.5ex]
 - \tfrac{\left(\!\!\scriptsize\begin{array}{l}2 r^2 \omega^2 \\- f (3 f + 2 \B_l + 1)\end{array}\!\!\right)}{f^2} & \tfrac{2 \B_l r^2 \omega^2 - 2 \B_l^2 f}{f^2}
 & - \tfrac{\left(\!\!\scriptsize\begin{array}{l}2(8f - 15\B_l - 2) r^2 \omega^2\\ - f (19 \B_l f\!+\! 18 f \!-\! 30 \B_l^2\! -\! 3 \B_l\! +\! 6)\end{array}\!\!\right)}{16 f^2}
 \\[0.5ex]
 - 2 f - \B_l & \B_l^2 & - \tfrac{10 \B_l f + 12 f - 15 \B_l^2 - 2 \B_l}{16}
 \\[0.5ex]
 - 4 & \tfrac{r^2 \omega^2 + 2 \B_l f}{f} & \tfrac{4 r^2 \omega^2 - f (4 f - 7 \B_l + 2)}{4 f}
 \\[0.5ex]
 2 & - 2 \B_l & \tfrac{8 f - 15 \B_l - 2}{8}
\end{bmatrix}\!,
\\[1ex]
 \bar{k}_{2e}^{\prime(0)}
 \begin{bmatrix}[c@{~\,}c@{~\,}c]
 1 & 0 & 0 \\ 0 & 1 & 0 \\ 0 & 0 & 1 \\ 0 & 0 & 0 \\ 0 & 0 & 0 \\ 0 & 0 & 0 \\ 0 & 0 & 0
 \end{bmatrix}
 \!=\! \frac{1}{{\rm i}\omega r}\!
\begin{bmatrix}[c@{\!\!}c@{\!\!\!}c]
 \tfrac{{\rm i} (f - 3) r\omega}{f}
 & - \tfrac{{\rm i} (2 f - 3) r\omega}{f}
 & \tfrac{{\rm i} (8 f^2 - 31 \B_l f - 10 f + 45 \B_l + 6) r\omega}{16 f}
 \\[0.5ex]
 \tfrac{2{\rm i} \B_l r\omega}{f}
 & - \tfrac{2{\rm i} \B_l r\omega}{f}
 & - \tfrac{{\rm i} \B_l (15 \B_l + 2) r\omega}{8 f}
 \\[0.5ex]
 - \tfrac{2 r^2 \omega^2 - (4 f + 1) f_1}{2 f^2}
 & \tfrac{2 r^2 \omega^2 - (3 f + 1) f_1}{2 f^2}
 & \tfrac{\left(\!\!\!\scriptsize\begin{array}{l}2 (8 f + 15 \B_l + 2) r^2 \omega^2 \\+ (40 f^2 - 44 \B_l f - 8 f
 \\
 - 15 \B_l - 2) f_1\!\!\!\end{array}\right)}{32 f^2}
 \\[0.5ex]
 - \tfrac{\left(\!\!\!\scriptsize\begin{array}{l} 2 r^2 \omega^2 - 12 f^2 \\- 2 \B_l f + f - 1\!\!\!\end{array}\right)}{2}
 & \tfrac{2 r^2 \omega^2 - 3 f^2 - 2 \B_l f - 1}{2}
 & - \tfrac{\left(\!\!\!\scriptsize\begin{array}{l}
 2 (8 f - 15 \B_l - 2) r^2 \omega^2 - 120 f^3
 \\
 + 20 \B_l f^2 + 48 f^2 + 30 \B_l^2 f
 \\
 + 5 \B_l f - 2 f + 15 \B_l + 2\end{array}\!\!\!\right)}{32}
 \\[0.5ex]
 - 8 f - \B_l + 2
 & 3 f + \B_l - 1
 & - \tfrac{80 f^2 - 32 \B_l f - 48 f - 15 \B_l^2 + 12 \B_l + 4}{16}
 \\[0.5ex]
 - 8 \B_l f
 & r^2 \omega^2\! - \!2 f (4 f \!-\! 3 \B_l\! - \!2)
 & \tfrac{2 \B_l r^2 \omega^2 - \B_l f (26 f - 11 \B_l - 12)}{2}
 \\[0.5ex]
 \tfrac{(\B_l - 2) \B_l}{2}
 & \tfrac{(\B_l - 2) (3 f - \B_l - 1)}{2}
 & \tfrac{(\B_l - 2) \B_l (56 f - 15 \B_l - 18)}{32}
\end{bmatrix}\!,
\\[1ex]
 \bar{k}_{2e}^{(0)}
 \begin{bmatrix} 0 \\ 0 \\ 0 \\ 1 \\ 0 \\ 0 \\ 0 \end{bmatrix}
 = \frac{1}{({\rm i}\omega r)^3}
\begin{bmatrix}
 \tfrac{2{\rm i} \B_l^2 r\omega (- 9 f^3 + 24 f^2 - 9 f
 - 2) + 36{\rm i} \B_l f f_1^2 r\omega -
 2{\rm i} \B_l^3 (3 f^2 + 3 f - 2) r\omega +{\rm i} \B_l^4 (3 f
 - 1) r\omega}{12 f}\\[0.5ex]
 \tfrac{- 12{\rm i} \B_l f f_1 r\omega - 2{\rm i} \B_l^2 (3 f^2 - 3 f - 2) r\omega +{\rm i} \B_l^4
 r\omega - 4{\rm i} \B_l^3 r\omega}{6}\\[0.5ex]
 \tfrac{12 \B_l^2 f f_1^2 + 6 \B_l^3 f f_1 (f-2) + 3 \B_l^4 f f_1 }{12}\\[0.5ex]
 \tfrac{12 \B_l^2 f f_1 (1+f) - 2 \B_l^3 (- 3 f^3 - 3 f^2 + 10 f) + \B_l^4
 (- 3 f^2 + 11 f) - 2 \B_l^5 f}{12 f^2}\\[0.5ex]
 \tfrac{- 12 \B_l^2 f_1 f - 2
 \B_l^3 (3 f^2 - 3 f - 2) + \B_l^5 - 4 \B_l^4}{12}\\[0.5ex]
 \tfrac{ 6 \B_l f^2 f_1 - \B_l^2 (- 3 f^3 + 3 f^2 + 4 f) +
 4 f \B_l^3 - \B_l^4 f}{3 f}\\[0.5ex]
 \tfrac{2 \B_l^2 (- 3 f^2 + 9 f - 2) - 12 \B_l f f_1 - 2 \B_l^3
 (3 f - 2) - \B_l^4}{6}
\end{bmatrix}
\\ \hphantom{ \bar{k}_{2e}^{(0)}
 \begin{bmatrix} 0 \\ 0 \\ 0 \\ 1 \\ 0 \\ 0 \\ 0 \end{bmatrix} =}
{} - \frac{1}{{\rm i}\omega r}
\begin{bmatrix}
 \tfrac{2{\rm i} \B_l^2 r\omega (6 f - 6) + 24{\rm i} \B_l f_1 r\omega}{12 f}\\[0.5ex]
 \tfrac{24{\rm i} f_1 r \omega - 12{\rm i} \B_l f_1 r\omega}{6}\\[0.5ex]
 \tfrac{- 2 \B_l^2 (3 f^2 - 6 f - 1) - 8 \B_l^3
 + 2 \B_l^4 + 12 \B_l f_1^2}{12}\\[0.5ex]
 \tfrac{- 2 \B_l^2 (15 f^2 - 18 f - 1)
 - 8 \B_l^3 + 2 \B_l^4 - 12 \B_l (5 f - 1) f_1 }{12 f^2}\\[0.5ex]
 \tfrac{- 12 \B_l^2 f_1 + 24 \B_l f_1 }{12}\\[0.5ex]
 \tfrac{- 2 \B_l (3 f^2 - 6 f + 1) - \B_l^2 (3 f + 1) + \B_l^3 - 12 ff_1 }{3 f}\\[0.5ex]
 \tfrac{12 \B_l^2 - 24 \B_l f + 24 (3 f - 2) f_1 }{6}
\end{bmatrix}\!,
\end{gather}
\begin{gather}
 \bar{k}_{2e}^{\prime(0)}
 \begin{bmatrix} 0 \\ 0 \\ 0 \\ 1 \\ 0 \\ 0 \\ 0 \end{bmatrix}
 = \frac{1}{\alpha} \frac{1}{{\rm i}\omega r}
\begin{bmatrix}
 - \tfrac{{\rm i} (36 f^3 - 6 \B_l f^2 - 42 f^2 + 5 \B_l^2 f - 4 \B_l f - 3
 \B_l^2 + 6 \B_l + 6) r\omega - 12{\rm i} f_1 r^3 \omega^3}{12 f}
 \\[1.2ex]
 - \tfrac{{\rm i} \B_l (6 f^2 - 6 f + \B_l^2 - 2 \B_l) r\omega - 12
{\rm i} f_1 r^3 \omega^3}{6 f}
\\[1.2ex]
 - \tfrac{2 (9 f^2 - 18 f - \B_l^2 + 2 \B_l + 9) r^2 \omega^2 + \B_l
 (24 f^2 - 6 f + \B_l - 2) f_1}{24 f^2}
 \\[1.2ex]
 - \tfrac{2 (21 f^2 - 30 f - \B_l^2 + 2 \B_l + 9) r^2 \omega^2 + \B_l
 (24 f^3 - 12 \B_l f^2 - 6 f^2 + 2 \B_l^2 f - \B_l f + \B_l - 2)}{24}\\[1.2ex]
 \tfrac{\B_l (24 f^2 - 6 \B_l f - 18 f + \B_l^2 + 2) - 12 f_1 r^2
 \omega^2}{12}
 \\[1.2ex]
 - \tfrac{(6 f^2 + 3 \B_l f - 6 f - \B_l^2 - \B_l) r^2 \omega^2 + \B_l f
 (15 f^2 - 4 \B_l f - 13 f + 2 \B_l + 2)}{3}
 \\[1.2ex]
 \tfrac{\left(\!\!\!\scriptsize\begin{array}{l}12 (3 f^2 - 2 \B_l f - 8 f + \B_l^2 + 5) r^2 \omega^2 + 24 \B_l^2
 f^2 - 48 \B_l f^2 \\+ 6 \B_l^3 f - 30 \B_l^2 f + 36 \B_l f - 4 \B_l^3 + 10 \B_l^2 -
 4 \B_l - \alpha\!\!\!\end{array}\right)}{24}
\end{bmatrix}\!,
\\[3ex]
 \bar{k}_{2e}^{(0)}
 \begin{bmatrix} 0 \\ 0 \\ 0 \\ 0 \\ 1 \\ 0 \\ 0 \end{bmatrix}
 = \tfrac{1}{({\rm i}\omega r)^3}
\begin{bmatrix}
 \tfrac{{\rm i} (8 f^2 - \B_l f - 6 f + 15 \B_l + 2) r\omega}{16 f}\\[1.2ex]
 - \tfrac{{\rm i} (15 \B_l + 2) r\omega}{8}\\[1.2ex]
 - \tfrac{2 (15 \B_l + 2) r^2 \omega^2 + (\B_l + 6)
 ff_1}{16}\\[1.2ex]
 - \tfrac{2 (15 \B_l + 2) r^2 \omega^2 + f (19 \B_l f + 18 f - 30
 \B_l^2 - 3 \B_l + 6)}{16 f^2}\\[1.2ex]
 - \tfrac{16 r^2 \omega^2 - 10 \B_l f - 12 f + 15 \B_l^2 + 2 \B_l}{16}\\[1.2ex]
 - \tfrac{4 r^2 \omega^2 - f (4 f - 7 \B_l + 2)}{4 f}\\[1.2ex]
 - \tfrac{8 f - 15 \B_l - 2}{8}
\end{bmatrix}\!,
\\[3ex]
 \bar{k}_{2e}^{\prime(0)}
 \begin{bmatrix} 0 \\ 0 \\ 0 \\ 0 \\ 1 \\ 0 \\ 0 \end{bmatrix}
 = \frac{1}{{\rm i}\omega r}
\begin{bmatrix}
 - \tfrac{{\rm i} (8 f^2 - 31 \B_l f - 10 f + 45 \B_l + 6) r\omega}{16 f}
 \\[1.2ex]
 \tfrac{{\rm i} \B_l (15 \B_l + 2) r\omega}{8 f}\\[1.2ex]
 - \tfrac{2 (15 \B_l + 2) r^2 \omega^2 + (40 f^2 - 44 \B_l f - 8 f
 - 15 \B_l - 2) f_1}{32 f^2}\\[1.2ex]
 - \tfrac{2 (15 \B_l + 2) r^2 \omega^2 + 120 f^3 - 20 \B_l f^2 - 48 f^2
 - 30 \B_l^2 f - 5 \B_l f + 2 f - 15 \B_l - 2}{32}\\[1.2ex]
 - \tfrac{16 r^2 \omega^2 - 80 f^2 + 32 \B_l f + 48 f + 15 \B_l^2 - 12 \B_l - 4}{16}\\[1.2ex]
 - \tfrac{2 \B_l r^2 \omega^2 - \B_l f (26 f - 11 \B_l - 12)}{2}\\[1.2ex]
 - \tfrac{(\B_l - 2) \B_l (56 f - 15 \B_l - 18)}{32}
\end{bmatrix}\!,
\\[3ex]
 \bar{k}_{2e}^{(0)}
 \begin{bmatrix} 0 \\ 0 \\ 0 \\ 0 \\ 0 \\ 1 \\ 0 \end{bmatrix}
 = \frac{1}{({\rm i}\omega r)^3}
\begin{bmatrix}
 - \tfrac{{\rm i} \B_l (30 \B_l f + 3 f + 49 \B_l + 10) r\omega - 6
{\rm i} f_1 r^3 \omega^3}{18 f}\\[1.2ex]
 - \tfrac{- 3{\rm i} f_1 r^3 \omega^3 - 2{\rm i} \B_l^2 (49 \B_l + 10)
 r\omega}{18 \B_l}\\[1.2ex]
 \tfrac{(3 f^2 + 18 \B_l f - 6 f + 98 \B_l^2 + 20 \B_l + 3) r^2 \omega^2
 - 30 \B_l^2 ff_1}{18}\\[1.2ex]
 - \tfrac{(3 f^2 + 18 \B_l f - 98 \B_l^2 - 20 \B_l - 3) r^2 \omega^2 + 2
 \B_l^2 f (30 f + 49 \B_l + 25)}{18 f^2}\\[1.2ex]
 \tfrac{3 (f + 6 \B_l - 1) r^2 \omega^2 + \B_l^2 (45 f + 49 \B_l +
 10)}{18}\\[1.2ex]
 - \tfrac{(6 f^2 + 3 \B_l f - 6 f - 19 \B_l^2 + 8 \B_l) r^2 \omega^2 + 2
 \B_l^2 f (15 f - 79 \B_l - 10)}{18 \B_l f}\\[1.2ex]
 \tfrac{3 (f + 2 \B_l - 1) r^2 \omega^2 + \B_l^2 (15 f - 49 \B_l -
 10)}{9 \B_l}
\end{bmatrix}\!,
\end{gather}
\begin{gather}
 \bar{k}_{2e}^{\prime(0)}
 \begin{bmatrix} 0 \\ 0 \\ 0 \\ 0 \\ 0 \\ 1 \\ 0 \end{bmatrix}
 = \frac{1}{{\rm i}\omega r}
\begin{bmatrix}
 \tfrac{- 6{\rm i} f_1 r^3 \omega^3 -{\rm i} (3 f^2 + 68 \B_l^2 f + 17 \B_l f -
 147 \B_l^2 - 30 \B_l - 3) r\omega}{18 \B_l f}
 \\[0.9ex]
 \tfrac{2{\rm i} \B_l (15 f^2 - 15 f - 49 \B_l^2 - 10 \B_l) r\omega - 3
{\rm i} f_1 r^3 \omega^3}{18 \B_l f}
\\[0.9ex]
 \tfrac{(9 f^2 + 18 \B_l f - 18 f + 98 \B_l^2 + 20 \B_l + 9) r^2 \omega^2
 + \B_l (75 f^2 - 177 \B_l f - 45 f - 49 \B_l - 10) f_1}{36 \B_l f^2}
 \\[0.9ex]
 \tfrac{\left(\!\!\!\scriptsize\begin{array}{l}(3 f^2 - 18 \B_l f - 12 f + 98 \B_l^2 + 20 \B_l + 9) r^2 \omega^2 \\
 + \B_l (225 f^3 - 387 \B_l f^2 - 210 f^2 - 98 \B_l^2 f + 10 \B_l f + 15 f
 - 49 \B_l - 10)\end{array}\!\!\!\right)}{36 \B_l}
 \\ [0.9ex]
 \tfrac{3 (f + 6 \B_l - 1) r^2 \omega^2 - \B_l (150 f^2 - 282 \B_l
 f - 165 f - 49 \B_l^2 + 69 \B_l + 25)}{18 \B_l}
 \\[0.9ex]
 - \tfrac{(6 f^2 + 3 \B_l f - 6 f - 19 \B_l^2 + 8 \B_l) r^2 \omega^2 + 2
 \B_l f (151 \B_l f - 20 f - 177 \B_l^2 - 98 \B_l + 10)}{18 \B_l}
 \\[0.9ex]
 \tfrac{\left(\!\!\!\scriptsize\begin{array}{l}3 (588 f^2 + \B_l f - 1178 f + 2 \B_l^2 - 5 \B_l + 590) r^2 \omega^2 + 72 \B_l^3 f \\ - 159 \B_l^2 f + 30 \B_l f - 127 \B_l^3 + 259 \B_l^2 - 10 \B_l -
 49 \alpha\end{array}\!\!\!\right)}{36 \B_l}
\end{bmatrix}\!,
\\[3ex]
 \bar{k}_{2e}^{(0)}
 \begin{bmatrix} 0 \\0 \\ 0 \\ 0 \\ 0 \\ 0 \\ 1 \end{bmatrix}
 = \frac{1}{({\rm i}\omega r)^3}
\begin{bmatrix}
 \tfrac{{\rm i} (80 \B_l f^2 + 64 f^2 + 9 \B_l^2 f + 28 \B_l f + 4 f - 471
 \B_l^2 - 84 \B_l + 4) r\omega - 64{\rm i} f_1 r^3 \omega^3}{256 f}
 \\[0.9ex]
 - \tfrac{{\rm i} (64 f^2 - 32 f - 471 \B_l^2 - 84 \B_l + 4) r\omega}{128}
 \\[0.9ex]
 \tfrac{2 (16 f^2 + 8 \B_l f + 16 f + 471 \B_l^2 + 84 \B_l - 20) r^2
 \omega^2 + f (32 f^2 - 96 \B_l f - 64 f + 9 \B_l^2 - 4 \B_l - 28) f_1}{256}
 \\[0.9ex]
 \tfrac{\left(\!\!\!\scriptsize\begin{array}{l}2 (80 f^2 - 8 \B_l f - 16 f + 471 \B_l^2 + 84 \B_l - 20) r^2 \omega^2 + f (96 f^3 - 160 \B_l f^2 \\- 160 f^2 - 5 \B_l^2 f - 364 \B_l f - 148
 f - 942 \B_l^3 - 159 \B_l^2 + 4 \B_l - 28)\end{array}\!\!\!\right)}{256 f^2}
 \\[0.9ex]
 - \tfrac{16 (4 f - \B_l - 2) r^2 \omega^2 + 64 f^3 - 128 \B_l f^2 - 128
 f^2 + 2 \B_l^2 f - 136 \B_l f - 56 f - 471 \B_l^3 - 84 \B_l^2 + 4 \B_l}{256}
 \\[0.9ex]
 - \tfrac{8 (6 f - 15 \B_l - 4) r^2 \omega^2 + f (48 f^2 - 56 \B_l
 f - 96 f - 231 \B_l^2 - 44 \B_l - 12)}{64 f}
 \\[0.9ex]
 \tfrac{64 r^2 \omega^2 + 64 f^2 - 16 \B_l f - 128 f - 471 \B_l^2 - 84 \B_l + 4}{128}
\end{bmatrix}\!,
\\[3ex]
 \bar{k}_{2e}^{\prime(0)}
 \begin{bmatrix} 0 \\ 0 \\ 0 \\ 0 \\ 0 \\ 0 \\ 1 \end{bmatrix}
 = \frac{1}{{\rm i}\omega r}
\begin{bmatrix}
 - \tfrac{{\rm i} (96 f^3 + 80 \B_l f^2 - 32 f^2 + 951 \B_l^2 f + 196 \B_l f
 - 36 f - 1413 \B_l^2 - 252 \B_l + 44) r\omega }{256 f}
 \\[0.9ex]
 \tfrac{{\rm i} \B_l (128 f^2 - 96 f - 471 \B_l^2 - 84 \B_l + 4) r\omega}{128 f}
 \\[0.9ex]
 - \tfrac{(192 f^3 + 176 \B_l f^2 + 128 f^2 + 1404 \B_l^2 f + 240 \B_l f
 - 16 f + 471 \B_l^2 + 84 \B_l - 4) f_1}{512 f^2}\\[0.9ex]
 \tfrac{\left(\!\!\!\scriptsize\begin{array}{l} - 576 f^4 - 1360 \B_l f^3 - 320 f^3 - 1364 \B_l^2 f^2 + 128 \B_l f^2 + 176 f^2 \\- 942 \B_l^3 f - 177 \B_l^2 f - 4 \B_l f - 4 f - 471 \B_l^2 - 84 \B_l +
 4\end{array}\!\!\!\right)}{512}
 \\[0.9ex]
 - \tfrac{- 384 f^3 - 768 \B_l f^2 -
 128 f^2 - 1384 \B_l^2 f + 32 \B_l f + 160 f - 471 \B_l^3 + 378 \B_l^2 + 76 \B_l -
 8}{256}
 \\[0.9ex]
 - \tfrac{ - \B_l f (168
 f^2 - 436 \B_l f - 144 f + 351 \B_l^2 + 296 \B_l + 36)}{32}\\[0.9ex]
 \tfrac{- 416 \B_l^2 f^2 + 832 \B_l f^2 + 1424 \B_l^3 f - 2400 \B_l^2 f - 896 \B_l f -
 1506 \B_l^3 + 2920 \B_l^2 + 184 \B_l - 471 \alpha}{512}
\end{bmatrix}
\\[3ex] \hphantom{ \bar{k}_{2e}^{\prime(0)}
 \begin{bmatrix} 0 \\ 0 \\ 0 \\ 0 \\ 0 \\ 0 \\ 1 \end{bmatrix} =}
 - {\rm i}\omega r
\begin{bmatrix}
 \tfrac{ 64{\rm i} f_1 r \omega}{256 f}\\[0.9ex]
 0\\[0.9ex]
 - \tfrac{2 (16 f^2 - 8 \B_l f - 80 f - 471 \B_l^2 - 84 \B_l + 52) }{512 f^2}\\[0.9ex]
 \tfrac{2 (48 f^2 - 8 \B_l f + 48 f + 471 \B_l^2 + 84 \B_l - 52) }{512}\\[0.9ex]
 - \tfrac{16 (4 f - \B_l - 2) }{256}\\[0.9ex]
 - \tfrac{4 \B_l (6 f - 15 \B_l - 4) }{32}\\[0.9ex]
 \tfrac{4 (4239 f^2 - 8478 f + 16 \B_l^2 - 32 \B_l + 4239) }{512}
\end{bmatrix}\!.
\end{gather}

\subsection*{Acknowledgments}
Research of the author was partially supported by the Praemium Academiae
of M.~Markl, GA\v{C}R project GA18-07776S and RVO:~67985840. Early stages
of this work were completed while the author was affiliated with the
University of Rome II (Tor Vergata) and with the University of Milan
(Statale). Thanks to Claudio Dappiaggi, Felix Finster, Christian
G\'erard, Dietrich H\"afner, Peter Hintz, Dmitry Jakobson, Niky Kamran,
Vesselin Petkov, Iosif Polterovich, Ko Sanders, Werner Seiler and Artur
Sergyeyev for interesting discussions. Thanks to Stefanos Aretakis and
Mihalis Dafermos for their hospitality during a visit to Princeton
University, where an early version of this work was first presented in
the Spring of 2016. The author also thanks Francesco Bussola for
checking the mode decompositions~\eqref{eq:radial-ev}
and~\eqref{eq:radial-od} as part of his MSc thesis~\cite{bussola}.
And final thanks go to Roman O.~Popovych and to anonymous referees,
whose comments greatly improved the presentation of the manuscript.

\pdfbookmark[1]{References}{ref}
\LastPageEnding

\end{document}